\newif\ifextended
\title{Design and Analysis of a Logless Dynamic Reconfiguration Protocol}
\titlerunning{Logless Dynamic Reconfiguration} 
\author{William Schultz}{Northeastern University, USA}{schultz.w@northeastern.edu}{}{}
\author{Siyuan Zhou}{MongoDB, USA}{siyuan.zhou@mongodb.com}{}{}
\author{Ian Dardik}{Northeastern University, USA}{dardik.i@northeastern.edu}{}{}
\author{Stavros Tripakis}{Northeastern University, USA}{s.tripakis@northeastern.edu}{}{}
\authorrunning{W. Schultz, S.Zhou, et al.} 
\keywords{Fault Tolerance, Dynamic Reconfiguration, State Machine Replication} 
\begin{document}






\maketitle

\begin{abstract}

    Distributed replication systems based on the replicated state machine model have become ubiquitous as the foundation of modern database systems. To ensure availability in the presence of faults, these systems must be able to dynamically replace failed nodes with healthy ones via \emph{dynamic reconfiguration}. 
    MongoDB is a document oriented database with a distributed replication mechanism derived from the Raft protocol. 
    In this paper, we present \textit{MongoRaftReconfig}, a novel dynamic reconfiguration protocol for the MongoDB replication system.
    \textit{MongoRaftReconfig} utilizes a logless approach to managing configuration state and decouples the processing of configuration changes from the main database operation log. 
    The protocol's design was influenced by engineering constraints faced when attempting to redesign an unsafe, legacy reconfiguration mechanism that existed previously in MongoDB.
    We provide a safety proof of \textit{MongoRaftReconfig}, along with a formal specification in TLA+. To our knowledge, this is the first published safety proof and formal specification of a reconfiguration protocol for a Raft-based system. We also present results from model checking the safety properties of \textit{MongoRaftReconfig} on finite protocol instances. 
    %
    %
    Finally, we discuss the conceptual novelties of \textit{MongoRaftReconfig}, how it can be understood as an optimized and generalized version of the single server reconfiguration algorithm of Raft, and present an experimental evaluation of how its optimizations can provide performance benefits for reconfigurations.
    



\end{abstract}

\newcommand{\divider}{\noindent\rule{12cm}{0.4pt}\\}

\section{Introduction}
\label{sec:intro}
Distributed replication systems based on the replicated state machine model \cite{Schneider1990} have become ubiquitous as the foundation of modern, fault-tolerant data storage systems. In order for these systems to ensure availability in the presence of faults, they must be able to dynamically replace failed nodes with healthy ones, a process known as \emph{dynamic reconfiguration}. The protocols for building distributed replication systems have been well studied and implemented in a variety of systems \cite{paxosmadelive,Corbett2012,Huang2020,cockroach}. Paxos \cite{Lamport1998} and, more recently, Raft \cite{raftpaper}, have served as the logical basis for building provably correct distributed replication systems.  Dynamic reconfiguration, however, is an additionally challenging and subtle problem \cite{aguilera2010} that has not been explored as extensively as the foundational consensus protocols underlying these systems. Variants of Paxos have examined the problem of dynamic reconfiguration but these reconfiguration techniques may require changes to a running system that impact availability \cite{malkhi2008stoppable} or require the use of an external configuration master \cite{vertical-paxos}. The Raft consensus protocol, originally published in 2014, provided a dynamic reconfiguration algorithm in its initial publication, but did not include a precise discussion of its correctness or include a formal specification or proof. A critical safety bug \cite{ongaro2015-membership-bug} in one of its reconfiguration protocols was found after initial publication, demonstrating that the design and verification of reconfiguration protocols for these systems is a challenging task.


MongoDB \cite{mongodb-repo} is a general purpose, document oriented database which implements a distributed replication system \cite{schultz2019tunable} for providing high availability and fault tolerance. MongoDB's replication system uses a novel consensus protocol that derives from Raft \cite{zhou2021fault}. Since its inception, the MongoDB replication system has provided a custom, legacy protocol for dynamic reconfiguration of replica members that was not based on a published algorithm.
%
This legacy protocol managed configurations in a \textit{logless} fashion i.e. each server only stored its latest configuration. In addition, it decoupled reconfiguration processing from the main database operation log.
These features made for a simple and appealing protocol design, and it was sufficient to provide basic reconfiguration functionality to clients. The legacy protocol, however, was known to be unsafe in certain cases.
In recent versions of MongoDB, reconfiguration has become a more common operation, necessitating the need for a redesigned, safe reconfiguration protocol with rigorous safety guarantees. 
From a system engineering perspective, a primary goal was to keep design and implementation complexity low. Thus, it was desirable that the new reconfiguration protocol minimize changes to the legacy protocol to the extent possible. 
%
%
%
%
In this paper, we present \textit{MongoRaftReconfig}, a novel dynamic reconfiguration protocol that achieves the above design goals. 

\textit{MongoRaftReconfig} provides safe, dynamic reconfiguration, 
utilizes a logless approach to managing configuration state, and
decouples reconfiguration processing from the main database operation log. Thus, it bears a high degree of architectural and conceptual similarity to the legacy MongoDB protocol, satisfying our original design goal of minimizing changes to the legacy protocol. 
We provide rigorous safety guarantees of \textit{MongoRaftReconfig}, including a proof of the protocol's main safety properties along with a formal specification in TLA+ \cite{Merz2008}, a specification language for describing distributed and concurrent systems. To our knowledge, this is the first published safety proof and formal specification of a reconfiguration protocol for a Raft-based system. We also verified the safety properties of finite instances of \textit{MongoRaftReconfig} using the TLC model checker \cite{yu1999model}, which provides additional confidence in its correctness. Finally, we discuss the conceptual novelties of \textit{MongoRaftReconfig}, related to its logless design and decoupling of reconfiguration processing. In particular, we discuss how it can be understood as an optimized and generalized variant of the single server Raft reconfiguration protocol. We also include a preliminary experimental evaluation of how these optimizations can provide performance benefits over standard Raft, by allowing reconfigurations to bypass the main operation log.

To summarize, in this paper we make the following contributions:
\begin{itemize}
    \item We present \textit{MongoRaftReconfig}, a novel, logless dynamic reconfiguration protocol for the MongoDB replication system.

    
    \item We present a proof of \textit{MongoRaftReconfig}'s key safety properties. To our knowledge, this is the first published safety proof of a reconfiguration protocol for a Raft-based system.
    
    \item We present a formal specification of \textit{MongoRaftReconfig} in TLA+. To our knowledge, this is the first published formal specification of a reconfiguration protocol for a Raft-based system.

    \item We present results of model checking the safety properties of \textit{MongoRaftReconfig} on finite protocol instances using the TLC model checker. 

    

    \item We discuss the conceptual novelties of \textit{MongoRaftReconfig}, and how it can be understood as an optimized and generalized variant of the single server Raft reconfiguration protocol.
    
    
    
    
    \item We provide a preliminary experimental evaluation of \textit{MongoRaftReconfig}'s performance benefits, demonstrating how it improves upon reconfiguration in standard Raft.

\end{itemize}

\section{Background}

\subsection{System Model}

Throughout this paper, we consider a set of \textit{server} processes $Server=\{s_1,s_2,...,s_n\}$ that communicate by sending messages. We assume an asynchronous network model in which messages can be arbitrarily dropped or delayed. We assume servers can fail by stopping but do not act maliciously i.e. we assume a ``fail-stop" model with no Byzantine failures. We define both a \textit{member set} and a \textit{quorum} as elements of $2^{Server}$. Member sets and quorums have the same type but refer to different conceptual entities.
For any member set $m$, and any two non-empty member sets $m_i,m_j$, we define the following:
\begin{align}
   &Quorums(m) \triangleq \{s \in 2^{m} : |s| \cdot 2 > |m|\} \\
   \label{def:quorums-overlap}
   &QuorumsOverlap(m_i,m_j) \triangleq
     \forall q_i \in Quorums(m_i), q_j \in Quorums(m_j) : q_i \cap q_j \neq \emptyset
\end{align}
where $|S|$ denotes the cardinality of a set $S$. We refer to Definition \ref{def:quorums-overlap} as the \textit{quorum overlap} condition.

\subsection{Raft}

Raft \cite{OngaroDissertation2014} is a consensus protocol for implementing a replicated log in a system of distributed servers. It has been implemented in a variety of systems across the industry \cite{raft-website}. Throughout this paper, we refer to the original Raft protocol as described and specified in \cite{OngaroDissertation2014} as \textit{standard Raft}. 

The core Raft protocol implements a replicated state machine using a static set of servers. In the protocol, time is divided into \textit{terms} of arbitrary length, where terms are numbered with consecutive integers. Each term has at most one leader, which is selected via an \textit{election} that occurs at the beginning of a term. To dynamically change the set of servers operating the protocol, Raft includes two, alternate algorithms: \textit{single server membership change} and \textit{joint consensus}. 
In this paper we are only concerned with \textit{single server membership change}. The single server change approach aims to simplify reconfiguration by allowing only reconfigurations that add or remove a single server. Reconfiguration is accomplished by writing a special reconfiguration entry into the main Raft operation log that alters the local configuration of a server.  In this paper, when referring to reconfiguration in standard Raft, we assume it to mean the single server change protocol. 

\subsection{Replication in MongoDB}

MongoDB is a general purpose, document oriented database that stores data in JSON-like objects. A MongoDB database consists of a set of collections, where a collection is a set of unique documents. To provide high availability, MongoDB provides the ability to run a database as a \textit{replica set}, which is a set of MongoDB servers that act as a consensus group, where each server maintains a logical copy of the database state. 

MongoDB replica sets utilize a replication protocol that is derived from Raft, with some extensions. We refer to MongoDB's abstract replication protocol, without dynamic reconfiguration, as \textit{MongoStaticRaft}. This protocol can be viewed as a modified version of standard Raft that satisfies the same underlying correctness properties. A more in depth description of \textit{MongoStaticRaft} is given in \cite{zhou2021fault,schultz2019tunable}, but we provide a high level overview here, since the \textit{MongoRaftReconfig} reconfiguration protocol is built on top of \textit{MongoStaticRaft}. In a replica set running \textit{MongoStaticRaft} there exists a single \textit{primary} server and a set of \textit{secondary} servers. As in standard Raft, there is a single primary elected per term. The primary server accepts client writes and inserts them into an ordered operation log known as the \textit{oplog}. The oplog is a logical log where each entry contains information about how to apply a single database operation. Each entry is assigned a monotonically increasing timestamp, and these timestamps are unique and totally ordered within a server log. These log entries are then replicated to secondaries which apply them in order leading to a consistent database state on all servers. When the primary learns that enough servers have replicated a log entry in its term, the primary will mark it as \textit{committed}, guaranteeing that the entry is permanently durable in the replica set.
Clients of the replica set can issue writes with a specified \textit{write concern} level, which indicates the durability guarantee that must be satisfied before the write can be acknowledged to the client. Providing a write concern level of \textit{majority} ensures that a write will not be acknowledged until it has been marked as committed in the replica set. A key, high level safety requirement of the replication protocol is that if a write is acknowledged as committed to a client, it should be durable in the replica set.

\section{MongoRaftReconfig: A Logless Dynamic Reconfiguration Protocol}
\label{sec:mongo-raft-reconfig-protocol}

In this section we present the \textit{MongoRaftReconfig} dynamic reconfiguration protocol. First, we provide an overview and some intuition on the protocol design in Section \ref{sec:protocol-intuition}. Section \ref{sec:protocol-behavior} provides a high level, informal description of the protocol along with a condensed pseudocode description in Algorithm \ref{alg:mrr-pseudocode}. Sections \ref{sec:reconfig-safety-restrictions} and \ref{sec:config-and-elections} provide additional detail on the mechanisms required for the protocol to operate safely, and the TLA+ formal specification of \textit{MongoRaftReconfig} is discussed briefly in Section \ref{sec:formal-spec}.

\ifextended
The complete description of \textit{MongoRaftReconfig} is left to Algorithm \ref{alg:mrr-pseudocode-full} in Appendix \ref{appendix:safety-proof}.
\else
The complete description of \textit{MongoRaftReconfig} is left to the full version of the paper \cite{schultz2021design}.
\fi
The pseudocode presented in Algorithm \ref{alg:mrr-pseudocode} describes the reconfiguration specific behaviors of \textit{MongoRaftReconfig}, which are the novel aspects of the protocol and the contributions of this paper.




\subsection{Overview and Intuition}
\label{sec:protocol-intuition}

Dynamic reconfiguration allows the set of servers operating as part of a replica set to be modified while maintaining the core safety guarantees of the replication protocol. Many consensus based replication protocols \cite{Shraer2019,malkhi2008stoppable,raftpaper} utilize the main operation log (the \textit{oplog}, in MongoDB) to manage configuration changes by writing special reconfiguration log entries. The \textit{MongoRaftReconfig} protocol instead decouples configuration updates from the main operation log by managing the configuration state of a replica set in a separate, logless replicated state machine, which we refer to as the \textit{config state machine}. The config state machine is maintained alongside the oplog, and manages the configuration state used by the overall protocol.

In order to ensure safe reconfiguration, \textit{MongoRaftReconfig} imposes specific restrictions on how reconfiguration operations are allowed to update the configuration state of the replica set. First, it imposes a \textit{quorum overlap} condition on any reconfiguration from $C$ to $C'$, which is an approach adopted from the Raft single server reconfiguration algorithm.  This ensures that all quorums of two adjacent configurations overlap with each other, and so can safely operate concurrently. In order to allow the system to pass through many configurations over time, though, \textit{MongoRaftReconfig} imposes additional restrictions which address two essential aspects required for safe dynamic reconfiguration: (1) \textit{deactivation} of old configurations and (2) \textit{state transfer} from old configurations to new configurations. Essentially, it must ensure that old configurations, which may not overlap with newer configurations, are appropriately prevented from executing disruptive operations (e.g. electing a primary or committing a write), and it must also ensure that relevant protocol state from old configurations is properly transferred to newer configurations before they become active. The details of these restrictions and their safety implications are discussed further in Section \ref{sec:reconfig-safety-restrictions}.

In the remainder of this section we give an overview of the behaviors of \textit{MongoRaftReconfig}, along with a pseudocode description of the protocol. We discuss its correctness in more depth in Section \ref{sec:correctness}.

\subsection{High Level Protocol Behavior}
\label{sec:protocol-behavior}
At a high level, dynamic reconfiguration in \textit{MongoRaftReconfig} consists of two main aspects: (1) updating the current configuration and (2) propagating new configurations between servers. Configurations also have an impact on election behavior which we discuss below, in Section \ref{sec:config-and-elections}. Formally, a \textit{configuration} is defined as a tuple $(m,v,t)$, where $m \in 2^{Server}$ is a member set, $v \in \mathbb{N}$ is a numeric configuration \textit{version}, and $t \in \mathbb{N}$ is the numeric \textit{term} of the configuration. For convenience, we refer to the elements of a configuration tuple $C=(m,v,t)$ as, respectively, $C.m$, $C.v$ and $C.t$. Each server of a replica set maintains a single, durable configuration, and it is assumed that, initially, all nodes begin with a common configuration, $(m_{init}, 1, 0)$, where $m_{init} \in (2^{Server} \setminus \emptyset)$.


\renewcommand{\algorithmicprocedure}{\textbf{action:}}

\begin{algorithm}
    \caption{Pseudocode description of \textit{MongoRaftReconfig} reconfiguration specific behavior.}
    \label{alg:mrr-pseudocode}




    \begin{algorithmic}[0]
        \footnotesize
        \State \underline{\textbf{Definitions}}
        \vspace{3pt}
        \State $C_{(i)} \triangleq (config[i], configVersion[i], configTerm[i])$ 
        \State $C_i > C_j \triangleq (C_i.t > C_j.t) \vee (C_i.t = C_j.t \wedge C_i.v > C_j.v)$ 
        \State $C_i \geq C_j \triangleq (C_i > C_j) \vee ((C_i.v, C_i.t) = (C_j.v,C_j.t))$
        \ifx\authoranonymous\relax 
            \\
        \fi

        \State $Q1(i) \triangleq$ 
        $\exists Q \in Quorums(config[i]) : \forall j \in Q : (C_{(j)}.v, C_{(j)}.t) = (C_{(i)}.v,C_{(i)}.t)$ \Comment Config Quorum Check
        
        \State $Q2(i) \triangleq$ 
        $\exists Q \in Quorums(config[i]) : \forall j \in Q : term[j] = term[i]$ \Comment Term Quorum Check 

        \State $P1(i) \triangleq$ $\exists Q \in Quorums(config[i]) :$ all entries committed in terms $ \leq term[i]$ are committed in $Q$
    \end{algorithmic}

    \vspace{-8pt}

    \begin{multicols}{2}
    \begin{algorithmic}[1]
    \footnotesize

    \State \underline{\textbf{State and Initialization}}\\
    \vspace{3pt}
    Let $m_{init} \in 2^{Server} \setminus \emptyset$\\
    $\forall i \in Server:$ \\
    \ \ $term[i] \in \mathbb{N}$, initially $0$\\
    \ \ $state[i] \in \{Pri., Sec.\}$, initially $Secondary$\\
    \ \ $config[i] \in 2^{Server}$, initially $m_{init}$\\
    \ \ $configVersion[i] \in \mathbb{N}$, initially $1$\\
    \ \ $configTerm[i] \in \mathbb{N}$, initially $0$\\
    


    \State \underline{\textbf{Actions}}
    \vspace{3pt}
    \Procedure{Reconfig}{$i$, $m_{new}$} 
    \State \textbf{require} $state[i] = Primary$
    \State \textbf{require} $Q1(i) \wedge Q2(i) \wedge P1(i)$
    \State \textbf{require} $QuorumsOverlap(config[i], m_{new})$
    \State $config[i] \gets m_{new}$
    \State $configVersion[i] \gets configVersion[i] + 1$

    \EndProcedure
    \\
    
    \Procedure{SendConfig}{$i,j$} 
    \State \textbf{require} $state[j] = Secondary$
    \State \textbf{require} $C_{(i)} > C_{(j)}$
    \State $C_{(j)} \gets C_{(i)}$
    \EndProcedure
    \\

    \Procedure{BecomeLeader}{$i, Q$} 
    \State \textbf{require} $ Q \in Quorums(config[i])$
    \State \textbf{require} $i \in Q$
    \State \textbf{require} $\forall v \in Q : C_{(i)} \geq C_{(v)}$
    \State \textbf{require} $\forall v \in Q : term[i] + 1 > term[v]$
    \State $state[i] \gets Primary$
    \State $state[j] \gets Secondary$, $\forall j \in (Q \setminus \{i\})$
    \State $term[j] \gets term[i] + 1$, $\forall j \in Q$
    \State $configTerm[i] \gets term[i] + 1$
    \EndProcedure
    \\

    \Procedure{UpdateTerms}{$i,j$} 
    \State \textbf{require}  $term[i] > term[j]$
    \State $state[j] \gets Secondary$
    \State $term[j] \gets term[i]$
    \EndProcedure
    \end{algorithmic}
    \end{multicols}
\end{algorithm}

To update the current configuration of a replica set, a client issues a \textit{reconfiguration} command to a primary server with a new, desired configuration, $C'$. Reconfigurations can only be executed on primary servers, and they update the primary's current local configuration $C$ to the specified configuration $C'$. The version of the new configuration, $C'.v$, must be greater than the version of the primary's current configuration, $C.v$, and the term of $C'$ is set equal to the current term of the primary processing the reconfiguration. 
After a reconfiguration has occurred on a primary, the updated configuration needs to be communicated to other servers in the replica set. This is achieved in a simple, gossip like manner. Secondaries receive information about the configurations of other servers via periodic heartbeats. They need to have some mechanism, however, for determining whether one configuration is newer than another. This is achieved by totally ordering configurations by their $(version,term)$ pair, where term is compared first, followed by version. If configuration $C_j$ compares as greater than configuration $C_i$ based on this ordering, we say that $C_j$ is \textit{newer} than $C_i$. 
A secondary can update its configuration to any that is newer than its current configuration. If it learns that another server has a newer configuration, it will fetch that server's configuration, verify that it is still newer than its own upon receipt, and install it locally. 

The above provides a basic outline of how reconfigurations occur and how configurations are propagated between servers in \textit{MongoRaftReconfig}. The pseudocode given in Algorithm \ref{alg:mrr-pseudocode} gives a more abstract and precise description of these behaviors. Note that, in order for the protocol to operate safely, there are several additional restrictions that are imposed on both reconfigurations and elections, which we discuss in more detail below, in Sections \ref{sec:reconfig-safety-restrictions} and \ref{sec:config-and-elections}.

\subsection{Safety Restrictions on Reconfigurations}
\label{sec:reconfig-safety-restrictions}


In \textit{MongoStaticRaft}, which does not allow reconfiguration, the safety of the protocol depends on the fact that the \textit{quorum overlap} condition is satisfied for the member sets of any two configurations. This holds since there is a single, uniform configuration that is never modified. For any pair of arbitrary configurations, however, their member sets may not satisfy this property. So, in order for \textit{MongoRaftReconfig} to operate safely, extra restrictions are needed on how nodes are allowed to move between configurations. First, any reconfiguration that moves from $C$ to $C'$ is required to satisfy the quorum overlap condition i.e. $QuorumsOverlap(C.m, C'.m)$. This restriction is discussed in Raft's approach to reconfiguration \cite{OngaroDissertation2014}, and is adopted by \textit{MongoRaftReconfig}. Even if quorum overlap is ensured between any two adjacent configurations, it may not be ensured between all configurations that the system passes through over time. So, there are additional preconditions that must be satisfied before a primary server in term $T$ can execute a reconfiguration out of its current configuration $C$: 
\begin{enumerate}
\item[Q1.] \textit{Config Quorum Check}: There must be a quorum of servers in $C.m$ that are currently in configuration $C$.
\item[Q2.] \textit{Term Quorum Check}: There must be a quorum of servers in $C.m$ that are currently in term $T$.
\item[P1.] \textit{Oplog Commitment}: All oplog entries committed in terms $\leq T$ must be committed on some quorum of servers in $C.m$.
\end{enumerate}
The above preconditions are stated in Algorithm \ref{alg:mrr-pseudocode} as $Q1(i)$, $Q2(i)$, and $P1(i)$, and they collectively enforce two fundamental requirements needed for safe reconfiguration: \textit{deactivation} of old configurations and \textit{state transfer} from old configurations to new configurations. Q1, when coupled with the election restrictions discussed in Section \ref{sec:config-and-elections}, achieves deactivation by ensuring that configurations earlier than $C$ can no longer elect a primary. Q2 ensures that term information from older configurations is correctly propagated to newer configurations, while P1 ensures that previously committed oplog entries are properly transferred to the current configuration, ensuring that any primary in a current or later configuration will contain these entries. 



\subsection{Configurations and Elections}
\label{sec:config-and-elections}

When a node runs for election in \textit{MongoStaticRaft}, it must ensure its log is appropriately up to date and that it can garner a quorum of votes in its term. In \textit{MongoRaftReconfig}, there is an additional restriction on voting behavior that depends on configuration ordering. 
If a replica set server is a candidate for election in configuration $C_i$, then a prospective voter in configuration $C_j$ may only cast a vote for the candidate if $C_i$ is newer than or equal to $C_j$.
Furthermore, when a node wins an election, it must update its current configuration with its new term before it is allowed to execute subsequent reconfigurations. That is, if a node with current configuration $(m,v,t)$ wins election in term $t'$, it will update its configuration to $(m,v,t')$ before allowing any reconfigurations to be processed. This behavior is necessary to appropriately deactivate concurrent reconfigurations that may occur on primaries in a different term. This configuration re-writing behavior is analogous to the write in Raft's corrected membership change protocol proposed in \cite{ongaro2015-membership-bug}.

\subsection{Formal Specification}
\label{sec:formal-spec}

The complete, formal description of \textit{MongoRaftReconfig} is given in the TLA+ specification in the supplementary materials \cite{supp-materials}. Note that TLA+ does not impose an underlying system or communication model (e.g. message passing, shared memory), which allows one to write specifications at a wide range of abstraction levels. Our specifications are written at a deliberately high level of abstraction, ignoring some lower level details of the protocol and system model. In practice, we have found the abstraction level of our specifications most useful for understanding and communicating the essential behaviors and safety characteristics of the protocol, while also serving to make automated verification via model checking more feasible, which is discussed further in Section \ref{sec:model-checking-section}. 





\section{Correctness}
\label{sec:correctness}



In this section we present a brief outline of our safety proof for \textit{MongoRaftReconfig}. We do not address liveness properties in this work. 
\ifextended
The proof details are left to Appendix \ref{appendix:safety-proof}. 
\else
The full proof is left to \cite{schultz2021design}.
\fi

The key, high level safety property of \textit{MongoRaftReconfig} that we establish in this paper is \textit{LeaderCompleteness}, which is a fundamental safety property of both standard Raft and \textit{MongoStaticRaft}, and is stated below as Theorem \ref{thm:leader-completeness}. This property states that if a log entry has been committed in term $T$, then it must be present in the logs of all primary servers in terms $> T$. Essentially, it ensures that writes committed by some primary will be permanently durable in the replica set. Below we give a high level, intuitive outline of the proof.

\subsection{Overview}

Conceptually, \textit{MongoRaftReconfig} can be viewed as an extension of the \textit{MongoStaticRaft} replication protocol that allows for dynamic reconfiguration. \textit{MongoRaftReconfig}, however, violates the property that all quorums of any two configurations overlap, which \textit{MongoStaticRaft} relies on for safety. It is therefore necessary to examine how \textit{MongoRaftReconfig} operates safely even though it cannot rely on the quorum overlap property. In \textit{MongoStaticRaft}, there are two key aspects of protocol behavior that depend on quorum overlap: (1) \textit{elections} of primary servers and (2) \textit{commitment} of log entries. Elections must ensure that there is at most one unique primary per term, referred to as the \textit{ElectionSafety} property. 
Additionally, if a log entry is committed in a given term, it must be present in the logs of all primary servers in higher terms, referred to as the \emph{LeaderCompleteness} property. Both of these safety properties must be upheld in \textit{MongoRaftReconfig}.


\emph{LeaderCompleteness} is the essential, high level safety property that we must establish for \textit{MongoRaftReconfig}. \textit{ElectionSafety} is a key, auxiliary lemma that is required in order to show \textit{LeaderCompleteness}. So, this guides the general structure of our proof. Section \ref{sec:elecsafety-proof-outline} presents an intuitive outline of the \textit{ElectionSafety} proof, followed by a similar discussion of \textit{LeaderCompleteness} in Section \ref{sec:leadercomp-proof-outline}. 
\ifextended
The full proofs are left to Appendix \ref{appendix:safety-proof}.
\else
The full proofs are left to \cite{schultz2021design}.
\fi

\subsection{Election Safety}
\label{sec:elecsafety-proof-outline}

In \textit{MongoStaticRaft}, if an election has occurred in term $T$ it ensures that some quorum of servers have terms $\geq T$. This prevents any future candidate from being elected in term $T$, since the quorum required for any future election will contain at least one of these servers, preventing a successful election in term $T$. This property, referred to as \emph{ElectionSafety}, is stated below as Lemma \ref{lemma:election-safety}.
\begin{lemma}[Election Safety]
  \label{lemma:election-safety}
  For all $s,t \in Server$ such that $s \neq t$, it is not the case that both $s$ and $t$ are primary and have the same term.
  \begin{align*}
    \forall s,t &\in Server : \\&(state[s]=Primary \wedge state[t]=Primary \wedge term[s]=term[t]) \Rightarrow (s=t)
  \end{align*} 
\end{lemma}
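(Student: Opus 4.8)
The plan is to prove \emph{ElectionSafety} as an invariant of \textit{MongoRaftReconfig}, maintained inductively together with a handful of supporting invariants. The only action that can create a new primary in a given term is \textsc{BecomeLeader}$(i,Q)$, so the argument must show that two distinct servers cannot both execute \textsc{BecomeLeader} with the same resulting term $T$. The classical Raft argument reduces this to quorum intersection: if $s$ becomes leader in term $T$ via quorum $Q_s$ and $t$ becomes leader in term $T$ via quorum $Q_t$, then some server $w \in Q_s \cap Q_t$ would have to grant two votes in term $T$, or equivalently would have term $\geq T$ before the second election and hence block it (the \textbf{require} $\forall v \in Q : term[i]+1 > term[v]$ clause would fail). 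The difficulty specific to \textit{MongoRaftReconfig} is that $Q_s$ and $Q_t$ are quorums of \emph{possibly different configurations} $C_{(i)}$ and $C_{(i')}$, so plain quorum overlap is not available and we must recover it from the configuration-ordering and election restrictions.

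First I would set up the key supporting invariant, call it \emph{ConfigsSafeAtTerms} (or reuse whatever the paper's auxiliary-invariant nomenclature is): roughly, that any two configurations that are ``active'' — meaning each has a quorum of servers that have adopted it and are in its term, as enforced by $Q1$/$Q2$ at reconfig time and by the election precondition — are comparable in the $(v,t)$ order, and moreover that along any chain of configurations linked by this order the quorum overlap condition propagates transitively. This is the place where the $QuorumsOverlap(config[i], m_{new})$ requirement on \textsc{Reconfig}, together with the $Q1$ \emph{Config Quorum Check}, does its work: $Q1$ guarantees that the predecessor configuration $C$ is itself anchored on a quorum, so a later configuration cannot ``skip past'' it without overlapping it, and an inductive chaining argument shows any two configurations that have each been used to win an election must overlap pairwise in the relevant sense.

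Next I would do the main inductive step on \textsc{BecomeLeader}. Suppose for contradiction that after the action, $s\neq t$ are both primary in term $T$. Without loss of generality $t$ is the one that just executed \textsc{BecomeLeader}$(t,Q_t)$, producing $term[t]=T$, so before the action $s$ was already primary in term $T$, having previously executed \textsc{BecomeLeader}$(s,Q_s)$ with $term[s]=T$. From the election restriction (a voter in $C_j$ votes for a candidate in $C_i$ only if $C_i \geq C_j$) and the \textsc{BecomeLeader} precondition $\forall v\in Q : C_{(i)}\geq C_{(v)}$, both $C_{(s)}$ (at the time of $s$'s election) and $C_{(t)}$ (now) dominate the configurations of their voting quorums; combined with the $Q1$-based anchoring and the chaining invariant above, I would argue $Q_s$ and $Q_t$ must intersect. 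Any $w$ in the intersection had $term[w] \geq T$ after $s$'s election (since $s$'s \textsc{BecomeLeader} set $term[j]\gets T$ for all $j\in Q_s$, including $w$), and $term$ is non-decreasing per server (easy auxiliary invariant — \textsc{UpdateTerms} and \textsc{BecomeLeader} only raise it); but then $t$'s precondition $term[t]+1 > term[w]$, i.e. $T > term[w] \geq T$, is violated, contradiction. I would also need the small observation that $configTerm$ is advanced to the new term on \textsc{BecomeLeader}, which is what keeps a freshly elected leader's configuration comparable-and-dominant going forward.

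The main obstacle I anticipate is precisely the configuration-chaining argument establishing that the quorums of any two election-winning configurations intersect: unlike static Raft, configurations form a tree of possible versions and one must rule out that two incomparable configurations each garnered an election-winning quorum. I expect this to require carefully packaging $Q1$ (predecessor anchored on a quorum) plus the voter restriction $C_i \geq C_j$ into an invariant asserting that the set of configurations that have ever been ``installed on a quorum'' is totally ordered and consecutive ones overlap, then transitively combining overlaps along the chain between $C_{(s)}$ and $C_{(t)}$. Everything else — monotonicity of $term$, the ``only \textsc{BecomeLeader} creates primaries'' observation, and the final term-collision contradiction — should be routine once that structural invariant is in hand.
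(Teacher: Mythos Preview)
Your plan has a genuine gap at the point you yourself flag as the main obstacle. You want to conclude that the two election quorums $Q_s$ and $Q_t$ intersect by ``transitively combining overlaps along the chain between $C_{(s)}$ and $C_{(t)}$.'' But $QuorumsOverlap$ is not transitive: from $QuorumsOverlap(C_1,C_2)$ and $QuorumsOverlap(C_2,C_3)$ you cannot deduce $QuorumsOverlap(C_1,C_3)$ (take $C_1=\{a,b,c\}$, $C_2=\{b,c,d\}$, $C_3=\{c,d,e\}$; then $\{a,b\}$ and $\{d,e\}$ are disjoint quorums). Since $s$'s election and $t$'s election happen at different moments, their configurations need not be simultaneously active, so the paper's ``all active configs overlap'' invariant (Lemma~\ref{lemma:active-cfgs-overlap}) does not directly give $Q_s\cap Q_t\neq\emptyset$ either. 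Your downstream contradiction (``any $w\in Q_s\cap Q_t$ has $term[w]\geq T$'') therefore rests on a premise the protocol does not guarantee.

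The paper avoids this entirely by not trying to intersect the two election quorums. Instead it maintains the invariant \emph{Active configs safe from past terms} (Lemma~\ref{lemma:active-cfgs-safe-at-terms}): for every active configuration $C_a$ and every existing configuration $C$, every quorum of $C_a$ contains some server in term $\geq C.t$. The mechanism that makes this inductive is precisely the precondition you do not use, namely $Q2$ (the \emph{Term Quorum Check}), which forces a quorum of the current config into the primary's term before reconfiguring; this is what pushes the term information forward across the chain of configurations even though the quorums themselves stop overlapping. The Election Safety step is then short: if $j$ is already primary in term $T$, then by Lemma~\ref{lemma:primary-term-equals-cfg-term} its $configTerm$ is $T$, so by Lemma~\ref{lemma:active-cfgs-safe-at-terms} every quorum of the candidate $i$'s (necessarily active) configuration contains a server with term $\geq T$, which violates the \textsc{BecomeLeader} precondition $term[i]+1>term[v]$. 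In short, replace ``chain the quorum overlaps'' with ``propagate the term witness via $Q2$,'' and the argument goes through.
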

In \textit{MongoRaftReconfig}, ensuring that a quorum of nodes have terms $\geq T$ after an election in term $T$ is not sufficient to ensure that \textit{ElectionSafety} holds, since there is no guarantee that all quorums of future configurations will overlap with those of past configurations. To address this, \textit{MongoRaftReconfig} must appropriately \textit{deactivate} past configurations before creating new configurations. Conceptually, configurations in the protocol can be considered as either or \textit{active} or \textit{deactivated}, the former being any configuration that is not deactivated. Deactivated configurations cannot elect a new leader or execute a reconfiguration. \textit{MongoRaftReconfig} ensures proper deactivation of configurations by upholding an invariant that the quorums of all active configurations overlap with each other.
In addition to deactivation of configurations, \emph{MongoRaftReconfig} must also ensure that term information from one configuration is properly transferred to subsequent configurations, so that later configurations know about elections that occurred in earlier configurations. For example, if an election occurred in term $T$ in configuration $C$, even if $C$ is deactivated by the time $C'$ is created, the protocol must also ensure that $C'$ is ``aware'' of the fact that an election in $T$ occurred in $C$. \emph{MongoRaftReconfig} ensures this by upholding an additional invariant stating that the quorums of all active configurations overlap with some server in term $\geq T$, for any past election that occurred in term $T$. 

Collectively, the two above invariants are the essential properties for understanding how the \emph{ElectionSafety} property is upheld in \emph{MongoRaftReconfig}. 
\ifextended
The formal statement of these invariants and the complete proof is left to Appendix \ref{appendix:elecsafety-proof}.
\else
The formal statement of these invariants and the complete proof is left to \cite{schultz2021design}.
\fi
In the following section, we briefly discuss the \textit{LeaderCompleteness} property and its proof, which relies on the \emph{ElectionSafety} property. 

\subsection{Leader Completeness}
\label{sec:leadercomp-proof-outline}

\emph{LeaderCompleteness} is the key high level safety property of \textit{MongoRaftReconfig}. It ensures that if a log entry is committed in term $T$, then it is present in the logs of all leaders in terms $ > T$. Essentially, it ensures that committed log entries are durable in a replica set. It is stated below as Theorem \ref{thm:leader-completeness}, where $committed \in \mathbb{N} \times \mathbb{N}$ refers to the set of committed log entries as $(index, term)$ pairs, and $InLog(i,t,s)$ is a predicate determining whether a log entry $(i,t)$ is contained in the log of server $s$. 
\begin{theorem}[Leader Completeness]
  \label{thm:leader-completeness}
  If a log entry is committed in term $T$, then it is present in the log of any leader in term $T' > T$.
    \begin{align}
      \begin{split}
        \A s \in &Server : \A (cindex, cterm) \in committed : \\
        &( state[s] = Primary \wedge cterm < term[s]) \Rightarrow InLog(cindex, cterm, s)
      \end{split}
    \end{align}
\end{theorem}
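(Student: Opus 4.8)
The plan is to prove \emph{LeaderCompleteness} by induction on the reachable states of \emph{MongoRaftReconfig}, following the classical Raft argument but with the quorum-overlap reasoning replaced by the active-configuration invariants used to establish \emph{ElectionSafety}. I would first strengthen the statement into an inductive invariant (as is standard for Raft-style proofs): alongside the property in Theorem~\ref{thm:leader-completeness}, I expect to need auxiliary invariants such as ``every committed entry $(cindex,cterm)$ is contained in the log of some quorum of the configuration that was active when it was committed,'' a ``log matching''-style property (if two logs share an entry at a given index and term, they agree on all earlier entries), and the invariant that the term of any leader dominates the terms of the entries in its log. \emph{ElectionSafety} (Lemma~\ref{lemma:election-safety}) is assumed as already proven, and with it the two invariants sketched in Section~\ref{sec:elecsafety-proof-outline}: that the quorums of all active configurations pairwise overlap, and that the quorums of all active configurations overlap some server whose term is $\ge T$ for each past election term $T$.

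The core of the argument is the standard Raft induction on term gaps. Consider a committed entry $(cindex, cterm)$ and a leader $s$ with $cterm < term[s]$; among all leaders in terms $> cterm$ that could violate the property, take one with the minimal term $T_\ell \triangleq term[s]$. When $s$ won its election in term $T_\ell$, it did so with a quorum $Q$ of votes in its configuration at election time; I would argue this configuration is active, hence (by the overlap invariants) its quorum $Q$ intersects a quorum of the configuration active when $(cindex,cterm)$ was committed, so some voter $v \in Q$ had the entry $(cindex,cterm)$ in its log at the time it voted. The election restriction (up-to-date log check, inherited from \emph{MongoStaticRaft}) then forces $s$'s log to be at least as up to date as $v$'s, and a case analysis on the last log entries of $s$ and $v$ --- combined with the minimality of $T_\ell$, log matching, and the fact that any leader in a term strictly between $cterm$ and $T_\ell$ already contains the entry by the induction hypothesis --- yields that $s$ must also contain $(cindex, cterm)$, a contradiction.

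The part that genuinely requires care, and which departs from static Raft, is the bookkeeping around configurations: I need precondition P1 (\emph{Oplog Commitment}) together with Q1 and Q2 to guarantee that when a reconfiguration moves the active configuration from $C$ to $C'$, every entry committed in a term $\le C.t$ already lives on a quorum of $C'$ --- this is exactly what lets the overlap argument above reach back to the commitment configuration even after many reconfigurations have intervened. So the main obstacle is formulating and maintaining the inductive invariant that ties committed entries to the chain of active configurations: I must show every protocol action (\textsc{Reconfig}, \textsc{SendConfig}, \textsc{BecomeLeader}, \textsc{UpdateTerms}, and the underlying \emph{MongoStaticRaft} log-append and commit actions) preserves it, with the \textsc{Reconfig} case --- where P1 discharges the obligation that new configurations inherit all prior commitments --- and the commit action --- where Q1/Q2 and the election invariants ensure the committing quorum lies in an active configuration --- being the delicate ones. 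Once that invariant is in place, the term-induction step is essentially the same as in standard Raft, and the theorem follows.
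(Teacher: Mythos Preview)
Your overall plan is close to the paper's, but there is a genuine gap in how you formulate the key auxiliary invariant and use it in the election step. You phrase the invariant historically: ``every committed entry is contained in the log of some quorum of the configuration that was active \emph{when it was committed}.'' You then claim that the election quorum $Q$ (in the leader's current active configuration) ``intersects a quorum of the configuration active when $(cindex,cterm)$ was committed'' via the overlap invariants. That step does not go through: the active-configurations-overlap invariant only relates \emph{currently} active configurations to one another, and the commit-time configuration is in general long since deactivated. No amount of ``chaining back'' through the reconfiguration history fixes this as stated, because configurations do not form a linear chain (concurrent primaries in different terms can each reconfigure). The paper instead maintains the invariant in present tense (its Lemma ``Active configs overlap with committed entries''): for every committed entry $E$ and every \emph{currently} active configuration $C$, every quorum of $C$ contains a server holding $E$. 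The \textsc{Reconfig} case of that invariant is where $P1$ does the work you describe, but the payoff is that at election time the voter with the entry is obtained directly, with no appeal to historical overlap.

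Two further differences are worth noting. First, the paper does not use your minimal-term (term-gap) induction at all. In the case where the candidate's last-log term $t_i$ strictly exceeds the voter's last-log term $t_n$, the paper invokes a separate inductive invariant (``Logs later than committed must have past committed entries''): any log containing an entry in term $t$ already contains every entry committed in terms $<t$. This is a statement about \emph{all} logs, not just leaders', and it discharges the $t_i>t_n$ case in one line without reasoning about intermediate leaders. Second, for the \textsc{CommitEntry} case you appeal to $Q1/Q2$, but those are preconditions of \textsc{Reconfig}, not of \textsc{CommitEntry}; the paper instead proves a dedicated lemma (``Newer configs disable commits in older terms'') showing that if any configuration exists in a term above $term[i]$, then every quorum of $C_i$ already contains a server with term $>term[i]$, which blocks the commit outright. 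You will need an analogue of that lemma to close the \textsc{CommitEntry} step.
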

In \textit{MongoStaticRaft}, \emph{LeaderCompleteness} is ensured due to the overlap between quorums used for commitment of a write and quorums used for the election of a primary. In \textit{MongoRaftReconfig}, this does not hold, so the protocol instead upholds a more general invariant, stating that, for all committed entries $E$, the quorums of all active configurations overlap with some server that contains $E$ in its log. \textit{MongoRaftReconfig} also ensures that newer configurations appropriately disable commitment of log entries in older terms. 
\ifextended
We defer the statement of these invariants and the complete proof of Theorem \ref{thm:leader-completeness} and its supporting lemmas to Appendix \ref{appendix:leader-comp-proof}.
\else
We defer the statement of these invariants and the complete proof of Theorem \ref{thm:leader-completeness} and its supporting lemmas to \cite{schultz2021design}.
\fi

\subsection{Model Checking} 
\label{sec:model-checking-section}


In addition to the safety proof outlined above, we used TLC \cite{yu1999model}, an explicit state model checker for TLA+ specifications, to gain additional confidence in the safety of the protocol. We consider it important to augment the human reasoning process for protocols like this with some type of machine based verification, even if the verification is incomplete, since it is easy for humans to make subtle errors in reasoning when considering distributed protocols of this nature. 

We verified fixed, finite instances of \textit{MongoRaftReconfig} to provide a sound guarantee of protocol correctness for given parameters. \textit{MongoRaftReconfig} is an infinite state protocol, so verification via explicit state model checking is, necessarily, incomplete. That is, it does not establish correctness of the protocol for an unbounded number of servers or system parameters. It does, however, provide a strong initial level of confidence that the protocol is safe. A goal for future work is to develop a complete, machine checked safety proof using the TLA+ proof system \cite{chaudhuri2010verifying}.

\subsubsection{Methodology and Results}
Formally, \textit{MongoRaftReconfig} behaves as an extension of \textit{MongoStaticRaft} that allows for dynamic reconfiguration. Thus, it can be viewed as a composition of two distinct subprotocols: one for managing the oplog, and one for managing configuration state. The oplog is maintained by  \textit{MongoStaticRaft}, and configurations are maintained by a protocol we refer to below as \textit{MongoLoglessDynamicRaft}, which implements the logless replicated state machine that manages the configuration state of the replica set. Algorithm \ref{alg:mrr-pseudocode} summarizes the behaviors of \textit{MongoLoglessDynamicRaft}. This compositional approach to describing \textit{MongoRaftReconfig} is formalized in our TLA+ specification which can be found in the supplementary materials \cite{supp-materials}. Our verification efforts centered on checking the two key safety properties discussed in the above sections, \textit{ElectionSafety} and \textit{LeaderCompleteness}. We summarize the results below,
\ifextended
leaving the full details to Appendix \ref{appendix:model-checking-details}.
\else
leaving the full details to \cite{schultz2021design}.
\fi


\subparagraph{Checking Leader Completeness}
We were able to successfully verify the \textit{LeaderCompleteness} property on a finite instance of \textit{MongoRaftReconfig} with 4 servers, logs of maximum length 2, maximum configuration versions of 3, and maximum server terms of 3. That is, we manually imposed a constraint preventing the model checker from exploring any states exceeding these finite bounds. Model checking this instance generated approximately 345 million distinct protocol states and took approximately 8 hours to complete with 20 TLC worker threads on a 48-core, 2.30GHz Intel Xeon Gold 5118 CPU. 

\subparagraph{Checking Election Safety}

As evidenced by the above metrics, it was difficult to scale verification of the \emph{LeaderCompleteness} property to much larger system parameters. So, to provide additional confidence, we checked the \emph{ElectionSafety} property on the \textit{MongoLoglessDynamicRaft} protocol in isolation, which allowed us to verify instances with significantly larger parameters. Due to the compositional structure of \textit{MongoRaftReconfig}, verifying that the \emph{ElectionSafety} property holds on \textit{MongoLoglessDynamicRaft} is sufficient to ensure that it holds in \textit{MongoRaftReconfig}. 
Intuitively, the additional preconditions imposed by \textit{MongoRaftReconfig} only \textit{restrict} the behaviors of \textit{MongoLoglessDynamicRaft}, but do not augment them. 
\ifextended
We formalize and prove this fact via a refinement based argument, whose details are left to Appendix \ref{sec:subprotocol-refinement}.
\else
We formalize and prove this fact via a refinement based argument, whose details are left to \cite{schultz2021design}.
\fi
This allows us to assume our verification efforts for \textit{MongoLoglessDynamicRaft} hold in \textit{MongoRaftReconfig}, providing stronger confidence in the correctness of the overall protocol.

We successfully verified the \emph{ElectionSafety} property on a finite instance of \textit{MongoLoglessDynamicRaft} with 5 servers, maximum configuration versions of 4, and maximum terms of 4. Model checking this instance generated approximately 812 million distinct states and took around 19.5 hours to complete with 20 TLC worker threads on a 48-core, 2.30GHz Intel Xeon Gold 5118 CPU. The ability to check these considerably larger parameter values in only several extra hours of wall clock time demonstrates the effectiveness of this compositional model checking approach, helping us mitigate state space explosion \cite{clarke2011model}.

\section{Conceptual Insights}
\label{sec:raft-comparison}


 \textit{MongoRaftReconfig} can be viewed as a generalization and optimization of the standard Raft reconfiguration protocol. To explain the conceptual novelties of our protocol and how it relates to standard Raft, we discuss below the two primary aspects of the protocol which set it apart from Raft: (1) decoupling of the oplog and config state machine and (2) logless optimization of the config state machine. These are covered in Sections \ref{sec:decoupling-reconfig} and \ref{sec:logless-optimization}, respectively. Section \ref{sec:experiments} provides an experimental evaluation of how these novel aspects can provide  performance benefits for reconfiguration, by allowing reconfigurations to bypass the main operation log in cases where it has become slow or stalled.

\subsection{Decoupling Reconfigurations}
\label{sec:decoupling-reconfig}

In standard Raft, the main operation log is used for both normal operations and reconfiguration operations. This coupling between logs has the benefit of providing a single, unified data structure to manage system state, but it also imposes fundamental restrictions on the operation of the two logs. Most importantly, in order for a write to commit in one log, it must commit all previous writes in the other. For example, if a reconfiguration log entry $C_j$ has been written at log index $j$ on primary $s$, and there is a sequence of uncommitted log entries $U=\langle i,i+1,...,j-1 \rangle$ in the log of $s$, in order for a reconfiguration from $C_j$ to $C_k$ to occur, all entries of $U$ must become committed. This behavior, however, is stronger than necessary for safety i.e. it is not strictly necessary to commit these log entries before executing a reconfiguration. The only fundamental requirements are that previously committed log entries are committed by the rules of the current configuration, and that the current configuration has satisfied the necessary safety preconditions. Raft achieves this goal implicitly, but more conservatively than necessary, by committing the entry $C_j$ and all entries behind it. This ensures that all previously committed log entries, in addition to the uncommitted operations $U$, are now committed in $C_j$, but it is not strictly necessary to pipeline a reconfiguration behind commitment of $U$. \textit{MongoRaftReconfig} avoids this by separating the oplog and config state machine and their rules for commitment and reconfiguration,  allowing reconfigurations to bypass the oplog if necessary. Section \ref{sec:experiments} examines this aspect of the protocol experimentally.

\subsection{Logless Optimization}
\label{sec:logless-optimization}

Decoupling the config state machine from the main operation log allows for an optimization that is enabled by the fact that reconfigurations are ``update-only" operations on the replicated state machine. This means that it is sufficient to store only the latest version of the replicated state, since the latest version can be viewed as a ``rolled-up" version of the entire (infinite) log. This logless optimization allows the configuration state machine to avoid complexities related to garbage collection of old log entries and it simplifies the mechanism for state propagation between servers. Normally, log entries are replicated incrementally, either one at a time, or in batches from one server to another. Additionally, servers may need to have an explicit procedure for deleting (i.e. rolling back) log entries that will never become committed. In the logless replicated state machine, all of these mechanisms can be combined into a single conceptual action, that atomically transfers the entire log of server $s$ to another server $t$, if the log of $s$ is newer, based on the index and term of its last entry. In \textit{MongoRaftReconfig}, this is implemented by the \textit{SendConfig} action, which transfers configuration state from one server to another.


%
%
\section{Experimental Evaluation}
\label{sec:experiments}

In a healthy replica set, it is possible that a failure event causes some subset of replica set servers to degrade in performance, causing the main oplog replication channel to become lagged or stall entirely. If this occurs on a majority of nodes, then the replica set will be prevented from committing new writes until the performance degradation is resolved. For example, consider a 3 node replica set consisting of nodes $\{n0, n1,n2\}$, where nodes $n1$ and $n2$ suddenly become slow or stall replication. An operator or failure detection module may want to reconfigure these nodes out of the set and add in two new, healthy nodes, $n3$ and $n4$, so that the system can return to a healthy operational state.  This requires a series of two reconfigurations, one to add $n3$ and one to add $n4$. In standard Raft, this would require the ability to commit at least one reconfiguration oplog entry with one of the degraded nodes ($n1$ or $n2$). This prevents such a reconfiguration until the degradation is resolved. In \textit{MongoRaftReconfig}, reconfigurations bypass the oplog replication channel, committing without the need to commit writes in the oplog. This allows \textit{MongoRaftReconfig} to successfully reconfigure the system in such a degraded state, restoring oplog write availability by removing the failed nodes and adding in new, healthy nodes. 

%
%
Note that if a replica set server experiences a period of degradation (e.g. a slow disk), both the oplog and reconfiguration channels will be affected, which would seem to nullify the benefits of decoupling the reconfiguration and oplog replication channels. In practice, however, the operations handled by the oplog are likely orders of magnitude more resource intensive than reconfigurations, which typically involve writing a negligible amount of data. So, even on a degraded server, reconfigurations should be able to complete successfully when more intensive oplog operations  become prohibitively slow, since the resource requirements of reconfigurations are extremely lightweight.


\subsection{Experiment Setup and Operation}

To demonstrate the benefits of \textit{MongoRaftReconfig} in this type of scenario, we designed an experiment to measure how quickly a replica set can reconfigure in new nodes to restore majority write availability when it faces periodic phases of degradation. For comparison, we implemented a simulated version of the Raft reconfiguration algorithm in MongoDB by having reconfigurations write a no-op oplog entry and requiring it to become committed before the reconfiguration can complete \cite{mongo-experiment-repo}. Our experiment initiates a 5 node replica set with servers we refer to as $\{n0,n1,n2,n3,n4\}$. We run the server processes co-located on a single Amazon EC2 t2.xlarge instance with 4 vCPU cores, 16GB memory, and a 100GB EBS disk volume, running Ubuntu 20.04. Co-location of the server processes is acceptable since the workload of the experiment does not saturate any resource (e.g. CPU, disk) of the machine. The servers run MongoDB version v4.4-39f10d with a patch to fix a minor bug \cite{jira-server-46907} that prevents optimal configuration propagation speed in some cases.

\begin{figure}
    \centering
    \includegraphics[scale=0.98]{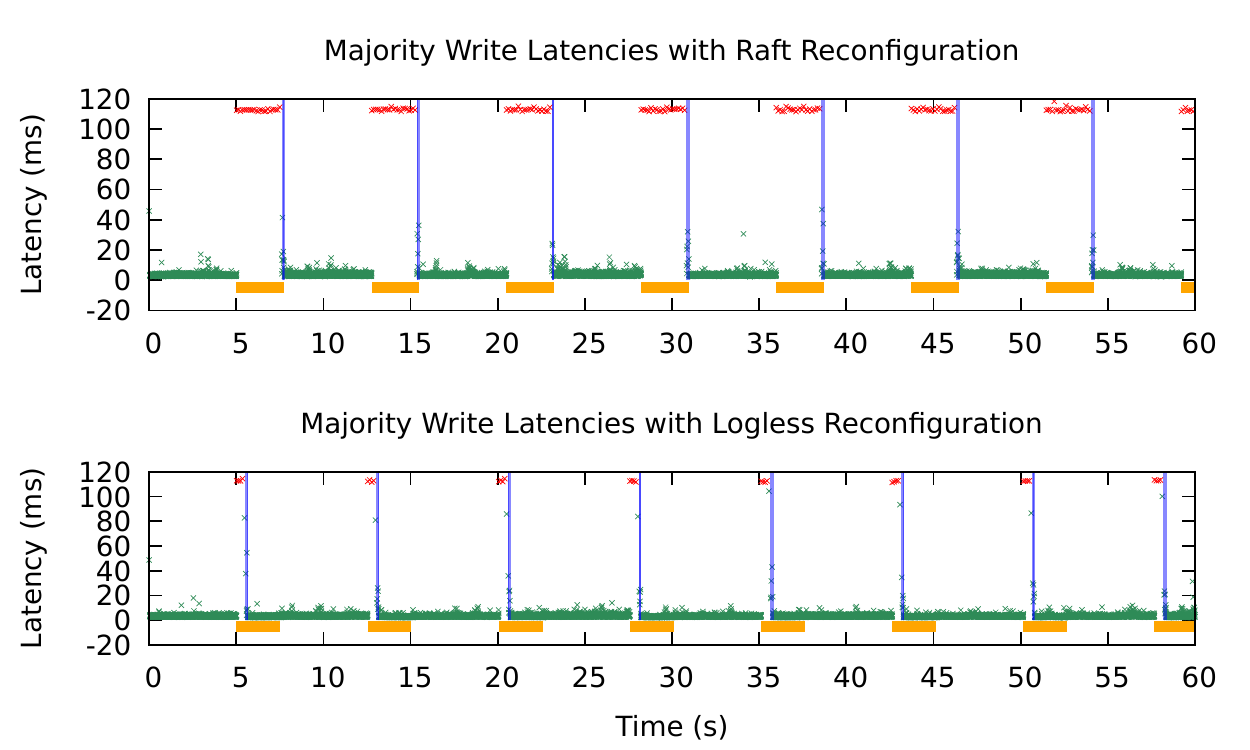}
    \caption{Latency of majority writes in the face of node degradation and reconfiguration to recover. Red points indicate writes that timed out i.e. failed to commit. Orange horizontal bars indicate intervals of time where system entered a \textit{degraded} mode. Thin, vertical blue bars indicate successful completion of reconfiguration events.}
    \label{fig:write-latencies-under-reconfig}
\end{figure}

Initially, $\{n0,n1,n2\}$ are voting servers and $\{n3,n4\}$ are non voting. In a MongoDB replica set, a server can be assigned either 0 or 1 votes. A non-voting server has zero votes and it does not contribute to a commit majority i.e. it is not considered as a member of the consensus group.
Our experiment has a single writer thread that continuously inserts small documents into a collection with write concern \textit{majority}, with a write concern timeout of 100 milliseconds. There is a concurrent fault injector thread that periodically simulates a degradation of performance on two secondary nodes by temporarily pausing oplog replication on those nodes. This thread alternates between \textit{steady} periods and \textit{degraded} periods of time, starting out in \textit{steady} mode, where all nodes are operating normally. It runs for 5 seconds in \textit{steady} mode, then transitions to \textit{degraded} mode for 2.5 seconds, before transitioning back to \textit{steady} mode and repeating this cycle. When the fault injector enters \textit{degraded} mode, the main test thread simulates a ``fault detection" scenario (assuming some external module detected the performance degradation) by sleeping for 500 milliseconds, and then starting a series of reconfigurations to add two new, healthy secondaries and remove the two degraded secondaries. Over the course of the experiment, which has a 1 minute duration, we measure the latency of each operation executed by the writer thread. These latencies are depicted in the graphs of Figure \ref{fig:write-latencies-under-reconfig}. Red points indicate writes that failed to commit i.e. that timed out at 100 milliseconds. The successful completion of reconfigurations are depicted with vertical blue bars. It can be seen how, when a period of degradation begins, the logless reconfiguration protocol is able to complete a series of reconfigurations quickly to get the system back to a healthy state, where writes are able to commit again and latencies drop back to their normal levels. In the case of Raft reconfiguration, writes continue failing until the period of degradation ends, since the reconfigurations to add in new healthy nodes cannot complete.

\section{Related Work}

Dynamic reconfiguration in consensus based systems has been explored from a variety of perspectives for Paxos based systems. In Lamport's presentation of Paxos \cite{Lamport2001}, he suggests using a fixed parameter $\alpha$ such that the configuration for a consensus instance $i$ is governed by the configuration at instance $i-\alpha$. This restricts the number of commands that can be executed until the new configuration becomes committed, since the system cannot execute instance $i$ until it knows what configuration to use, potentially causing availability issues if reconfigurations are slow to commit. Stoppable Paxos \cite{malkhi2008stoppable} was an alternative method later proposed where a Paxos system can be reconfigured by stopping the current state machine and starting up a new instance of the state machine with a potentially different configuration. This ``stop-the-world" approach can hurt availability of the system while a reconfiguration is being processed. Vertical Paxos allows a Paxos state machine to be reconfigured in the middle of reaching agreement, but it assumes the existence of an external configuration master \cite{vertical-paxos}. In \cite{paxosmadelive}, the authors describe the Paxos implementation underlying Google's Chubby lock service, but do not include details of their approach to dynamic reconfiguration, stating that ``While group membership with the core Paxos algorithm is straightforward, the exact details are non-trivial when we introduce Multi-Paxos...". They remark that the details, though minor, are ``...subtle and beyond the scope of this paper". 

The Raft consensus protocol, published in 2014 by Ongaro and Ousterhout \cite{raftpaper}, presented two methods for dynamic membership changes: single server membership change and joint consensus. A correctness proof of the core Raft protocol, excluding dynamic reconfiguration, was included in Ongaro's PhD dissertation \cite{OngaroDissertation2014}. Formal verification of Raft's linearizability guarantees was later completed in Verdi \cite{Woos2016}, a framework for verifying distributed systems in the Coq proof assistant \cite{bertot2013interactive}, but formalization of dynamic reconfiguration was not included. In 2015, after Raft's initial publication, a safety bug in the single server reconfiguration approach was found by Amos and Zhang \cite{Amos}, at the time PhD students working on a project to formalize parts of Raft's original reconfiguration algorithm. A fix was proposed shortly after by Ongaro \cite{ongaro2015-membership-bug}, but the project was never extended to include the fixed version of the protocol. The Zab replication protocol, implemented in Apache Zookeeper \cite{Shraer2019}, also includes a dynamic reconfiguration approach for primary-backup clusters that is similar in nature to Raft's joint consensus approach. 

The concept of decoupling reconfiguration from the main data replication channel has previously appeared in other replication systems, but none that integrate with a Raft-based system.  RAMBO \cite{Gilbert2010}, an algorithm for implementing a distributed shared memory service, implements a dynamic reconfiguration module that is loosely coupled with the main read-write functionality. Additionally,  Matchmaker Paxos \cite{whittaker2020matchmaker} is a more recent approach for reconfiguration in Paxos based protocols that adds dedicated nodes for managing reconfigurations, which decouples reconfiguration from the main processing path, preventing performance degradation during configuration changes. There has also been prior work on reconfiguration using weaker models than consensus  \cite{Jehl2014}, and approaches to logless implementations of Paxos based replicated state machine protocols \cite{Rystsov2018}, which bear conceptual similarities to our logless protocol for managing configuration state. Similarly, \cite{kuznetsov_et_al} presents an approach to asynchronous reconfiguration under a Byzantine fault model that avoids reaching consensus on configurations by utilizing a lattice agreement abstraction.


\section{Conclusions and Future Work}

In this paper we presented \textit{MongoRaftReconfig}, a novel, logless dynamic reconfiguration protocol that improves upon and generalizes the single server reconfiguration protocol of standard Raft by decoupling the main operation and reconfiguration logs. Although \textit{MongoRaftReconfig} was developed for and presented in the context of the MongoDB system, the ideas and underlying protocol generalize to other Raft-based replication protocols that require dynamic reconfiguration. Goals for future work include development of a machine checked safety proof of the protocol's correctness with help of the TLA+ proof system \cite{chaudhuri2010verifying}, in addition to running more in depth experiments to evaluate how \textit{MongoRaftReconfig} behaves under more varied workloads.

\bibliography{library}

\begin{thebibliography}{10}

\bibitem{tlc-tla-module}
{TLC.tla Module}, 2020.
\newblock URL:
  \url{https://github.com/tlaplus/tlaplus/blob/master/tlatools/org.lamport.tlatools/src/tla2sany/StandardModules/TLC.tla}.

\bibitem{Abadi1991}
Mart{\'{i}}n Abadi and Leslie Lamport.
\newblock {The existence of refinement mappings}.
\newblock {\em Theoretical Computer Science}, 1991.
\newblock \href {https://doi.org/10.1016/0304-3975(91)90224-P}
  {\path{doi:10.1016/0304-3975(91)90224-P}}.

\bibitem{aguilera2010}
Marcos Aguilera, Idit Keidar, Dahlia Malkhi, Jean-Philippe Martin, and
  Alexander Shraer.
\newblock {Reconfiguring Replicated Atomic Storage: A Tutorial}.
\newblock {\em Bulletin of the European Association for Theoretical Computer
  Science EATCS}, 2010.

\bibitem{Amos}
Brandon Amos and Huanchen Zhang.
\newblock {Specifying and proving cluster membership for the Raft distributed
  consensus algorithm}, 2015.
\newblock URL:
  \url{https://www.cs.cmu.edu/~aplatzer/course/pls15/projects/bamos.pdf}.

\bibitem{bertot2013interactive}
Yves Bertot and Pierre Cast{\'e}ran.
\newblock {\em Interactive theorem proving and program development: Coq’Art:
  the calculus of inductive constructions}.
\newblock Springer Science \& Business Media, 2013.

\bibitem{paxosmadelive}
Tushar~D Chandra, Robert Griesemer, and Joshua Redstone.
\newblock {Paxos Made Live: An Engineering Perspective}.
\newblock In {\em Proceedings of the Twenty-Sixth Annual ACM Symposium on
  Principles of Distributed Computing}, PODC '07, pages 398--407, New York, NY,
  USA, 2007. Association for Computing Machinery.
\newblock \href {https://doi.org/10.1145/1281100.1281103}
  {\path{doi:10.1145/1281100.1281103}}.

\bibitem{chaudhuri2010verifying}
Kaustuv Chaudhuri, Damien Doligez, Leslie Lamport, and Stephan Merz.
\newblock Verifying safety properties with the tla+ proof system.
\newblock In {\em International Joint Conference on Automated Reasoning}, pages
  142--148. Springer, 2010.

\bibitem{Clarke1998}
E.~M. Clarke, E.~A. Emerson, S.~Jha, and A.~P. Sistla.
\newblock {Symmetry reductions in model checking}.
\newblock In {\em Lecture Notes in Computer Science (including subseries
  Lecture Notes in Artificial Intelligence and Lecture Notes in
  Bioinformatics)}, 1998.
\newblock \href {https://doi.org/10.1007/bfb0028741}
  {\path{doi:10.1007/bfb0028741}}.

\bibitem{clarke2011model}
Edmund~M Clarke, William Klieber, Milo{\v{s}} Nov{\'a}{\v{c}}ek, and Paolo
  Zuliani.
\newblock Model checking and the state explosion problem.
\newblock In {\em LASER Summer School on Software Engineering}, pages 1--30.
  Springer, 2011.

\bibitem{Corbett2012}
James~C. Corbett, Jeffrey Dean, Michael Epstein, Andrew Fikes, Christopher
  Frost, J.~J. Furman, Sanjay Ghemawat, Andrey Gubarev, Christopher Heiser,
  Peter Hochschild, Wilson Hsieh, Sebastian Kanthak, Eugene Kogan, Hongyi Li,
  Alexander Lloyd, Sergey Melnik, David Mwaura, David Nagle, Sean Quinlan,
  Rajesh Rao, Lindsay Rolig, Yasushi Saito, Michal Szymaniak, Christopher
  Taylor, Ruth Wang, and Dale Woodford.
\newblock {Spanner: Google's globally-distributed database}.
\newblock In {\em Proceedings of the 10th USENIX Symposium on Operating Systems
  Design and Implementation, OSDI 2012}, 2012.
\newblock \href {https://doi.org/10.1145/2518037.2491245}
  {\path{doi:10.1145/2518037.2491245}}.

\bibitem{Gilbert2010}
Seth Gilbert, Nancy~A. Lynch, and Alexander~A. Shvartsman.
\newblock {Rambo: A robust, reconfigurable atomic memory service for dynamic
  networks}.
\newblock {\em Distributed Computing}, 2010.
\newblock \href {https://doi.org/10.1007/s00446-010-0117-1}
  {\path{doi:10.1007/s00446-010-0117-1}}.

\bibitem{Huang2020}
Dongxu Huang, Qi~Liu, Qiu Cui, Zhuhe Fang, Xiaoyu Ma, Fei Xu, Li~Shen, Liu
  Tang, Yuxing Zhou, Menglong Huang, Wan Wei, Cong Liu, Jian Zhang, Jianjun Li,
  Xuelian Wu, Lingyu Song, Ruoxi Sun, Shuaipeng Yu, Lei Zhao, Nicholas Cameron,
  Liquan Pei, and Xin Tang.
\newblock {TiDB: a Raft-based HTAP database}.
\newblock {\em Proceedings of the VLDB Endowment}, 2020.
\newblock \href {https://doi.org/10.14778/3415478.3415535}
  {\path{doi:10.14778/3415478.3415535}}.

\bibitem{Jehl2014}
Leander Jehl and Hein Meling.
\newblock {Asynchronous reconfiguration for Paxos state machines}.
\newblock In {\em Lecture Notes in Computer Science (including subseries
  Lecture Notes in Artificial Intelligence and Lecture Notes in
  Bioinformatics)}, 2014.
\newblock \href {https://doi.org/10.1007/978-3-642-45249-9_8}
  {\path{doi:10.1007/978-3-642-45249-9_8}}.

\bibitem{kuznetsov_et_al}
Petr Kuznetsov and Andrei Tonkikh.
\newblock {Asynchronous Reconfiguration with Byzantine Failures}.
\newblock In Hagit Attiya, editor, {\em 34th International Symposium on
  Distributed Computing (DISC 2020)}, volume 179 of {\em Leibniz International
  Proceedings in Informatics (LIPIcs)}, pages 27:1--27:17, Dagstuhl, Germany,
  2020. Schloss Dagstuhl--Leibniz-Zentrum f{\"u}r Informatik.
\newblock URL: \url{https://drops.dagstuhl.de/opus/volltexte/2020/13105}, \href
  {https://doi.org/10.4230/LIPIcs.DISC.2020.27}
  {\path{doi:10.4230/LIPIcs.DISC.2020.27}}.

\bibitem{Lamport1998}
Leslie Lamport.
\newblock {The Part-Time Parliament}.
\newblock {\em ACM Transactions on Computer Systems}, 1998.
\newblock \href {https://doi.org/10.1145/279227.279229}
  {\path{doi:10.1145/279227.279229}}.

\bibitem{Lamport2001}
Leslie Lamport.
\newblock {Paxos Made Simple}.
\newblock {\em ACM SIGACT News}, 2001.
\newblock \href {https://doi.org/10.1145/568425.568433}
  {\path{doi:10.1145/568425.568433}}.

\bibitem{lamport2002specifying}
Leslie Lamport.
\newblock {\em {Specifying Systems: The TLA+ Language and Tools for Hardware
  and Software Engineers}}.
\newblock Addison-Wesley, Jun 2002.

\bibitem{malkhi2008stoppable}
Leslie Lamport, Dahlia Malkhi, and Lidong Zhou.
\newblock Stoppable paxos.
\newblock {\em TechReport, Microsoft Research}, 2008.

\bibitem{vertical-paxos}
Leslie Lamport, Dahlia Malkhi, and Lidong Zhou.
\newblock {Vertical Paxos and Primary-Backup Replication}.
\newblock In {\em Proceedings of the 28th ACM Symposium on Principles of
  Distributed Computing}, PODC '09, pages 312--313, New York, NY, USA, 2009.
  Association for Computing Machinery.
\newblock \href {https://doi.org/10.1145/1582716.1582783}
  {\path{doi:10.1145/1582716.1582783}}.

\bibitem{Merz2008}
Stephan Merz.
\newblock {\em The Specification Language TLA+}, pages 401--451.
\newblock Springer Berlin Heidelberg, Berlin, Heidelberg, 2008.
\newblock \href {https://doi.org/10.1007/978-3-540-74107-7_8}
  {\path{doi:10.1007/978-3-540-74107-7_8}}.

\bibitem{mongodb-repo}
{MongoDB Github Project}, 2021.
\newblock URL: \url{https://github.com/mongodb/mongo}.

\bibitem{jira-server-46907}
{MongoDB JIRA Ticket SERVER-46907}, 2020.
\newblock URL: \url{https://jira.mongodb.org/browse/SERVER-46907}.

\bibitem{OngaroDissertation2014}
Diego Ongaro.
\newblock {Consensus: Bridging Theory and Practice}.
\newblock {\em Doctoral thesis}, 2014.

\bibitem{ongaro2015-membership-bug}
Diego Ongaro.
\newblock {Bug in single-server membership changes}.
\newblock https://groups.google.com/g/raft-dev/c/t4xj6dJTP6E/m/d2D9LrWRza8J,
  jul 2015.

\bibitem{raft-website}
Diego Ongaro.
\newblock {The Raft Consensus Algorithm}, 2021.
\newblock URL: \url{https://raft.github.io/}.

\bibitem{raftpaper}
Diego Ongaro and John Ousterhout.
\newblock {In Search of an Understandable Consensus Algorithm}.
\newblock In {\em Proceedings of the 2014 USENIX Conference on USENIX Annual
  Technical Conference}, USENIX ATC'14, pages 305--320, USA, 2014. USENIX
  Association.

\bibitem{Rystsov2018}
Denis Rystsov.
\newblock {CASPaxos: Replicated State Machines without logs}, 2018.
\newblock \href {http://arxiv.org/abs/1802.07000} {\path{arXiv:1802.07000}}.

\bibitem{Schneider1990}
Fred~B. Schneider.
\newblock {Implementing Fault-Tolerant Services Using the State Machine
  Approach: A Tutorial}.
\newblock {\em ACM Computing Surveys (CSUR)}, 1990.
\newblock \href {https://doi.org/10.1145/98163.98167}
  {\path{doi:10.1145/98163.98167}}.

\bibitem{mongo-experiment-repo}
William Schultz.
\newblock {MongoDB Experiment Source Code}, September 2021.
\newblock URL: \url{https://github.com/will62794/mongo/tree/2bb9f30da}.

\bibitem{supp-materials}
William Schultz.
\newblock {MongoRaftReconfig TLA+ Specifications}, November 2021.
\newblock \href {https://doi.org/10.5281/zenodo.5715511}
  {\path{doi:10.5281/zenodo.5715511}}.

\bibitem{schultz2019tunable}
William Schultz, Tess Avitabile, and Alyson Cabral.
\newblock {Tunable Consistency in MongoDB}.
\newblock {\em Proc. VLDB Endow.}, 12(12):2071–2081, aug 2019.
\newblock \href {https://doi.org/10.14778/3352063.3352125}
  {\path{doi:10.14778/3352063.3352125}}.

\bibitem{Shraer2019}
Alexander Shraer, Benjamin Reed, Dahlia Malkhi, and Flavio Junqueira.
\newblock {Dynamic reconfiguration of primary/backup clusters}.
\newblock In {\em Proceedings of the 2012 USENIX Annual Technical Conference,
  USENIX ATC 2012}, 2019.

\bibitem{cockroach}
Rebecca Taft, Irfan Sharif, Andrei Matei, Nathan VanBenschoten, Jordan Lewis,
  Tobias Grieger, Kai Niemi, Andy Woods, Anne Birzin, Raphael Poss, Paul
  Bardea, Amruta Ranade, Ben Darnell, Bram Gruneir, Justin Jaffray, Lucy Zhang,
  and Peter Mattis.
\newblock {CockroachDB: The Resilient Geo-Distributed SQL Database}.
\newblock In {\em Proceedings of the 2020 ACM SIGMOD International Conference
  on Management of Data}, SIGMOD '20, pages 1493--1509, New York, NY, USA,
  2020. Association for Computing Machinery.
\newblock \href {https://doi.org/10.1145/3318464.3386134}
  {\path{doi:10.1145/3318464.3386134}}.

\bibitem{whittaker2020matchmaker}
Michael Whittaker, Neil Giridharan, Adriana Szekeres, Joseph~M Hellerstein,
  Heidi Howard, Faisal Nawab, and Ion Stoica.
\newblock {Matchmaker Paxos: A Reconfigurable Consensus Protocol [Technical
  Report]}, 2020.
\newblock \href {http://arxiv.org/abs/2007.09468} {\path{arXiv:2007.09468}}.

\bibitem{Woos2016}
Doug Woos, James~R. Wilcox, Steve Anton, Zachary Tatlock, Michael~D. Ernst, and
  Thomas Anderson.
\newblock {Planning for change in a formal verification of the raft consensus
  protocol}.
\newblock In {\em CPP 2016 - Proceedings of the 5th ACM SIGPLAN Conference on
  Certified Programs and Proofs, co-located with POPL 2016}, 2016.
\newblock \href {https://doi.org/10.1145/2854065.2854081}
  {\path{doi:10.1145/2854065.2854081}}.

\bibitem{yu1999model}
Yuan Yu, Panagiotis Manolios, and Leslie Lamport.
\newblock {Model checking TLA+ specifications}.
\newblock In {\em Advanced Research Working Conference on Correct Hardware
  Design and Verification Methods}, pages 54--66. Springer, 1999.

\bibitem{zhou2021fault}
Siyuan Zhou and Shuai Mu.
\newblock {Fault-Tolerant Replication with Pull-Based Consensus in MongoDB.}
\newblock In {\em NSDI}, pages 687--703, 2021.

\end{thebibliography}

\appendix

\ifextended

\section{Detailed Safety Proof}
\label{appendix:safety-proof}
In this section we provide the detailed proof of Theorem \ref{thm:leader-completeness}, the \textit{LeaderCompleteness} safety property of \emph{MongoRaftReconfig}. We first provide some preliminary definitions in Section \ref{appendix:prelim-defs} that are used throughout the proof. Section \ref{appendix:elecsafety-proof} covers the proof of the \emph{ElectionSafety} property, Section \ref{appendix:log-props} proves some auxiliary properties about logs of servers in the system, and Section \ref{appendix:leader-comp-proof} presents the proof of Theorem \ref{thm:leader-completeness}, which relies on the lemmas established in the preceding sections.

Along with the proof, we provide a complete pseudocode description of \textit{MongoRaftReconfig} in Algorithm \ref{alg:mrr-pseudocode-full}, which is necessary to fully understand and verify the proof. Algorithm \ref{alg:mrr-pseudocode-full} includes the reconfiguration specific behaviors shown in Algorithm \ref{alg:mrr-pseudocode} along with the behaviors of \textit{MongoStaticRaft}, which are mostly orthogonal to reconfiguration, but are necessary to describe for completeness.

\subsection{Definitions and Notation}
\label{appendix:prelim-defs}

Recall that, in our system model, we consider a set of processes $Server=\{s_1,s_2,...,s_n\}$ that communicate by sending messages. Also, recall that a \textit{configuration} is defined as a tuple $(m,v,t)$, where $m \in 2^{Server}$ is a member set, $v \in \mathbb{N}$ is a numeric configuration \textit{version}, and $t \in \mathbb{N}$ is the numeric \textit{term} of the configuration. We refer to the elements of a configuration tuple $C=(m,v,t)$ as, respectively, $C.m$, $C.v$ and $C.t$, and, for a configuration $C$, we informally refer to the elements of $Quorums(C.m)$ as the ``quorums of $C$". For a sequence of elements $L$, we use $L[i]$ to refer to the $i$-th element of $L$ (1-indexed), and $L[..i]$ to refer to the sequence containing the first $i$ elements of $L$. We denote the concatenation of two sequences, $L$ and $M$, as $L \circ M$, a concrete sequence of values as $\langle v_1,v_2,\dots,v_n \rangle$, and the empty sequence as $\langle \rangle$. We also use the notation $1..n$ to refer to the set of natural numbers $\{1,2,\dots,n\}$. 

When referring to state variables of the protocol in inductive proof arguments below, we follow a convention of referring to the value of a state variable $X$ in the current state as $X$ and its value in the next state as $X'$ (i.e. its value after some state transition). When referring to log entries on a server, we sometimes use the pair notation $(index, term)$ to refer to a log entry at position $index$ with a term of $term$. For example, if we say that the log of server $s$ contains entry $(index, term)$, this means that $Len(log[s]) \geq index$ and $log[s][index]=term$. Below we provide a few basic definitions utilized in the proof and in the pseudocode description given in Algorithm \ref{alg:mrr-pseudocode-full}. Additional notation and definitions are also given in Algorithm \ref{alg:mrr-pseudocode-full} and relied upon throughout the proof.

\renewcommand{\algorithmicprocedure}{\textbf{action:}}

\begin{algorithm}
    \caption{Complete pseudocode description of \textit{MongoRaftReconfig} behavior. State, actions, and behavior specific to \textit{MongoStaticRaft} is highlighted in \textcolor{blue}{blue}.}
    \label{alg:mrr-pseudocode-full}



    \begin{algorithmic}[0]
        \footnotesize
        \State \underline{\textbf{Definitions}}\\

        \State $Seq(S) \triangleq$ the set of all sequences with elements from the set $S$
        \State $Len(s) \triangleq$ the length of a sequence $s$
        
        \State $InLog(ind, t, s) \triangleq \E k \in 1..Len(log[s]) : (k = ind \wedge log[s][k] = t)$
        \State $IsPrefix(l_i, l_j) \triangleq Len(l_i) \leq Len(l_j) \wedge l_i = l_j[..Len(l_i)]$\\

        \State $LogTerm(i) \triangleq$ \algorithmicif\ $log[i] = \langle \rangle$ \algorithmicthen\ $-1$ \algorithmicelse\ $log[i][Len(log[i])]$
        
        
        \State $LogGeq(i,j) \triangleq (LogTerm(i) > LogTerm(j)) \vee (LogTerm(i) = LogTerm(j) \wedge Len(log[i]) \geq Len(log[j]))$ \label{mrr-full:line-log-geq-def}
        
        

        \State $LogCheck(i,j) \triangleq (Len(log[j])> Len(log[i])) \wedge (log[i] = \langle \rangle \vee (log[i][Len(log[i])] = log[j][Len(log[i])]))$
        \State $CanRollback(i,j) \triangleq (LogTerm(i) < LogTerm(j)) \wedge \neg IsPrefix(log[i],log[j])$
        \State $IsCommitted(ind, t, Q) \triangleq \forall j \in Q : InLog(ind, t, j) \wedge term[j] = t$
        \State $CommittedAt(t) \triangleq \{(index, term) \in committed : term = t \}$\\


        \State $QuorumsAt(i) \triangleq Quorums(config[i])$
        \State $C_{(i)} \triangleq (config[i], configVersion[i], configTerm[i])$\\
        
        
        \State $Q1(i) \triangleq$ 
        $\exists Q \in QuorumsAt(i) : \forall j \in Q : (C_{(j)}.v, C_{(j)}.t) = (C_{(i)}.v, C_{(i)}.t)$ \Comment Config Quorum Check
        
        \State $Q2(i) \triangleq$ 
        $\exists Q \in QuorumsAt(i) : \forall j \in Q : term[j] = term[i]$ \Comment Term Quorum Check \\
        
        \State $P1a(i) \triangleq (committed = \emptyset) \vee CommittedAt(term[i])\neq \emptyset$
        \State $P1b(i,Q) \triangleq \forall c \in CommittedAt(term[i]) : IsCommitted(c[1], term[i], Q)$
        \State $P1(i) \triangleq$ $\exists Q \in QuorumsAt(i) : P1a(i) \wedge P1b(i,Q)$
        \Comment Oplog Commitment

    \end{algorithmic}

    \begin{multicols}{2}
    \begin{algorithmic}[1]
    \footnotesize


    \State \underline{\textbf{State and Initialization}}\\\\
    
    Let $m_{init} \in 2^{Server} \setminus \emptyset$\\
    $\forall i \in Server:$ \\
    \ \ $term[i] \in \mathbb{N}$, initially $0$\\
    \ \ $state[i] \in \{Pri., Sec.\}$, initially $Secondary$\\
    \ \ $config[i] \in 2^{Server}$, init $m_{init}$\\
    \ \ $configVersion[i] \in \mathbb{N}$, initially $1$\\
    \ \ $configTerm[i] \in \mathbb{N}$, initially $0$\\
    \ \ \textcolor{blue}{$log[i] \in Seq(\mathbb{N})$, initially $\langle \rangle$}\\
    \ \ \textcolor{blue}{$committed \subseteq \mathbb{N} \times \mathbb{N}$, initially $\emptyset$}\\

    \State \underline{\textbf{Actions}}\\

    \Procedure{Reconfig}{$i$, $m_{new}$} 
    \State \textbf{require} $m_{new} \in 2^{Server}$
    \State \textbf{require} $state[i] = Primary$
    \State \textbf{require} $Q1(i) \wedge Q2(i) \wedge P1(i)$ \label{mrr-full:line-reconfig-precond-quorum-conds}
    \State \textbf{require} $QuorumsOverlap(config[i], m_{new})$ \label{mrr-full:line-reconfig-precond-quorums-overlap}
    \State $config[i] \gets m_{new}$
    \State $configVersion[i] \gets configVersion[i] + 1$
    \EndProcedure
    \\

    \Procedure{SendConfig}{$i,j$} 
    \State \textbf{require} $state[j] = Secondary$
    \State \textbf{require} $C_{(i)} > C_{(j)}$ \label{mrr-full:line-send-config-precond-gt}
    \State $C_{(j)} \gets C_{(i)}$
    \EndProcedure
    \\

    \Procedure{BecomeLeader}{$i, Q$} 
    \State \textbf{require} $ Q \in Quorums(config[i])$
    \State \textbf{require} $i \in Q$
    \State \textbf{require} $\forall v \in Q : C_{(i)} \geq C_{(v)}$ \label{mrr-full:line-become-leader-config-precond}
    \State \textbf{require} $\forall v \in Q : term[i] + 1 > term[v]$ \label{mrr-full:line-become-leader-pre-newer-term}
    \State \textbf{\textcolor{blue}{require}} \textcolor{blue}{$\forall v \in Q : LogGeq(i, v)$}
    \State $state[i] \gets Primary$
    \State $state[j] \gets Secondary$, $\forall j \in (Q \setminus \{i\})$
    \State $term[j] \gets term[i] + 1$, $\forall j \in Q$ \label{mrr-full:line-become-leader-post-term-update}
    \State $configTerm[i] \gets term[i] + 1$
    \EndProcedure
    \\

    \Procedure{UpdateTerms}{$i,j$} 
    \State \textbf{require}  $term[i] > term[j]$
    \State $state[j] \gets Secondary$
    \State $term[j] \gets term[i]$
    \EndProcedure
    \\

    \Procedure{\textcolor{blue}{ClientRequest}}{$i$} 
    \State \textbf{require} $state[i] = Primary$
    \State $log[i] \gets log[i] \circ \langle term[i] \rangle$
    \EndProcedure
    \\

    \Procedure{\textcolor{blue}{GetEntries}}{$i,j$} 
    \State \textbf{require} $state[i] = Secondary$
    \State \textbf{require} $LogCheck(i,j)$
    \State $log[i] \gets log[i] \circ \langle log[j][Len(log[i])+1] \rangle$
    \EndProcedure
    \\

    \Procedure{\textcolor{blue}{RollbackEntries}}{$i,j$} 
    \State \textbf{require} $state[i] = Secondary$
    \State \textbf{require} $CanRollback(i,j)$
    \State $log[i] \gets log[i][..(Len(log[i])-1)]$
    \EndProcedure
    \\

    \Procedure{\textcolor{blue}{CommitEntry}}{$i,Q$} 
    \State \textbf{require} $Q \in Quorums(config[i])$
    \State \textbf{require} $state[i] = Primary$
    \State \textbf{require} $IsCommitted(Len(log[i]),term[i], Q)$\label{mrr-full:line-commit-entry-precond}
    %
    \State $committed \gets committed \cup \{ (Len(log[i]), term[i]) \}$
    \EndProcedure
    \\
    \end{algorithmic}
    \end{multicols}
\end{algorithm}

\begin{definition}[Config Ordering]
  \label{def:config-ordering}
  For configurations $C_i$ and $C_j$, we define the following
  \begin{align*}
    &C_i < C_j \triangleq (C_i.t < C_j.t) \vee (C_i.t = C_j.t \wedge C_i.v < C_j.v)\\
    &C_i > C_j \triangleq  C_j < C_i\\
    &C_i \leq C_j \triangleq C_i < C_j \vee ((C_i.v,C_i.t) = (C_j.v, C_j.t))\\
    &C_i \geq C_j \triangleq C_j \leq C_i
  \end{align*}
\end{definition}
\begin{definition}[Deactivated Config]
  \label{def:deactivated-config}
  A configuration $C$ is deactivated if, for all $Q \in Quorums(C.m)$, there exists some server $n \in Q$ such that $C_n > C$, where $C_n$ is the configuration of server $n$.
  \begin{align*}
    Deactivated(C) \triangleq \forall Q \in Quorums(C.m) : \E n \in Q : C_{(n)} > C
  \end{align*}
\end{definition}
\begin{definition}[Active Config]
  A configuration $C$ is active if it is not deactivated.
  \begin{align*}
      Active(C) \triangleq \neg Deactivated(C)
  \end{align*}
\end{definition}
\begin{definition}[Active Config Set]
  The active config set is the set of servers with a configuration that is active.
  \begin{align*}
      ActiveConfigSet \triangleq \{s \in Server : Active(C_{(s)})\}
  \end{align*}
\end{definition}


\newcommand{\proofref}[1]{\ref{#1}}

\subsection{Election Safety}
\label{appendix:elecsafety-proof}

In this section we present the proof of Lemma \ref{lemma:election-safety} along with auxiliary invariants required for the proof. We prove it inductively, by assuming that Lemmas \ref{lemma:primary-term-equals-cfg-term}, \ref{lemma:config-version-term-unique}, \ref{lemma:primary-in-newest-config-in-term}, \ref{lemma:active-cfgs-overlap}, \ref{lemma:active-cfgs-safe-at-terms}, and \ref{lemma:election-safety} of this section act as strengthening assumptions for the inductive hypothesis needed to prove Lemma \ref{lemma:election-safety}. That is, for the proof of each lemma $L$ in this group, we show that $L$ holds in the initial protocol states, and then show that, if all lemmas of this group hold in the current state, then $L$ holds in the next state, for any possible protocol transition. Lemmas \ref{lemma:deactivated-cannot-reconfig-or-elect}, \ref{lemma:configs-monotonic}, and \ref{lemma:config-deactivations-stable} are some additional, helpful facts that we establish first, and are useful for proving the lemmas in this and later sections.

\begin{lemma}[Deactivated configs cannot reconfig or elect primary]
  \label{lemma:deactivated-cannot-reconfig-or-elect}
  If server $i$ is in a deactivated configuration, $C_i$, then it cannot execute a $Reconfig(i)$ or $BecomeLeader(i)$ action.
\end{lemma}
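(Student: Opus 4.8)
The plan is to argue directly from the action preconditions in Algorithm~\ref{alg:mrr-pseudocode-full}: being in a deactivated configuration falsifies one precondition of each of the two actions. First I would unfold Definition~\ref{def:deactivated-config} applied to $C_{(i)}$, which states that for \emph{every} quorum $Q \in Quorums(config[i])$ there is a server $n \in Q$ with $C_{(n)} > C_{(i)}$. By Definition~\ref{def:config-ordering}, $C_{(n)} > C_{(i)}$ implies both $(C_{(n)}.v, C_{(n)}.t) \neq (C_{(i)}.v, C_{(i)}.t)$ and $\neg\,(C_{(i)} \geq C_{(n)})$, since the strict order and the ``equal $(v,t)$ pair'' disjunct defining $\geq$ are mutually exclusive. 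These two consequences are exactly what disable the two actions.

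For $Reconfig(i, m_{new})$ (for any $m_{new}$), one of its preconditions is $Q1(i)$, asserting the existence of a quorum $Q \in Quorums(config[i])$ all of whose members $j$ satisfy $(C_{(j)}.v, C_{(j)}.t) = (C_{(i)}.v, C_{(i)}.t)$. But every such $Q$ contains a server $n$ with $(C_{(n)}.v, C_{(n)}.t) \neq (C_{(i)}.v, C_{(i)}.t)$, so no witnessing $Q$ exists; hence $Q1(i)$ is false and $Reconfig(i, m_{new})$ is disabled regardless of $m_{new}$. For $BecomeLeader(i, Q)$, the precondition $Q \in Quorums(config[i])$ forces the chosen quorum to be a quorum of $C_{(i)}.m$, and the precondition $\forall v \in Q : C_{(i)} \geq C_{(v)}$ must hold; but deactivation supplies a server $n \in Q$ with $C_{(n)} > C_{(i)}$, i.e.\ $\neg\,(C_{(i)} \geq C_{(n)})$, contradicting that precondition. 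So $BecomeLeader(i, Q)$ is disabled for every choice of $Q$.

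I expect no substantive obstacle here; the proof is essentially definitional. The only points requiring care are (i) noting that the lemma implicitly quantifies over all action arguments, so the argument must rule out \emph{every} candidate quorum $Q$ (which it does, precisely because deactivation holds for all quorums of $config[i]$), and (ii) invoking the mutual exclusivity of the strict config order $>$ and $(v,t)$-equality, which follows from Definition~\ref{def:config-ordering}.
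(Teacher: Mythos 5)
Your proposal is correct and takes essentially the same route as the paper's proof: both argue that deactivation (every quorum of $config[i]$ contains a server with a strictly newer configuration) directly falsifies the $Q1(i)$ precondition of $Reconfig$ and the $\forall v \in Q : C_{(i)} \geq C_{(v)}$ precondition of $BecomeLeader$. Your version just spells out the definitional details (mutual exclusivity of $>$ and $(v,t)$-equality, quantification over all candidate quorums) slightly more explicitly than the paper does.
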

\begin{proof}
   We must consider \textit{Reconfig} and \textit{BecomeLeader} actions.
  \begin{itemize}
      \item $Reconfig(i)$ requires the \emph{Config Quorum Check} (Algorithm \ref{alg:mrr-pseudocode-full}, Line \ref{mrr-full:line-reconfig-precond-quorum-conds}) to be satisfied for $C_i$, the current configuration of primary server $i$. This requires that, for some quorum $Q \in Quorums(C_i.m)$, all servers in $Q$ are in configuration $C_i$. If $C_i$ is deactivated, though, all quorums of $C_i$ contain some server in configuration $> C_i$, violating this precondition.
      
      \item $BecomeLeader(i)$ elects a primary server $i$ in configuration $C_i$. It requires that a quorum of servers in $C_i$ have configurations $\leq C_i$ (Algorithm \ref{alg:mrr-pseudocode-full}, Line \ref{mrr-full:line-become-leader-config-precond}). If $C_i$ is deactivated, though, all quorums of $C_i$ contain some server in configuration $> C_i$, violating this precondition.
  \end{itemize}
\end{proof}

\begin{lemma}[Configs increase monotonically]
  \label{lemma:configs-monotonic}
  If Lemma \ref{lemma:active-cfgs-safe-at-terms} holds in the current state, then for all $s \in Server$, if $C_s$ is the configuration of server $s$ in the current state and $C_s'$ is the configuration of $s$ after any state transition, then $C_s' \geq C_s$.
\end{lemma}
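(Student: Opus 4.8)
The plan is to prove the statement by a case analysis on the action executed in the transition $\sigma \to \sigma'$. The observation driving the whole argument is that the configuration tuple $C_{(s)} = (config[s], configVersion[s], configTerm[s])$ of a server $s$ is modified by only three of the actions in Algorithm~\ref{alg:mrr-pseudocode-full}: $Reconfig$, $SendConfig$, and $BecomeLeader$. For every other action, and for every server $s$ not touched by one of these three in the given transition, we have $C_s' = C_s$, so $C_s' \geq C_s$ holds immediately from the reflexive clause $(C_s'.v, C_s'.t) = (C_s.v, C_s.t)$ of Definition~\ref{def:config-ordering}. Thus it suffices to check the three config-modifying actions for the single server whose configuration they change.

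Two of the three cases are straightforward. A $Reconfig(i, m_{new})$ action leaves $configTerm[i]$ unchanged and replaces $configVersion[i]$ by $configVersion[i] + 1$, so $C_i'.t = C_i.t$ while $C_i'.v = C_i.v + 1 > C_i.v$; hence $C_i' > C_i$. A $SendConfig(i,j)$ action overwrites $C_{(j)}$ with $C_{(i)}$ and carries the precondition $C_{(i)} > C_{(j)}$, so $C_j' = C_i > C_j$ directly. In both cases $C_s' = C_s$ for all other servers $s$.

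The delicate case --- and the reason Lemma~\ref{lemma:active-cfgs-safe-at-terms} is needed as a hypothesis --- is $BecomeLeader(i, Q)$. This action leaves $config[i]$ and $configVersion[i]$ fixed but sets $configTerm[i] \gets term[i] + 1$, so $C_i'.v = C_i.v$ and $C_i'.t = term[i] + 1$, and $C_i' \geq C_i$ reduces to the inequality $term[i] + 1 \geq configTerm[i]$. The plan here is: first, the precondition $\forall v \in Q : C_{(i)} \geq C_{(v)}$ with $Q \in Quorums(config[i])$ means the quorum $Q$ of $C_i$ contains no server with a configuration strictly greater than $C_i$, so $C_i$ is not deactivated in the sense of Definition~\ref{def:deactivated-config}, i.e. $C_i$ is active. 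If $configTerm[i] = 0$ the desired inequality is trivial, so assume $configTerm[i] = T > 0$. Since $C_i$ is active, Lemma~\ref{lemma:active-cfgs-safe-at-terms} yields that every quorum of $C_i$, in particular $Q$, contains some server $n$ with $term[n] \geq T$. But the other $BecomeLeader$ precondition, $\forall v \in Q : term[i] + 1 > term[v]$, applied to $n$ gives $term[n] \leq term[i]$. Chaining, $T \leq term[n] \leq term[i] < term[i] + 1$, so in fact $C_i' > C_i$, completing this case.

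The main obstacle is exactly this last case: because $BecomeLeader$ is the one action that rewrites the \emph{term} component of a configuration, one must rule out the situation in which a secondary $i$ has, via some earlier $SendConfig$, acquired a configuration whose term exceeds $term[i] + 1$ and then wins an election --- which would make its new configuration \emph{smaller} in the ordering. Ruling this out is precisely the content of the ``active configs safe at terms'' invariant, hence its appearance as an assumption; the only extra ingredient is the routine bookkeeping fact (immediate from the action definitions, and presumably already available as a standing invariant) that every nonzero $configTerm$ value occurring in a reachable state is a term in which a $BecomeLeader$ action took place, so that Lemma~\ref{lemma:active-cfgs-safe-at-terms} indeed applies to it. No base case is required, since the statement is a single-step property conditioned on Lemma~\ref{lemma:active-cfgs-safe-at-terms} holding in the pre-state.
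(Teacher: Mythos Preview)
Your proof is correct and follows essentially the same case analysis as the paper: only \textit{Reconfig}, \textit{SendConfig}, and \textit{BecomeLeader} touch configurations, the first two are immediate from their pre/postconditions, and the \textit{BecomeLeader} case uses activeness of $C_i$ together with Lemma~\ref{lemma:active-cfgs-safe-at-terms} and the term-voting precondition. The paper phrases the \textit{BecomeLeader} case as a contradiction (assume $C_i.t > term'[i]$) and invokes Lemma~\ref{lemma:deactivated-cannot-reconfig-or-elect} for activeness, whereas you argue activeness directly from the precondition and chain the inequalities forward, but the content is the same.

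One small point: your final paragraph introduces an ``extra ingredient'' that is not actually needed. Lemma~\ref{lemma:active-cfgs-safe-at-terms}, as stated, quantifies over \emph{all} servers $s$ and guarantees that every quorum of an active configuration contains a node with $term \geq configTerm[s]$; it does not require $configTerm[s]$ to have arisen from a past election. So you may simply instantiate it with $s = t = i$ and obtain a node $n \in Q$ with $term[n] \geq configTerm[i]$ directly, without any auxiliary bookkeeping fact. Dropping that caveat makes your argument cleaner and exactly matches the paper's.
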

\begin{proof}
   We must consider any actions that modify server configurations: \textit{Reconfig}, \textit{BecomeLeader}, \textit{SendConfig}.
  \begin{itemize}
      \item $Reconfig(i)$ updates the configuration on a primary server from $C_i$ to $C_i'$, where $(C_i.m, C_i.t) = (C_i'.m, C_i'.t)$ and $C_i'.v > C_i.v$. So, monotonicity is upheld, by the definition of configuration ordering (Definition \proofref{def:config-ordering}).
      
      \item $BecomeLeader(i)$ elects a primary server $i$ in $term'[i]$ and updates the configuration of $i$ from $C_i$ to $C_i'$, where $(C_i.m, C_i.v) = (C_i'.m, C_i'.v)$ and $C_i'.t = term'[i]$. So, it is sufficient to show that $term'[i] \geq C_i.t$. Assume this were not the case i.e. that $C_i.t > term'[i]$. Since $C_i$ must be active in order for the $BecomeLeader(i)$ action to occur, by Lemma \ref{lemma:deactivated-cannot-reconfig-or-elect}, this would imply that all quorums of $C_i$ contain some server in term $\geq C_i.t$, by Lemma \proofref{lemma:active-cfgs-safe-at-terms}. This would prevent the $BecomeLeader(i)$ action from electing a primary in $term'[i]$, though, since $C_i.t > term'[i]$. So, it must be that $term'[i] \geq C_i.t$, implying that $C_i'.t \geq C_i.t$.
      
      \item $SendConfig(i,j)$ updates a configuration on server $j$ to the configuration of server $i$. It follows directly from the precondition of $SendConfig$ (Algorithm \ref{alg:mrr-pseudocode-full}, Line \ref{mrr-full:line-send-config-precond-gt}) that $C_{(i)} > C_{(j)}$, so monotonicity is upheld, by the definition of configuration ordering (Definition \proofref{def:config-ordering}).
  \end{itemize}
\end{proof}

\begin{lemma}[Config deactivation stability]
  \label{lemma:config-deactivations-stable}
  If Lemma \ref{lemma:configs-monotonic} holds in the current state, then if a configuration $C$ is deactivated in the current state, $C$ cannot be active in the next state.
\end{lemma}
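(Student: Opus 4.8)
The plan is to derive stability directly from monotonicity of configurations. Fix a configuration $C$ that is deactivated in the current state; by Definition~\ref{def:deactivated-config}, for every $Q \in Quorums(C.m)$ there is a server $n \in Q$ with $C_{(n)} > C$. The first observation I would make is that $C$ here is a concrete configuration value, so its member set $C.m$ — and hence the set $Quorums(C.m)$ — is identical in the current and next states; only the per-server configurations $C_{(n)}$ can change across a transition. Therefore it suffices to show that, for each $Q$, the witness $n$ already chosen still satisfies $C_{(n)}' > C$ in the next state, where $C_{(n)}'$ denotes the configuration of $n$ after the transition.

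First I would invoke Lemma~\ref{lemma:configs-monotonic} (whose hypothesis, Lemma~\ref{lemma:active-cfgs-safe-at-terms}, is available here since all these lemmas are assumed to hold in the current state as part of the inductive bundle), which gives $C_{(n)}' \geq C_{(n)}$ for every server $n$. Next I would unfold $\geq$ via Definition~\ref{def:config-ordering}: either $C_{(n)}' > C_{(n)}$, or $(C_{(n)}'.v, C_{(n)}'.t) = (C_{(n)}.v, C_{(n)}.t)$. In the first case, transitivity of the strict order $>$ — which is just lexicographic order on the $(term, version)$ pair — together with $C_{(n)} > C$ yields $C_{(n)}' > C$. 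In the second case, since $>$ depends only on the version and term components, $C_{(n)} > C$ immediately gives $C_{(n)}' > C$. Either way the witness survives, so every quorum of $C.m$ still contains a server with configuration strictly greater than $C$; that is, $Deactivated(C)$ holds in the next state.

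There is no substantive obstacle: the statement is essentially a corollary of Lemma~\ref{lemma:configs-monotonic}. The only points needing (brief) care are the two observations above — that the member set of the \emph{fixed} configuration $C$ does not change under a transition, so the collection of quorums to be checked is unaffected, and that the non-strict monotonicity $C_{(n)}' \geq C_{(n)}$ suffices to carry a strict inequality $C_{(n)} > C$ forward, because the ordering ignores the member-set component and the ``equal'' branch of $\geq$ pins down exactly the version and term.
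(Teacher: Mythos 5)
Your proof is correct and follows essentially the same route as the paper: both arguments observe that a deactivated configuration's quorum witnesses persist because server configurations increase monotonically (Lemma~\ref{lemma:configs-monotonic}), so $C_{(n)} > C$ continues to hold in the next state. Your version merely spells out the details the paper leaves implicit (that $Quorums(C.m)$ is unchanged for the fixed configuration $C$, and that $C_{(n)}' \geq C_{(n)}$ combined with $C_{(n)} > C$ yields $C_{(n)}' > C$ via the ordering on $(version, term)$), which is fine.
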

\begin{proof}
  A configuration $C$ is deactivated if, for all $Q \in Quorums(C.m)$, there exists some server $n\in Q$ such that $C_n > C$, where $C_n$ is the configuration of server $n$. Since configurations increase monotonically on servers (Lemma \proofref{lemma:configs-monotonic}), if $C_n > C$ holds currently for some server $n$, it must hold in the next state.
\end{proof}

\begin{lemma}[Primary term equals config term]
  \label{lemma:primary-term-equals-cfg-term}
  For all $i \in Server$, if $i$ is currently primary in $term[i]$ in configuration $C$, then $C.t = term[i]$.
\end{lemma}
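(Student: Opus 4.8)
The plan is to establish Lemma~\ref{lemma:primary-term-equals-cfg-term} as an inductive invariant, using the same scheme as the rest of this section: show it holds in every initial state, and show it is preserved by every protocol action of Algorithm~\ref{alg:mrr-pseudocode-full}, assuming the lemmas of this group hold in the current state. Restated in terms of state variables (recall $C_{(i)}.t = configTerm[i]$), the claim is: for every $i \in Server$, $state[i] = Primary \Rightarrow configTerm[i] = term[i]$. For the base case, every initial state has $state[i] = Secondary$ for all $i$, so the implication holds vacuously.

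For the inductive step, I would first observe that the invariant for a server $i$ can only be falsified in the next state by an action that either (i) sets $state[i] \gets Primary$, (ii) changes $term[i]$ while leaving $state[i] = Primary$, or (iii) changes $configTerm[i]$ while leaving $state[i] = Primary$, and then go through the actions of Algorithm~\ref{alg:mrr-pseudocode-full} one by one. The only action that ever sets a server's state to $Primary$ is $BecomeLeader(i,Q)$, which does so for $i$ alone while in the same step assigning $term[i] \gets term[i]+1$ and $configTerm[i] \gets term[i]+1$; hence $configTerm'[i] = term'[i]$, and for each $j \in Q \setminus \{i\}$ it sets $state[j] \gets Secondary$, making the implication for $j$ vacuous. $UpdateTerms(i,j)$ raises $term[j]$ but simultaneously sets $state[j] \gets Secondary$, so the implication for $j$ is again vacuous, and no other server's relevant variables change. $Reconfig(i,m_{new})$ requires $state[i] = Primary$ and modifies only $config[i]$ and $configVersion[i]$, so by the induction hypothesis the equality $configTerm[i] = term[i]$ is carried over unchanged. $SendConfig(i,j)$ overwrites $configTerm[j]$, but its precondition forces $state[j] = Secondary$ and it does not touch $state[j]$, so the implication for $j$ stays vacuous. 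The remaining actions $ClientRequest$, $GetEntries$, $RollbackEntries$, and $CommitEntry$ modify only $log$ and $committed$ and hence preserve the invariant trivially. Since every action preserves the invariant, the lemma follows.

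This argument is a routine inductive invariant and, as far as I can see, does not actually require the other lemmas of the group; the only points needing care are the exhaustiveness of the case analysis and two structural observations about the protocol: that $BecomeLeader$ is the unique action that creates a primary and keeps $term[i]$ and $configTerm[i]$ in lockstep, and that $UpdateTerms$ always demotes the affected server to $Secondary$ whenever it raises that server's term, so a ``stepped-down'' former primary never witnesses a violation. I do not expect a significant obstacle beyond making sure every action in Algorithm~\ref{alg:mrr-pseudocode-full} is accounted for.
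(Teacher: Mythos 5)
Your proposal is correct and follows essentially the same route as the paper's proof: an inductive case analysis over actions, with the key observations that $BecomeLeader$ sets $term[i]$ and $configTerm[i]$ in lockstep, $UpdateTerms$ demotes the affected server to $Secondary$, and $SendConfig$ only touches secondaries. Your version is merely a bit more exhaustive (explicitly dispatching $Reconfig$ and the log-only actions, which the paper excludes up front as unable to affect the relevant variables), and your remark that the other lemmas of the group are not needed matches the paper, whose proof likewise uses none of them.
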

\begin{proof}
  In all initial states, no server is primary, so it holds. The only actions that could falsify the lemma are those that change the term of a server's local configuration, its current term, or its primary status: \textit{SendConfig}, \textit{UpdateTerms}, \textit{BecomeLeader}.
  \begin{itemize}
      \item The \textit{SendConfig} action can only update the configuration of a server that is not currently primary, so such a transition could not falsify this lemma in the next state.
      
      \item \textit{UpdateTerms} cannot change the term of a primary server, so it upholds the lemma.
      
      \item $BecomeLeader(i)$ elects server $i$ as primary in $term'[i]$, and it sets $configTerm[i] \leftarrow term'[i]$, so it upholds the lemma.
  \end{itemize}
\end{proof}

\begin{lemma}[Config version and term unique]
  \label{lemma:config-version-term-unique}
  For all servers $i$ and $j$, in configurations $C_i$ and $C_j$, respectively, if $(C_i.v,C_i.t) = (C_j.v,C_j.t)$ then $C_i.m = C_j.m$.
  \begin{align*}
    \forall i,j &\in Server : \\
    &((configVersion[i],configTerm[i]) = (configVersion[j],configTerm[j])) \Rightarrow \\
    &(config[i] = config[j])
  \end{align*}
\end{lemma}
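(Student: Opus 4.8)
The plan is to prove Lemma~\ref{lemma:config-version-term-unique} as one more clause of the simultaneous induction of Section~\ref{appendix:elecsafety-proof}: I would show it holds in every initial state and is preserved by every protocol transition, while freely assuming all lemmas of this group---most importantly Lemmas~\ref{lemma:primary-term-equals-cfg-term}, \ref{lemma:primary-in-newest-config-in-term}, \ref{lemma:active-cfgs-safe-at-terms}, \ref{lemma:deactivated-cannot-reconfig-or-elect}, and \ref{lemma:election-safety}---in the pre-state. The base case is immediate, since initially every server holds $(m_{init},1,0)$, so any two servers agreeing on $(version,term)$ trivially agree on member set. For the inductive step, the only actions that write any of $config$, $configVersion$, or $configTerm$ are $Reconfig$, $BecomeLeader$, and $SendConfig$; every other action ($UpdateTerms$, $ClientRequest$, $GetEntries$, $RollbackEntries$, $CommitEntry$) leaves all three fixed and preserves the invariant trivially, so only these three need analysis. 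I would also record the auxiliary observation, read off directly from Algorithm~\ref{alg:mrr-pseudocode-full}, that a configuration term is created only by a $BecomeLeader$ action---$Reconfig$ leaves $configTerm$ unchanged, $SendConfig$ copies the whole tuple, and the initial term is $0$---equivalently, if any server currently holds a configuration with term $t>0$, then an election occurred in term $t$.

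The $SendConfig$ and $Reconfig$ cases I expect to be routine. $SendConfig(i,j)$ overwrites $C_{(j)}$ with the already-existing tuple $C_{(i)}$, so any post-state pair of servers agreeing on $(version,term)$ with the new $C_{(j)}$ corresponds to a pair already reconciled by the inductive hypothesis, and no member set changes. $Reconfig(i)$ installs $(m_{new},\,C_{(i)}.v+1,\,C_{(i)}.t)$ on a primary $i$, where $C_{(i)}.t=term[i]$ by Lemma~\ref{lemma:primary-term-equals-cfg-term}. The crux is that in the pre-state no configuration carries term $C_{(i)}.t$ together with a version strictly greater than $C_{(i)}.v$: $i$ is a primary holding $C_{(i)}$ in that term, so by Lemma~\ref{lemma:primary-in-newest-config-in-term} its configuration is the newest bearing that term. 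Hence the pair $(C_{(i)}.v+1,\,C_{(i)}.t)$ is previously unused, the freshly written configuration is its unique bearer, all other servers are untouched, and the clause holds in the next state.

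For $BecomeLeader(i,Q)$ the member set and version of $C_{(i)}$ are unchanged and only its term is rewritten to $t'=term[i]+1$, so it suffices to show that no pre-state configuration has term $t'$. Suppose one did. By the auxiliary observation there was a prior election in term $t'$. Since $BecomeLeader(i,Q)$ is enabled, $config[i]$ is active (Lemma~\ref{lemma:deactivated-cannot-reconfig-or-elect}), so Lemma~\ref{lemma:active-cfgs-safe-at-terms} forces every quorum of $config[i]$---in particular $Q$---to contain a server with term $\ge t'$; this contradicts the $BecomeLeader$ precondition that $term[v] < t'$ for all $v \in Q$ (Algorithm~\ref{alg:mrr-pseudocode-full}, Line~\ref{mrr-full:line-become-leader-pre-newer-term}). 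Therefore no pre-state configuration has term $t'$, the rewritten $C_{(i)}$ is the unique configuration bearing $t'$, and no collision on $(version,term)$ is created.

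I expect the $BecomeLeader$ case to be the main obstacle: it is where the ``freshness'' of the new term must be pinned down, and that relies on combining the quorum-intersection invariants (Lemmas~\ref{lemma:active-cfgs-overlap} and~\ref{lemma:active-cfgs-safe-at-terms}), which are available only as other clauses of the same induction, with the syntactic fact that configuration terms originate solely from elections. Secondary care points are confirming that the ``trivial'' actions genuinely leave all three configuration variables fixed, and checking that the appeal to Lemma~\ref{lemma:primary-in-newest-config-in-term} in the $Reconfig$ case is used exactly as that lemma's statement permits (i.e.\ that it indeed bounds the versions of same-term configurations by the primary's).
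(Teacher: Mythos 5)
Your proposal is correct and follows essentially the same route as the paper: the same simultaneous induction, the same reduction to the \textit{Reconfig}, \textit{SendConfig}, and \textit{BecomeLeader} cases, with \textit{Reconfig} handled via Lemma~\ref{lemma:primary-in-newest-config-in-term} and \textit{BecomeLeader} via Lemma~\ref{lemma:deactivated-cannot-reconfig-or-elect}, Lemma~\ref{lemma:active-cfgs-safe-at-terms}, and the voting precondition on terms. The only cosmetic difference is your auxiliary observation that configuration terms originate from elections, which is unnecessary since Lemma~\ref{lemma:active-cfgs-safe-at-terms} already applies to any existing configuration.
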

\begin{proof}
  In all initial states, every server configuration is identical, so the lemma holds. The only actions that could falsify Lemma \proofref{lemma:config-version-term-unique} in the next state are those that modify configurations on a server: \textit{BecomeLeader}, \textit{SendConfig}, and \textit{Reconfig}. 
  \begin{itemize}
      \item A $BecomeLeader(i)$ action elects a server $i$ as primary in $term'[i]$ and updates its configuration from $C_i$ to $C_i'$, where $(C_i'.m,C_i'.v) = (C_i.m,C_i.v)$ and $C_i'.t > C_i.t$ (by Lemma \ref{lemma:configs-monotonic}). Since this action only modifies the configuration of server $i$, the only way Lemma \proofref{lemma:config-version-term-unique} could be falsified in the next state is if, in the current state, there was a server $j \neq i$ in configuration $C_j$ such that $(C_j.v,C_j.t) = (C_i'.v,C_i'.t)$ and $C_j.m \neq C_i'.m$. If $C_j.t = C_i'.t$, though, this implies, by Lemma \proofref{lemma:active-cfgs-safe-at-terms}, that all quorums of active configurations in the current state must contain some server in term $\geq C_i'.t$. If this were the case, though, the $BecomeLeader(i)$ action could not have occurred to elect $i$ as primary in term $C_i'.t$, since $C_i$ must be active, by Lemma \ref{lemma:deactivated-cannot-reconfig-or-elect}, and the voting precondition on terms (Algorithm \ref{alg:mrr-pseudocode-full}, Line \ref{mrr-full:line-become-leader-pre-newer-term}) would have prevented it. 
      
      \item $SendConfig(i,j)$ updates the configuration on server $j$ to that of server $i$, and doesn't modify the state of any other server. So, the set of unique configurations in the system can only be reduced or left the same by this action. So, if Lemma \proofref{lemma:config-version-term-unique} held currently, it must hold in the next state.
      
      %
      %
      %
      %
      \item $Reconfig(i)$ updates the configuration on a primary server $i$ from $C_i$ to $C_i'$, where $C_i'.t = term[i]$ and $C_i'.v > C_i.v$. By Lemma \ref{lemma:election-safety}, we know that there is a unique primary per term in the current state, and, since $Reconfig(i)$ doesn't modify the primary status or term of any server, there must still be a unique primary per term in the next state. So, server $i$ is the only primary in $term[i]$, and by Lemma \ref{lemma:primary-in-newest-config-in-term}, $i$ contains the newest configuration in $term[i]$. So, the version of $i$'s new configuration, $C_i'.v$, will be greater than all other configurations in $term[i]$, implying that the configuration $C_i'$ will be unique among all existing configurations, upholding the lemma.
  \end{itemize}
\end{proof}

\begin{lemma}[Primary contains newest config of term]
  \label{lemma:primary-in-newest-config-in-term}
  For all $i \in Server$, if $i$ is currently a primary in $term[i]$, then it contains the newest configuration in $term[i]$.
  \begin{align*}
    \forall i,j &\in Server : \\
    &(state[i] = Primary \wedge configTerm[j] = term[i]) \Rightarrow\\ 
    &(configVersion[j] \leq configVersion[i]) 
  \end{align*}
\end{lemma}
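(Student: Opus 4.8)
The plan is to prove Lemma~\ref{lemma:primary-in-newest-config-in-term} by induction over protocol transitions, in the same style as the other lemmas of this group: I assume all of Lemmas~\ref{lemma:deactivated-cannot-reconfig-or-elect}--\ref{lemma:primary-in-newest-config-in-term} hold in the current state and show that this lemma holds in the next state. The base case is immediate: in every initial state no server is $Primary$, so the antecedent $state[i]=Primary$ is false and the implication holds vacuously. For the inductive step, only actions that create a new primary, raise some server's $term$, or change some configuration's $configTerm$ or $configVersion$ can possibly introduce a violation; the \emph{MongoStaticRaft} actions ($ClientRequest$, $GetEntries$, $RollbackEntries$, $CommitEntry$) touch none of these, so it suffices to consider $Reconfig$, $SendConfig$, $BecomeLeader$, and $UpdateTerms$.

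For $Reconfig(k,m_{new})$: $k$ is primary, and the action only increments $configVersion[k]$, leaving every $term$, $state$, and every other $configVersion$ unchanged. By Lemma~\ref{lemma:primary-term-equals-cfg-term}, $configTerm[k]=term[k]$, and by Lemma~\ref{lemma:election-safety} $k$ is the unique primary in $term[k]$, so the only server whose version changed and which satisfies $configTerm[\cdot]=term[p]$ for a primary $p$ is $k$ itself with $p=k$; since its version only grows, the inequality is preserved (and for $j=k$ it is trivial equality). For $SendConfig(i,j)$: the precondition forces $state[j]=Secondary$, hence $j$ differs from every primary, and the action changes only $C_{(j)}$, setting it to $C_{(i)}$ while leaving all $term$ and $state$ variables fixed; thus every primary $p$ of the current state remains a primary of the next state with the same term. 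A new violation would require $configTerm[i]=term[p]$ and $configVersion[i]>configVersion[p]$ for some such $p$; but by Lemma~\ref{lemma:primary-term-equals-cfg-term}, $configTerm[p]=term[p]=configTerm[i]$, and $i\neq p$ (if $i=p$ the needed strict inequality is false), so the induction hypothesis of this lemma applied in the current state to the primary $p$ and the server $i$ gives $configVersion[i]\le configVersion[p]$ --- a contradiction. For $UpdateTerms(i,j)$: the action only raises $term[j]$ and sets $state[j]=Secondary$, changing no $configTerm$ or $configVersion$; it can only remove primary status, never create it, so no primary's relevant state changes and no new violating pair arises.

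The main case is $BecomeLeader(i,Q)$, which elects $i$ in term $T'=term[i]+1$, sets $configTerm[i]\gets T'$, and leaves $configVersion[i]$ unchanged. The crux is the claim that \emph{no server has $configTerm=T'$ in the current state}. A positive $configTerm$ value is introduced only by a $BecomeLeader$ action (and merely copied thereafter by $SendConfig$), so if some server currently had $configTerm=T'$, then an election occurred in term $T'$, and Lemma~\ref{lemma:active-cfgs-safe-at-terms} then guarantees that every quorum of every active configuration contains a server with $term\ge T'$. Since $config[i]$ must be active for $BecomeLeader(i,Q)$ to be enabled (Lemma~\ref{lemma:deactivated-cannot-reconfig-or-elect}) and $Q\in Quorums(config[i])$, the quorum $Q$ would then contain a server with $term\ge T'$, contradicting the precondition $term[i]+1>term[v]$ for all $v\in Q$ (Algorithm~\ref{alg:mrr-pseudocode-full}, Line~\ref{mrr-full:line-become-leader-pre-newer-term}). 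Hence after the step the only server with $configTerm=T'$ is $i$ itself, so the lemma holds for the new primary $i$ (the consequent reduces to $configVersion[i]\le configVersion[i]$). For any other primary $p$ of the next state: by Lemma~\ref{lemma:election-safety} $p$ is primary in a term $\neq T'$, its $term$ and $configVersion$ are unchanged, and the only $configTerm$ that changed is $i$'s (now $T'\neq term[p]$); if $i$'s \emph{old} $configTerm$ happened to equal $term[p]$, that merely removes a server from the set constrained by the lemma for $p$, which cannot create a violation. Servers in $Q\setminus\{i\}$ become $Secondary$, hence are not primaries of the next state and impose no new obligation.

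The step I expect to be the main obstacle is the $BecomeLeader$ case, and within it precisely the ``no stale configuration with term $T'$'' claim: this is the only genuinely non-local part of the argument, and it must lean on the term-transfer invariant (Lemma~\ref{lemma:active-cfgs-safe-at-terms}) together with the facts that an electing server's configuration is active (Lemma~\ref{lemma:deactivated-cannot-reconfig-or-elect}) and that configurations evolve monotonically and deactivations are stable (Lemmas~\ref{lemma:configs-monotonic} and~\ref{lemma:config-deactivations-stable}). The remaining cases are routine bookkeeping relying only on \emph{Election Safety} (Lemma~\ref{lemma:election-safety}) and \emph{Primary term equals config term} (Lemma~\ref{lemma:primary-term-equals-cfg-term}).
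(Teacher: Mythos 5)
Your proof is correct and follows essentially the same route as the paper's: an induction over the four relevant actions, with the decisive \emph{BecomeLeader} case closed by combining Lemma~\ref{lemma:deactivated-cannot-reconfig-or-elect}, Lemma~\ref{lemma:active-cfgs-safe-at-terms}, and the voting precondition on terms, and with \emph{Reconfig}/\emph{SendConfig} handled via Lemmas~\ref{lemma:election-safety} and~\ref{lemma:primary-term-equals-cfg-term} plus the inductive hypothesis. The only cosmetic difference is your historical detour (``a positive $configTerm$ is introduced only by an election'') --- Lemma~\ref{lemma:active-cfgs-safe-at-terms} applies directly to any existing server with $configTerm = T'$, so that step is unnecessary but harmless.
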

\begin{proof}
  The only actions that could falsify this lemma are those that affect the primary status of a server or modify configurations or terms on a server: \textit{BecomeLeader}, \textit{Reconfig}, \textit{SendConfig}, and \textit{UpdateTerms}.
  \begin{itemize}
      \item $BecomeLeader(i)$ elects a primary server $i$ in $term'[i]$. It updates the configuration of server $i$ from configuration $C_i$ to $C_i'$, where $C_i'.t = term'[i]$ and $C_i.v = C_i'.v$. Since this action doesn't modify the state of any other server, in order for Lemma \proofref{lemma:primary-in-newest-config-in-term} to be falsified in the next state, it must be that, in the current state, there exists some server $j \neq i$, with configuration $C_j$, such that $C_j.t = term'[i]$ and $C_j.v > C_i'.v$. That is, server $j$ contains a configuration in the term of primary $i$ after its election but $j$'s configuration is newer than $i$'s. If $C_j.t = term'[i]$ in the current state, though, this implies, by the assumption of Lemma \proofref{lemma:active-cfgs-safe-at-terms}, that the quorums of all active configurations contain some server in term $\geq C_j.t$. If this were the case, then the $BecomeLeader(i)$ action could not have occurred to elect $i$ as primary in $term'[i]$, since $C_i$ must be active in the current state, by Lemma \ref{lemma:deactivated-cannot-reconfig-or-elect}, and the voting precondition on terms (Algorithm \ref{alg:mrr-pseudocode-full}, Line \ref{mrr-full:line-become-leader-pre-newer-term}) would have prevented it.
      
      \item $Reconfig(i)$ updates the configuration of a primary server $i$ from $C_i$ to $C_i'$, where $C_i'.v > C_i.v$ and $C_i'.t = C_i.t$. By Lemma \proofref{lemma:election-safety}, we know that there is a unique primary in a given term, and since $Reconfig(i)$ doesn't modify the terms or primary status of any server, this will continue to hold in the next state. And, by assumption of Lemma \proofref{lemma:primary-in-newest-config-in-term} in the current state, a primary has the newest configuration in its term. So, the new configuration created by $i$ must be the newest configuration in $term[i]$, and it will be contained on server $i$, the unique primary of $term[i]$.
      
      \item $SendConfig(i,j)$ updates the configuration on a secondary server $j$ to that of server $i$, and doesn't modify the state of any other server. So, no primary configuration is modified and the set of unique configurations in the system can only be reduced or left the same by this action, so if Lemma \proofref{lemma:primary-in-newest-config-in-term} held currently, it must hold in the next state.
      
      \item \textit{UpdateTerms} only updates server terms, and if it updates the term of a server $s$ it sets $state[s] \leftarrow Secondary$, so it could not falsify the lemma.
  \end{itemize}
\end{proof}

%
%
%
%

\begin{lemma}[Active configs overlap]
  \label{lemma:active-cfgs-overlap}
  All quorums of any two active configurations overlap.
  \begin{align*}
      \forall s,t \in ActiveConfigSet : QuorumsOverlap(config[s], config[t])
  \end{align*}
\end{lemma}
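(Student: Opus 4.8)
The plan is to prove Lemma~\ref{lemma:active-cfgs-overlap} by induction over protocol transitions, jointly with the other lemmas of its group: I would show it holds in every initial state, and that every action preserves it under the assumption that all lemmas of the group (this one included) hold in the pre-state. The base case is immediate: every server starts in the common configuration $(m_{init},1,0)$, no server holds a strictly larger configuration so $(m_{init},1,0)$ is active, and any two quorums of the same nonempty member set intersect (since $|q_i|+|q_j|>|m_{init}|$), so $QuorumsOverlap(m_{init},m_{init})$ holds.

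For the inductive step, observe that only \textit{Reconfig}, \textit{BecomeLeader}, and \textit{SendConfig} modify the configuration variables at all; \textit{UpdateTerms} and the oplog actions (\textit{ClientRequest}, \textit{GetEntries}, \textit{RollbackEntries}, \textit{CommitEntry}) leave $config$, $configVersion$, $configTerm$ untouched, and activeness depends only on these, so they preserve the lemma trivially. By Lemma~\ref{lemma:config-deactivations-stable}, no transition turns a deactivated configuration into an active one, so every configuration that is active in the post-state either was already present on some server in the pre-state (whence pairs among such configurations are covered by the inductive hypothesis) or was freshly written by the transition; thus it suffices to check overlap between each freshly written active configuration and every other active configuration. \textit{SendConfig}$(i,j)$ writes no new configuration tuple --- it copies $i$'s existing configuration onto $j$ --- so if $C_{(i)}$ is active afterwards it was active before on $i$, and the inductive hypothesis carries over. \textit{BecomeLeader}$(i,Q)$ writes $C_i'=(config[i],configVersion[i],term[i]+1)$ on $i$, whose member set equals that of $i$'s previous configuration $C_i$, which was active by Lemma~\ref{lemma:deactivated-cannot-reconfig-or-elect}; since $QuorumsOverlap$ depends only on member sets, the needed overlaps are exactly those the inductive hypothesis already supplies.

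The main obstacle is \textit{Reconfig}$(i,m_{new})$, which replaces $i$'s active configuration $C_i$ by $C_i'=(m_{new},C_i.v+1,C_i.t)$ with a genuinely new member set. The precondition only gives $QuorumsOverlap(C_i.m,m_{new})$, and since $QuorumsOverlap$ is not transitive this does not by itself yield overlap with the other active configurations. I would close the gap by showing that \emph{at reconfiguration time $C_i$ is the only active configuration}: every server whose configuration is active is in configuration exactly $C_i$. Granting this, the only pair to verify is $(C_i',C_i)$ --- i.e.\ $QuorumsOverlap(m_{new},C_i.m)$, which is precisely the precondition --- together with $(C_i',C_i')$, which is trivial. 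To prove the uniqueness claim, take any active server $t$ with configuration $C_t\neq C_i$ and case on the comparison with $C_i$: if $C_t.t\le C_i.t$ then (using Lemma~\ref{lemma:primary-term-equals-cfg-term}, Lemma~\ref{lemma:primary-in-newest-config-in-term}, and Lemma~\ref{lemma:config-version-term-unique}) one gets $C_t<C_i$, and then precondition $Q1(i)$ supplies a quorum $Q$ of $C_i.m$ all in configuration $C_i$, which by the inductive hypothesis meets every quorum of $C_t.m$, so $C_t$ is deactivated --- a contradiction; if $C_t.t>C_i.t$ then, since the term $C_t.t$ was produced by some election, Lemma~\ref{lemma:active-cfgs-safe-at-terms} forces every quorum of the active configuration $C_i.m$ to contain a server in term $\ge C_t.t>C_i.t=term[i]$, contradicting precondition $Q2(i)$. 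I expect the last case --- in particular making precise that ``an election occurred in term $C_t.t$'' so that Lemma~\ref{lemma:active-cfgs-safe-at-terms} applies --- to be the fiddliest step.
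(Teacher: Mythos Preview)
Your proposal is correct and mirrors the paper's proof: the same inductive case split over the configuration-modifying actions, with the \textit{Reconfig} case resolved by comparing any other active $C_t$ to $C_i$ and deriving a contradiction either from $Q1$ plus the inductive overlap hypothesis (when $C_t<C_i$) or from $Q2$ plus Lemma~\ref{lemma:active-cfgs-safe-at-terms} (when $C_t.t>C_i.t$); your ``$C_i$ is the unique active configuration'' framing is just a repackaging of the paper's direct contradiction argument. Your one anticipated difficulty is not one: Lemma~\ref{lemma:active-cfgs-safe-at-terms} is stated for any \emph{existing configuration} $C$, not for elections, so the mere presence of $C_t$ on some server with $C_t.t>C_i.t$ already triggers it---no appeal to a past election in term $C_t.t$ is needed.
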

\begin{proof}
  In all initial states, there is only a single, unique configuration, so the lemma holds. If Lemma \ref{lemma:active-cfgs-overlap} holds in the current state, then the only actions that could possibly falsify it in the next state are those that affect server configurations: \textit{BecomeLeader}, \textit{SendConfig}, and \textit{Reconfig}. Let $Active$ refer to the set of active configurations that exist in the current state, and $Active'$ to the set of active configurations that exist in the next state.

  \begin{itemize}
      \item $BecomeLeader(i)$ updates the configuration of a server $i$ from $C_i$ to $C_i'$, where $(C_i'.m,C_i'.v) = (C_i.m,C_i.v)$.
      This does not change the member set of any existing configuration, and, due to Lemma \ref{lemma:config-deactivations-stable}, cannot create any new active configuration other than $C_i'$, so if the quorums of all configurations in \textit{Active} overlapped, all of those in $Active'$ should still overlap.
      
      \item The $SendConfig(i,j)$ action updates the configuration of server $j$ to $C_i$, the configuration of server $i$. This action does not create any new configurations, so the only way it could falsify Lemma \proofref{lemma:active-cfgs-overlap} in the next state is if it made $C_i$ active in the next state, and the quorums of $C_i$ did not overlap with some other, active configuration. By Lemma \proofref{lemma:config-deactivations-stable}, though, $C_i$ cannot become active in the next state if it was not already active, so the lemma must hold.
      
      \item $Reconfig(i)$ updates the configuration of a primary server $i$ from $C_i$ to $C_i'$. To falsify the lemma in the next state, it must be that $C_i' \in Active'$ and there exists a server $j$, in configuration $C_j$, such that $C_j \in Active'$ and $\neg QuorumsOverlap(C_i'.m, C_j.m)$. We know, by the precondition enforced by the $Reconfig(i)$ action (Algorithm \ref{alg:mrr-pseudocode-full}, Line \ref{mrr-full:line-reconfig-precond-quorums-overlap}), that $QuorumsOverlap(C_i'.m, C_i.m)$, so it must be that $C_j.m \neq C_i.m$, implying, due to Lemma \ref{lemma:config-version-term-unique}, that $(C_j.v, C_j.t) \neq (C_i.v, C_i.t)$.
      In addition, we know that $C_i$ must have been active in order for the $Reconfig(i)$ to occur (Lemma \ref{lemma:deactivated-cannot-reconfig-or-elect}). So, there are now two cases to consider:
    \begin{itemize}
      \item $C_j > C_i$. \\
      We know that $C_i.t = C_i'.t$, by the definition of the $Reconfig$ action. It must also be the case that $C_j.t \neq C_i.t$. Otherwise, it would imply that $C_j.t = C_i.t \wedge C_j.v > C_i.v$, which cannot hold since there is a unique primary per term (Lemma \ref{lemma:election-safety}) and a primary contains the newest configuration of its term (Lemma \ref{lemma:primary-in-newest-config-in-term}). So, it must be that $C_j.t > C_i.t$. If $C_j$ exists, though, by Lemma \proofref{lemma:active-cfgs-safe-at-terms}, all quorums of $C_i$ must contain some server in term $\geq C_j.t$, since we know that $C_i$ is active. But, since $C_j.t > C_i.t$, this would imply the \emph{Term Quorum Check} precondition, $Q2(i)$ (Algorithm \ref{alg:mrr-pseudocode-full}, Line \ref{mrr-full:line-reconfig-precond-quorum-conds}) could not have been satisfied in the current state, preventing the $Reconfig(i)$ action from occurring.

      \item  $C_j < C_i$ \\ 
      By the assumption of Lemma \ref{lemma:active-cfgs-overlap} in the current state, we know that all quorums of $C_j$ and $C_i$ overlap, since both configurations are active. In order for the $Reconfig(i)$ action to occur, though, there must have been some quorum $Q \in Quorums(C_i.m)$ such that, for all servers $n \in Q$, in configuration $C_n$, $(C_n.v, C_n.t) = (C_i.v, C_i.t)$. This is ensured by $Q1(i)$, the \emph{Config Quorum Check} precondition (Algorithm \ref{alg:mrr-pseudocode-full}, Line \ref{mrr-full:line-reconfig-precond-quorum-conds}). If $C_i > C_j$, though, and $QuorumsOverlap(C_i.m, C_j.m)$, this would imply that all quorums of $C_j$ contain some server $m \in Q$, in configuration $C_m$. Since we know that $(C_m.v, C_m.t) = (C_i.v, C_i.t)$ and $C_i > C_j$, this implies that $C_j \notin Active$, which additionally implies that $C_j \notin Active'$, by Lemma \ref{lemma:config-deactivations-stable}, contradicting our assumption that $C_j \in Active'$.
    \end{itemize}
\end{itemize}


\end{proof}

\begin{lemma}[Active configs safe from past terms]
  \label{lemma:active-cfgs-safe-at-terms}
  For any existing configurations $C_a$ and $C$, where $C_a$ is active, 
  all quorums of $C_a$ contain some server in term $\geq C.t$.
  \begin{align*}
      &\forall s \in Server : \\
      &\forall t \in ActiveConfigSet :\\
      &\forall Q \in Quorums(config[t]) : \exists n \in Q : term[n] \geq configTerm[s]
  \end{align*}
\end{lemma}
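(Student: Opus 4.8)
The plan is to prove Lemma~\ref{lemma:active-cfgs-safe-at-terms} by induction over protocol transitions, as part of the mutually inductive group that also includes Lemmas~\ref{lemma:primary-term-equals-cfg-term}, \ref{lemma:config-version-term-unique}, \ref{lemma:primary-in-newest-config-in-term}, \ref{lemma:active-cfgs-overlap}, and \ref{lemma:election-safety}; i.e., all six may be assumed in the current state when reasoning about the next state. The base case is immediate: in every initial state the only configuration is $(m_{init},1,0)$ and every server has term $0$, so $term[n] \geq configTerm[s] = 0$ holds vacuously. For the inductive step I would dispatch the actions that cannot touch terms or configurations (\textit{ClientRequest}, \textit{GetEntries}, \textit{RollbackEntries}, \textit{CommitEntry}) trivially, and observe that \textit{UpdateTerms} only raises server terms while leaving all configurations and all $configTerm$ values fixed, so it preserves the invariant by monotonicity of terms. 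The interesting actions are \textit{SendConfig}, \textit{Reconfig}, and \textit{BecomeLeader}.

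For \textit{SendConfig}$(i,j)$, no new configuration value is created (server $j$ merely copies $C_{(i)}$, which already exists), no term changes, and by Lemma~\ref{lemma:config-deactivations-stable} the active config set does not grow; so every constraint in the next state already held in the current state and the invariant is preserved directly. The remaining two cases I would handle by a single unified argument. In both, server $i$'s configuration is replaced by a new configuration $C'$ whose term $T^{*}$ equals: $term[i]$ for \textit{Reconfig} (since $configTerm$ is unchanged by the action and $configTerm[i]=term[i]$ by Lemma~\ref{lemma:primary-term-equals-cfg-term}), and $term[i]+1$ for \textit{BecomeLeader}. The key device is a distinguished quorum $Q^{*}\subseteq config[i]$ all of whose servers have term $T^{*}$ in the post-state: for \textit{BecomeLeader} take $Q^{*}=Q$, the election quorum, whose members are all set to $term[i]+1$; for \textit{Reconfig} take the quorum furnished by the \emph{Term Quorum Check} $Q2(i)$, whose members all have term $term[i]$, which the action does not modify.

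With $Q^{*}$ in hand I would show that every quorum of every configuration active in the next state contains a post-state server of term $\geq T^{*}$. Since the action requires $C_{(i)}$ to be active (Lemma~\ref{lemma:deactivated-cannot-reconfig-or-elect}), any configuration $C_a$ that is active in the next state and distinct from $C'$ was active already (by stability, Lemma~\ref{lemma:config-deactivations-stable}, noting only server $i$'s config value changes), so by Lemma~\ref{lemma:active-cfgs-overlap} every quorum of $C_a$ intersects $Q^{*}$, which is a quorum of the active configuration $C_{(i)}$; hence it contains a post-state server of term $T^{*}$. For $C_a = C'$ itself, $C'.m$ is either $config[i]$ (\textit{BecomeLeader}), in which case any two of its quorums overlap as majorities, or $m_{new}$ with $QuorumsOverlap(config[i],m_{new})$ guaranteed by the \textit{Reconfig} precondition; in both subcases every quorum of $C'.m$ intersects $Q^{*}$ and so contains a server of term $T^{*}$. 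Finally, $T^{*}$ dominates every configuration term present in the post-state: for \textit{Reconfig} this follows from the inductive hypothesis applied to the active configuration $C_{(i)}$ and the quorum $Q^{*}$ (all of whose servers have term $term[i]$), forcing $term[i] \geq configTerm[s]$ for every $s$; for \textit{BecomeLeader}, a configuration of term $> term[i]+1$ would, by the inductive hypothesis, force some server of $Q$ to have term $> term[i]+1$, contradicting the election precondition $term[i]+1 > term[v]$. Combining these facts — every active quorum meets a post-state server of term $T^{*}$, $T^{*}$ dominates all configuration terms, server terms only increase, and the invariant held for all smaller terms before the transition — yields the invariant in the post-state.

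The main obstacle is identifying the right invariants to chain together: that the configuration being reconfigured or elected out of must be \emph{active} (so Lemma~\ref{lemma:active-cfgs-overlap} applies and its quorums overlap with those of every other active configuration), that the \emph{Term Quorum Check} — and not the \emph{Config Quorum Check} — is what pins a quorum of $C_{(i)}$ to the current \emph{term} (a $configTerm$ of $term[i]$ on a server need not imply $term=term[i]$, precisely because of \textit{SendConfig}), and that the election quorum $Q$ plays exactly the analogous role for \textit{BecomeLeader}. Everything else is routine bookkeeping with term monotonicity and deactivation stability.
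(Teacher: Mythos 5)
Your proposal is correct and takes essentially the same route as the paper's proof: induction within the same mutually-inductive lemma group, relying on the activity of the pre-transition configuration (Lemma~\ref{lemma:deactivated-cannot-reconfig-or-elect}), quorum overlap of active configurations (Lemma~\ref{lemma:active-cfgs-overlap}), the $Q2$/election quorum, the election precondition on terms, and deactivation stability (Lemma~\ref{lemma:config-deactivations-stable}), with the trivial dispatch of \textit{SendConfig} and \textit{UpdateTerms}. The only difference is organizational: you unify the \textit{Reconfig} and \textit{BecomeLeader} cases through the distinguished quorum $Q^{*}$ at term $T^{*}$ and a single dominance claim, whereas the paper argues the two actions (and the ``new config active'' vs.\ ``other config active'' sub-cases) separately using the same ingredients.
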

\begin{proof}
  In all initial states, all servers have the same, unique configuration, and all servers have the same terms, so the lemma holds. We must then consider actions that change configurations or terms on servers: \textit{BecomeLeader}, \textit{Reconfig}, \textit{SendConfig}, and \textit{UpdateTerms}. Again, in the below arguments we refer to $Active$ as the set of active configurations in the current state, and $Active'$ as the set of active configurations in the next state. 

  \begin{itemize}
      \item $BecomeLeader(i)$ updates the configuration on server $i$ from $C_i$ to $C_i'$, where $(C_i'.m, C_i'.v) = (C_i.m, C_i.v)$ and $C_i'.t > C_i.t$ (by Lemma \proofref{lemma:configs-monotonic}). We also know, by Lemma \ref{lemma:deactivated-cannot-reconfig-or-elect}, that $C_i$ must be active in order for the $BecomeLeader(i)$ action to occur. Lemma \proofref{lemma:active-cfgs-safe-at-terms} could be falsified in two cases, which we examine below:
      \begin{itemize}
          \item If $C_i'$ is active, then we must show that, for any other server $j$, with configuration $C_j$, all quorums of $C_i'$ contain some server in term $\geq C_j.t$. By the assumption of Lemma \proofref{lemma:active-cfgs-safe-at-terms} in the current state, we know that all quorums of configuration $C_i$ contain some server in term $\geq C_j.t$. Since $C_i'.m=C_i.m$, this should also hold for $C_i'$.  
          
          \item If $C_j$ is the configuration of some server $j$ such that $C_j \in Active'$, we must show that all quorums of $C_j$ intersect with some server in term $\geq C_i'.t$. By Lemma \proofref{lemma:config-deactivations-stable}, we know that if $C_j \in Active'$ then $C_j \in Active$. So, by the assumption of Lemma \proofref{lemma:active-cfgs-safe-at-terms} in the current state, we know that all quorums of $C_j$ intersect with some server in term $\geq C_i.t$. After the $BecomeLeader(i)$ action occurs, due to its postcondition (Algorithm \ref{alg:mrr-pseudocode-full}, Line \ref{mrr-full:line-become-leader-post-term-update}), there must be some quorum $Q \in Quorums(C_i.m)$  such that $term'[n] = C_i'.t$ for all $n \in Q$, since $C_i.m = C_i'.m$. Since $C_j$ and $C_i$ are both active in the current state, it must be that $QuorumsOverlap(C_i.m, C_j.m)$, by Lemma \ref{lemma:active-cfgs-overlap}. So, all quorums of $C_j$ must contain some server $m \in Q$, where $term[m]\geq C_i'.t$, upholding Lemma \proofref{lemma:active-cfgs-safe-at-terms} in the next state.
          
      \end{itemize}
      
      \item $Reconfig(i)$ updates the configuration of server $i$ from $C_i$ to $C_i'$, where $C_i.t=C_i'.t$. We know that $C_i$ must have been active in order for the $Reconfig(i)$ to occur, by Lemma \ref{lemma:deactivated-cannot-reconfig-or-elect}. We consider the two cases in which this action could falsify the lemma:
      
      \begin{itemize}
          \item If $C_i'$ is active, then we must show that, for any other server $j$, in configuration $C_j$, all quorums of $C_i'$ overlap with some server in term $\geq C_j.t$. By the assumption of Lemma \proofref{lemma:active-cfgs-safe-at-terms}, we know that, in the current state, all quorums of $C_i$ contain some server in term $\geq C_j.t$, since $C_i \in Active$. In order for the $Reconfig(i)$ action to have occurred, the \emph{Term Quorum Check} precondition, $Q2(i)$ (Algorithm \ref{alg:mrr-pseudocode-full}, Line \ref{mrr-full:line-reconfig-precond-quorum-conds}) must have been satisfied, meaning that there exists some quorum $Q \in Quorums(C_i.m)$ such that,  for all $n \in Q$, $term[n] = C_i.t$. If all quorums of $C_i$ currently contain some server in term $\geq C_j.t$ and $Q2(i)$ was satisfied, though, then this must imply that there is some $m \in Q$ such that
          \begin{align*}
            term[m] = C_i.t \wedge term[m] \geq C_j.t
          \end{align*}
          implying that $C_i.t \geq C_j.t$. So, since $QuorumsOverlap(C_i.m, C_i'.m)$, we know that all quorums of $C_i'$ will contain some server $v$ such that $term[v] \geq C_j.t$, ensuring Lemma \proofref{lemma:active-cfgs-safe-at-terms} is upheld.

          \item If $C_j$ is the configuration of some server $j$ such that $C_j \in Active'$, we must show that all quorums of $C_j$ contain some server in term $\geq C_i'.t$. By assumption of Lemma \proofref{lemma:active-cfgs-safe-at-terms} in the current state, all quorums of $C_j$  contain some server in term $\geq C_i.t$, since $C_i$ is active. Since $C_i.t = C_i'.t$ and $Reconfig(i)$ doesn't modify the terms of any servers, Lemma \proofref{lemma:active-cfgs-safe-at-terms} must be upheld.
      \end{itemize}
      
      \item $SendConfig(i, j)$ updates the configuration of server $j$ to $C_i$, the configuration of server $i$, and does not modify the terms of any servers. The set of unique configurations in the system can only be reduced or left the same by this action. So, it does not create any new active configurations (by Lemma \proofref{lemma:config-deactivations-stable}) and therefore cannot falsify Lemma \proofref{lemma:active-cfgs-safe-at-terms} in the next state.
      
      \item \textit{UpdateTerms} does not modify configurations and can only increase the term of a server, so it must uphold the property.
  \end{itemize}
\end{proof}

%
%
\begin{proof}[Proof of Lemma \ref{lemma:election-safety}]
  In all initial states, there are no primary servers, so the lemma holds. The only possible actions that could falsify Lemma \proofref{lemma:election-safety} in the next state are $BecomeLeader$ or $UpdateTerms$ actions. 
  
  \begin{itemize}
      \item $BecomeLeader(i)$ elects a primary server $i$ in configuration $C_i$ in $term'[i]$, and does not modify the state of any other server. Assume there exists another server $j \neq i$, in configuration $C_j$, that is also primary in $term'[i]$ in the current state. If server $j$ is currently primary in $term'[i]$, this implies, by Lemma \proofref{lemma:primary-term-equals-cfg-term}, that $C_j.t = term'[i]$. But, in order for $BecomeLeader(i)$ to occur, configuration $C_i$ must be active in the current state, by Lemma \ref{lemma:deactivated-cannot-reconfig-or-elect}. So, if $C_i$ is active, Lemma \proofref{lemma:active-cfgs-safe-at-terms} implies that all quorums of $C_i$ must contain some server in a term $\geq C_j.t = term'[i]$. This would, however, prevent the election of $i$ in $term'[i]$ due to the voting precondition of $BecomeLeader(i)$ (Algorithm \ref{alg:mrr-pseudocode-full}, Line \ref{mrr-full:line-become-leader-pre-newer-term}) that requires some quorum of servers in $C_i$ to have terms $ < term'[i]$.
    
      \item An $UpdateTerms(i,j)$ action only changes the term of server $j$, and sets $state[j] \leftarrow Secondary$, so it cannot falsify the lemma if it held in the current state. 
  \end{itemize}

\end{proof}

\subsection{Log Properties}
\label{appendix:log-props}

In this section we establish several auxiliary lemmas related to properties of logs in the system. These lemmas are, for the most part, conceptually unrelated to reconfiguration, but are required for a Raft-based system that replicates logs like \textit{MongoRaftReconfig} and are required for completeness of the proof. Many of the arguments are similar to those in the original Raft dissertation \cite{OngaroDissertation2014}. In Section \ref{appendix:log-matching-proof} we establish the \emph{LogMatching} property (Lemma \ref{lemma:log-matching}) with the help of a few auxiliary lemmas, and in Section \ref{appendix:more-log-lemmas} we establish some additional, higher level lemmas about server logs. All lemmas of the preceding sections (\ref{appendix:elecsafety-proof}) hold in all reachable states of the protocol, so we can utilize them below.

\subsubsection{Log Matching}
\label{appendix:log-matching-proof}

In this section we assume that Lemmas \ref{lemma:log-entry-term-implies-config-term}, \ref{lemma:primary-has-entries-it-created}, and \ref{lemma:log-matching} act as strengthening assumptions for the inductive hypothesis needed to prove Lemma \ref{lemma:log-matching}, the \emph{LogMatching} property. That is, we assume all of these lemmas hold in the current state, and show each is upheld by any protocol transition, as we did in Section \ref{appendix:elecsafety-proof}. Lemma \ref{lemma:get-entries-prefix} is an additional, helpful fact that we establish first.

\begin{lemma}[GetEntries ensures prefix]
  \label{lemma:get-entries-prefix}
  If Lemma \ref{lemma:log-matching} holds in the current state, then after a $GetEntries(i,j)$ action, $log'[i]$ is a prefix of $log[j]$. 
\end{lemma}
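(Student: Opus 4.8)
The plan is a direct unfolding of the $GetEntries$ action together with a single application of the \emph{LogMatching} property (Lemma~\ref{lemma:log-matching}); this is not an induction over all actions but an analysis of one transition. Fix a $GetEntries(i,j)$ transition and write $n = Len(log[i])$ for the value in the current state. By definition of the action, $log'[i] = log[i] \circ \langle log[j][n+1] \rangle$, and $log[j]$ is unchanged. The enabling precondition $LogCheck(i,j)$ supplies two facts: (a) $Len(log[j]) > n$, and (b) either $log[i] = \langle \rangle$ or $log[i][n] = log[j][n]$. The trivial sub-case $log[i] = \langle \rangle$ is immediate: then $n = 0$, $log'[i] = \langle log[j][1] \rangle = log[j][..1]$, and $Len(log[j]) > 0$ by (a), so $log'[i]$ is a prefix of $log[j]$.

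For the main case $log[i] \neq \langle \rangle$, the key step is to argue that $log[i]$ is \emph{already} a prefix of $log[j]$ in the current state. By (a), both $log[i]$ and $log[j]$ have length at least $n$, and by (b) they agree at position $n$; since a log entry here is just a term, this is exactly the statement that $log[i]$ and $log[j]$ contain the same entry at index $n$. Applying \emph{LogMatching} then yields $log[i][..n] = log[j][..n]$, i.e.\ $log[i] = log[j][..n]$ (using $log[i][..n] = log[i]$). Consequently $log'[i] = log[j][..n] \circ \langle log[j][n+1] \rangle = log[j][..(n+1)]$, and since $Len(log'[i]) = n+1 \le Len(log[j])$ by (a), we get $log'[i] = log[j][..Len(log'[i])]$, hence $IsPrefix(log'[i], log[j])$ as required.

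The only substantive step — everything else being sequence-length arithmetic and unfolding the action's postcondition — is invoking \emph{LogMatching} to upgrade the \emph{local} agreement ``$log[i][n] = log[j][n]$'' handed to us by $LogCheck$ into full prefix agreement up through index $n$. When writing this out I would take care that the hypotheses of Lemma~\ref{lemma:log-matching} are met verbatim (both logs long enough, equal term at the common index) and that its conclusion is used in the prefix form stated here; if Lemma~\ref{lemma:log-matching} is instead phrased via $InLog$ or $(index,term)$ pairs, I would first convert (b) into that form and then translate the conclusion back to $IsPrefix$.
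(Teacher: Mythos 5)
Your proof is correct and follows essentially the same route as the paper's: unfold the $GetEntries$ postcondition, use $LogCheck$ plus \emph{LogMatching} to conclude $log[i] = log[j][..Len(log[i])]$, and concatenate to obtain $log'[i] = log[j][..(Len(log[i])+1)]$. Your explicit treatment of the $log[i] = \langle\rangle$ sub-case is a small point of added care that the paper's proof elides, but it does not change the argument.
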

\begin{proof}
  $GetEntries(i,j)$ can only occur if $Len(log[j]) > Len(log[i])$ and $log[i][Len(log[i])] = log[j][Len(log[i])]$. If Lemma \proofref{lemma:log-matching} holds currently, then we know that $log[i] = log[j][..Len(log[i])]$. After the $GetEntries(i,j)$ action occurs, $log'[i]=log[i] \circ \langle log[j][Len(log[i])+1] \rangle$.  So, we know that
  \begin{align*}
      log'[i] &= log[i] \circ \langle log[j][Len(log[i])+1] \rangle\\
      &= log[j][..Len(log[i])] \circ \langle log[j][Len(log[i])+1] \rangle\\
      &= log[j][..(Len(log[i])+1)]
  \end{align*}
  showing that $log'[i]$ is a prefix of $log[j]$.
\end{proof}

\begin{lemma}[Log entry in term implies config in term]
  \label{lemma:log-entry-term-implies-config-term}
  If a log entry $E=(ind,t)$ exists in the log of some server, then there exists some server in a configuration $C$ such that $C.t \geq t$.
\end{lemma}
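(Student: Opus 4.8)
The plan is to prove this as an inductive invariant over protocol transitions, in the same style as the preceding lemmas: establish it in every initial state and then show every action preserves it. The base case is immediate, since initially $log[s] = \langle \rangle$ for all $s$, so there are no log entries and the claim holds vacuously. Since all lemmas of Section~\ref{appendix:elecsafety-proof} hold in every reachable state, I will freely use Lemma~\ref{lemma:primary-term-equals-cfg-term} and Lemma~\ref{lemma:configs-monotonic} below; I do not expect to need the other lemmas in the mutually-inductive group (\ref{lemma:primary-has-entries-it-created}, \ref{lemma:log-matching}), so the argument is essentially self-contained.

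For the inductive step I would fix an arbitrary log entry $E = (ind, t)$ present in $log'[s]$ after some transition and split into two cases. First, if $E$ was already present in the log of some server $r$ in the current state, then by the induction hypothesis there is a witness server $n$ with $configTerm[n] \geq t$; it suffices to note that by Lemma~\ref{lemma:configs-monotonic} we have $C_{(n)}' \geq C_{(n)}$, and by Definition~\ref{def:config-ordering} this implies $configTerm'[n] \geq configTerm[n] \geq t$, so $n$ remains a valid witness. Second, if $E$ is genuinely new, the only actions appending to a log are $ClientRequest$ and $GetEntries$. A $GetEntries(i,j)$ action copies $log[j][Len(log[i])+1]$, an entry already present in $log[j]$, so it reduces to the first case. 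A $ClientRequest(i)$ action requires $state[i] = Primary$ and appends $\langle term[i] \rangle$, so the new entry has term $t = term[i]$; by Lemma~\ref{lemma:primary-term-equals-cfg-term}, $configTerm[i] = term[i] = t$, so server $i$ itself is the required witness. Finally, $RollbackEntries$ only removes log entries and no other action modifies logs, so neither can falsify the lemma.

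The proof is short, and the only real content is the $ClientRequest$ case, where the connection between a primary's term and the term of its own local configuration (Lemma~\ref{lemma:primary-term-equals-cfg-term}) does all the work. The main thing to be careful about is the bookkeeping claim that config terms never decrease on any server, which I would justify uniformly via Lemma~\ref{lemma:configs-monotonic} rather than re-checking $Reconfig$, $BecomeLeader$, and $SendConfig$ by hand; this is the only place where an otherwise trivial argument could hide an oversight.
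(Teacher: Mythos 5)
Your proposal is correct and follows essentially the same route as the paper: induction over transitions, with the $ClientRequest$ case discharged by Lemma~\ref{lemma:primary-term-equals-cfg-term} and the configuration-modifying actions handled uniformly via the monotonicity of configurations (Lemma~\ref{lemma:configs-monotonic}), which indeed implies config terms never decrease under Definition~\ref{def:config-ordering}. Your reorganization into ``pre-existing entry'' versus ``newly appended entry'' cases, and noting that a $GetEntries$ entry already exists in the sender's log (avoiding the prefix lemma), is only a mild repackaging of the paper's per-action argument.
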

\begin{proof}
  In all initial states, the logs of all servers are empty, so the lemma holds.
  We must consider those actions which modify server logs, terms, or configurations: $ClientRequest$, $GetEntries$, $RollbackEntries$, $UpdateTerms$, $BecomeLeader$, $Reconfig$, and $SendConfig$.

  \begin{itemize}
      \item $ClientRequest(i)$ creates a log entry $(ind,term[i])$ on a primary $i$ in configuration $C_i$. To show that Lemma \ref{lemma:log-entry-term-implies-config-term} holds in the next state we must ensure that there exists some configuration $C$ where $C.t \geq term[i]$. By Lemma \ref{lemma:primary-term-equals-cfg-term}, we know that $C_i.t = term[i]$. Since $C_i.t \geq term[i]$, and the configuration of $i$ is unmodified by this action, Lemma \ref{lemma:log-entry-term-implies-config-term} is upheld in the next state.
      
      \item After a $GetEntries(i,j)$ action occurs, the log of the receiving server, $log'[i]$, is a prefix of the sender's log, $log[j]$, by Lemma \ref{lemma:get-entries-prefix}. So, if Lemma \ref{lemma:log-entry-term-implies-config-term} was satisfied for $log[j]$, it will be satisfied for $log'[i]$, since $log'[i]$ will contain a subset of the entries present in $log[j]$, and no configurations are modified by this action.
      
      \item \textit{RollbackEntries} only deletes log entries, so it cannot falsify the lemma in the next state.  
      
      \item \textit{UpdateTerms} only modifies server terms or primary status, so it cannot falsify the lemma in the next state.

      \item $BecomeLeader(i)$ updates the configuration of a primary server $i$ from $C_i$ to $C_i'$, where $C_i' > C_i$ (by Lemma \ref{lemma:configs-monotonic}), and the action does not modify any server logs. Since it only modifies the configuration of server $i$, and $C_i' > C_i$, Lemma \ref{lemma:log-entry-term-implies-config-term} must hold in the next state if it held currently.
      
      \item $Reconfig(i)$ updates the configuration of server $i$ from $C_i$ to $C_i'$, and does not modify any server terms or logs. Since $C_i'.t=C_i.t$, it must uphold the lemma. 
      
      \item $SendConfig(i,j)$ updates the current configuration of a server $j$ from $C_j$ to $C_j'$, and does not modify any server logs, other configurations, or terms. Since $C_j' > C_j$, by Lemma \ref{lemma:configs-monotonic}, Lemma \ref{lemma:log-entry-term-implies-config-term} must be upheld.
      
  \end{itemize}
\end{proof}

\begin{lemma}[Primary has entries it created]
  \label{lemma:primary-has-entries-it-created}
  For any log entry $E=(ind, t)$ that exists on some server, if a server $s$ is primary in term $t$, then $log[s]$ must contain $E$.
\end{lemma}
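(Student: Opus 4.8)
The plan is to establish Lemma~\ref{lemma:primary-has-entries-it-created} by induction over protocol transitions, as one component of the same strengthening bundle as Lemma~\ref{lemma:log-matching}; thus for the inductive step I may assume this statement, Lemma~\ref{lemma:log-matching}, Lemma~\ref{lemma:log-entry-term-implies-config-term}, and all invariants of Section~\ref{appendix:elecsafety-proof} hold in the current state. The base case is immediate, since every initial state has empty logs. For the inductive step I would note that the property can only be threatened in two ways: either a \emph{new} log entry with some term $t$ appears, or a server \emph{newly} becomes primary in a term $t$ for which a log entry $(ind,t)$ already exists somewhere. I then go through the actions.

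The actions that neither append an entry nor elect a leader are routine. \textit{GetEntries}$(i,j)$ only copies an entry already present at the same index-and-term position of $log[j]$, creating no new $(ind,t)$ pair; \textit{RollbackEntries}$(i,j)$ only deletes, and its precondition forces $i$ to be a secondary, so $i$ is never the server $s$ of the statement and no other log is modified; \textit{SendConfig}, \textit{Reconfig}, and \textit{UpdateTerms} modify no logs and introduce no new primary-in-term fact (\textit{UpdateTerms} only demotes). In each case the invariant carries over unchanged. For \textit{ClientRequest}$(i)$, the one new entry is $(Len(log[i])+1,\,term[i])$ appended to $log[i]$, and the precondition gives $state[i]=Primary$; by Lemma~\ref{lemma:election-safety}, $i$ is the unique primary of $term[i]$, so whenever the hypothesis ``$s$ is primary in $term[i]$'' holds we must have $s=i$, and $log'[i]$ contains the new entry by construction.

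The crux is \textit{BecomeLeader}$(i,Q)$, which makes $i$ primary in the fresh term $t'=term[i]+1$. Since this action touches no logs, preserving the invariant reduces to showing that \emph{no} log entry of term $t'$ exists in the current state, which I would prove by contradiction. If an entry $(ind_0,t')$ existed, Lemma~\ref{lemma:log-entry-term-implies-config-term} would give a server $r$ whose configuration has term $\geq t'$; and since $config[i]$ must be active for \textit{BecomeLeader}$(i,Q)$ to fire (Lemma~\ref{lemma:deactivated-cannot-reconfig-or-elect}), Lemma~\ref{lemma:active-cfgs-safe-at-terms}, instantiated at $r$ and at the active configuration $config[i]$, forces every quorum of $config[i]$ --- in particular $Q$ --- to contain a server whose term is $\geq t'$. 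This contradicts the voting precondition of \textit{BecomeLeader}$(i,Q)$ (Algorithm~\ref{alg:mrr-pseudocode-full}, Line~\ref{mrr-full:line-become-leader-pre-newer-term}), which requires $term[v]<t'$ for all $v\in Q$. Hence no entry of term $t'$ exists, the invariant holds vacuously for $t'$ in the next state, and entries and primary-in-term facts for all other terms are untouched.

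I expect the \textit{BecomeLeader} case to be the main obstacle, since it is the one place where a statement nominally about logs must be discharged using the configuration-term safety invariants rather than any local log argument; the delicate part is correctly chaining Lemma~\ref{lemma:log-entry-term-implies-config-term} into Lemma~\ref{lemma:active-cfgs-safe-at-terms} and observing that $config[i]$ is active. The other cases, though numerous, are mechanical once it is clear that each reduces either to invoking Election Safety (for \textit{ClientRequest}) or to the fact that the action creates no new $(ind,t)$ pair and no new primary-in-term fact.
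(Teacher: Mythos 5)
Your proof is correct and follows essentially the same route as the paper's: the same case analysis over actions, the same use of Lemma~\ref{lemma:election-safety} for \textit{ClientRequest}, and the same chaining of Lemma~\ref{lemma:log-entry-term-implies-config-term}, Lemma~\ref{lemma:deactivated-cannot-reconfig-or-elect}, and Lemma~\ref{lemma:active-cfgs-safe-at-terms} against the voting precondition for \textit{BecomeLeader}. The only cosmetic difference is that for \textit{GetEntries} you observe directly that no new $(ind,t)$ pair is created, whereas the paper invokes the prefix property of Lemma~\ref{lemma:get-entries-prefix}; both are fine.
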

\begin{proof}
  In all initial states, the logs of all servers are empty, so the lemma holds. The actions that could possibly falsify this lemma in the next state are those that modify server logs or primary status: \textit{BecomeLeader}, \textit{ClientRequest}, \textit{GetEntries}, and \textit{RollbackEntries}.
  \begin{itemize}
      \item $BecomeLeader(i)$ elects a primary $i$ in $term'[i]$ in configuration $C_i$, and we know that $i$ is the unique primary of $term'[i]$, by Lemma \ref{lemma:election-safety}. We also know that $C_i$ is active, otherwise the election could not have occurred (Lemma \ref{lemma:deactivated-cannot-reconfig-or-elect}). The only way Lemma \ref{lemma:primary-has-entries-it-created} could be falsified in the next state is if, in the current state, there exists some server $j$ such that there is a log entry $E_j=(ind_j,term'[i])$ in $log[j]$ and $E_j$ is not contained in $log[i]$. If $E_j$ exists, though, by Lemma \ref{lemma:log-entry-term-implies-config-term} this would imply that, in the current state, there exists some configuration in term $\geq term'[i]$. Then, by Lemma \ref{lemma:active-cfgs-safe-at-terms}, this would imply that all quorums of active configurations in the current state contain some server in term $\geq term'[i]$. This means that all quorums of $C_i$ would contain some server in term $\geq term'[i]$, preventing the $BecomeLeader(i)$ action from occurring, due to the voting precondition on terms (Algorithm \ref{alg:mrr-pseudocode-full}, Line \ref{mrr-full:line-become-leader-pre-newer-term}).
      
      \item $ClientRequest(i)$ appends a new log entry $E=(ind,term[i])$ to the log of a primary server $i$. Since there is a unique primary per term (Lemma \ref{lemma:election-safety}), we are assured that $E$ is present in the log of server $i$, the only primary in $term[i]$, since $i$ is the server that created the entry.
      
      \item $GetEntries(i,j)$ sends a new log entry from server $j$ to a secondary server $i$. We know that $log'[i]$ is a prefix of $log[j]$, by Lemma \ref{lemma:get-entries-prefix}. So, the entries contained in $log'[i]$ are a subset of those in $log[j]$. Thus, if Lemma \ref{lemma:primary-has-entries-it-created} held currently for all entries in $log[j]$, it will hold in the next state for all entries in $log'[i]$, since the logs of no primary servers are modified.

      \item \textit{RollbackEntries} only modifies log entries on a secondary server, so it cannot falsify Lemma \ref{lemma:primary-has-entries-it-created} in the next state. 
  \end{itemize}


\end{proof}


%
%
%
%

\begin{lemma}[Log Matching]
  \label{lemma:log-matching}
  An $(index, term)$ pair uniquely identifies a log prefix.
  \begin{align*}
    \forall& s,t \in Server :\\
    \forall& ind \in (1..Len(log[s]) \cap 1..Len(log[t])) : \\
    &(log[s][ind] = log[t][ind]) \Rightarrow (log[s][..ind] = log[t][..ind])
  \end{align*}
\end{lemma}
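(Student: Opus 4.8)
The plan is to prove Lemma~\ref{lemma:log-matching} by induction over protocol transitions, in the same style as the preceding section: the bundle of Lemmas~\ref{lemma:log-entry-term-implies-config-term}, \ref{lemma:primary-has-entries-it-created}, and \ref{lemma:log-matching} serves jointly as the strengthened inductive hypothesis, and one shows each member is preserved by every action. For the base case, every server's log is $\langle\rangle$ in all initial states, so the quantified implication holds vacuously. For the inductive step, the key observation is that only \textit{ClientRequest}, \textit{GetEntries}, and \textit{RollbackEntries} modify any $log[\cdot]$ variable; the remaining actions (\textit{Reconfig}, \textit{SendConfig}, \textit{BecomeLeader}, \textit{UpdateTerms}, \textit{CommitEntry}) leave all logs untouched and so cannot falsify the property. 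Thus it suffices to treat those three cases, and in each case the indices $ind \le Len(log[i])$ of the modified log $i$ are handled uniformly: $log'[i][ind] = log[i][ind]$ there, so the hypothesis for the current state, restricted to the shorter prefix, carries over (noting $log'[i]$ extends, or is a prefix of, $log[i]$ depending on the action).

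For \textit{RollbackEntries}$(i,j)$, $log'[i]$ is a prefix of $log[i]$, so every valid index/prefix pair for $log'[i]$ is inherited from the current state; it cannot falsify the lemma. For \textit{GetEntries}$(i,j)$, I would invoke Lemma~\ref{lemma:get-entries-prefix} to get that $log'[i]$ is a prefix of $log[j]$. Given any server $t$ and in-range index $ind$ with $log'[i][ind] = log[t][ind]$: if $ind \le Len(log[i])$ proceed as above; if $ind = Len(log[i])+1$ (the newly appended position) then $log'[i][ind] = log[j][ind]$, hence $log[j][ind] = log[t][ind]$, and the inductive hypothesis applied to the pair $(j,t)$ yields $log[j][..ind] = log[t][..ind]$, which equals $log'[i][..ind]$ since $log'[i]$ is exactly the length-$ind$ prefix of $log[j]$. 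The sub-cases $t=i$ and $t=j$ collapse immediately to this prefix relation. The interesting case is \textit{ClientRequest}$(i)$, which appends $term[i]$ at index $ind_{new} = Len(log[i])+1$ on a primary $i$. The only way the new entry could create a conflict is if some server $t \ne i$ already satisfies $Len(log[t]) \ge ind_{new}$ and $log[t][ind_{new}] = term[i]$ in the current state. I would rule this out by contradiction: such a $t$ witnesses that the entry $(ind_{new}, term[i])$ exists on some server, and since $i$ is primary in $term[i]$ (precondition of \textit{ClientRequest}, and $i$ is the unique such primary by Lemma~\ref{lemma:election-safety}), Lemma~\ref{lemma:primary-has-entries-it-created} forces $log[i]$ to contain $(ind_{new}, term[i])$, contradicting $Len(log[i]) = ind_{new}-1$. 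Hence no conflicting $t$ exists and the appended entry is consistent with every other log.

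The main obstacle is precisely this \textit{ClientRequest} case: arguing that a freshly minted entry cannot collide with a pre-existing entry on another server. This is where the proof genuinely consumes the bundled invariant \textbf{Primary has entries it created} (Lemma~\ref{lemma:primary-has-entries-it-created}) together with \textbf{Election Safety} (Lemma~\ref{lemma:election-safety}); without them one could not exclude a spurious entry of term $term[i]$ appearing out of order on some lagging or diverged server. By contrast, the \textit{GetEntries} and \textit{RollbackEntries} cases are essentially bookkeeping about sequence prefixes, resting on the \textit{LogCheck} precondition and Lemma~\ref{lemma:get-entries-prefix}. Reconfiguration plays no role here, which matches the remark in the excerpt that these log lemmas are conceptually orthogonal to reconfiguration.
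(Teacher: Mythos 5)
Your proposal is correct and follows essentially the same route as the paper's proof: induction over transitions with Lemmas~\ref{lemma:log-entry-term-implies-config-term}, \ref{lemma:primary-has-entries-it-created}, and \ref{lemma:log-matching} as the strengthened hypothesis, dismissing \textit{RollbackEntries} and \textit{GetEntries} via prefix reasoning (Lemma~\ref{lemma:get-entries-prefix}) and resolving the \textit{ClientRequest} case by the same contradiction through Lemma~\ref{lemma:primary-has-entries-it-created}. Your extra appeal to Lemma~\ref{lemma:election-safety} is harmless but not needed, since Lemma~\ref{lemma:primary-has-entries-it-created} already applies to any primary in the entry's term.
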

\begin{proof}
  In all initial states, the logs of all servers are empty, so the lemma holds. If Lemma \ref{lemma:log-matching} holds in the current state, the only possible actions that could falsify it in the next state are those that affect the state of server logs: \textit{ClientRequest}, \textit{GetEntries}, and \textit{RollbackEntries}. 
  
  \begin{itemize}
      \item $ClientRequest(i)$ appends a single entry to the log of a primary server $s$ in $term[i]$. Let $ind_i = Len(log[i])$. So, we know that 
      \begin{align}
        \label{eq:log-matching-a}
        log'[ind_i+1] = term[i] 
      \end{align}
      The only way for this action to violate Lemma \ref{lemma:log-matching} in the next state is if there exists some server $j \neq i$ such that both of the following hold:
      \begin{align}
        \label{eq:log-matching-b}
        &log'[i][ind_i+1] = log[j][ind_i+1]\\ 
        \label{eq:log-matching-c}
        &log'[i][..(ind_i+1)] \neq log[j][..(ind_i+1)]
      \end{align}
      That is, $log[j]$ contains an entry in $term[i]$ at index $ind_i+1$, but it has a prefix that differs from $log'[i]$. This cannot be possible, though, since, from statements \ref{eq:log-matching-a} and \ref{eq:log-matching-b} above, we know that 
      $log[j][ind_i+1] = term[i]$. By Lemma \proofref{lemma:primary-has-entries-it-created}, this implies that the entry $(ind_i+1, term[i])$ must be contained in $log[i]$. So, it must be that $Len(log[i]) \geq ind_i+1$, contradicting our assumption that $Len(log[i]) = ind_i$.
      
      \item $GetEntries(i,j)$ sends a single log entry from server $i$ to server $j$, 
      and, by Lemma \proofref{lemma:get-entries-prefix}, we know that $log'[i]$ is a prefix of $log[j]$ after the action occurs. So, if Lemma \proofref{lemma:log-matching} held in the current state, it will hold in the next state.

      \item \textit{RollbackEntries} truncates a single entry from the end of a server's log, so it cannot not violate Lemma \proofref{lemma:log-matching}. 
  \end{itemize}


\end{proof}

\subsubsection{Additional Log Lemmas}
\label{appendix:more-log-lemmas}

In this section we prove some additional log lemmas. Each is proved inductively. We assume that all lemmas established in the preceding sections (\ref{appendix:elecsafety-proof}, \ref{appendix:log-matching-proof}) hold in all reachable states of the protocol, so we can utilize them below.

\begin{lemma}[Primary term at least as large as log term]
  \label{lemma:primary-term-gte-log-term}
  For any server that is currently primary, its current term must be $\geq$ the largest term of any entry in its log.
  \begin{align*}
      \A s \in Server : (state[s] = Primary) \Rightarrow \forall ind \in 1..Len(log[s]) : term[s] \geq log[s][ind]
  \end{align*}
\end{lemma}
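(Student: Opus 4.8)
The plan is to prove this by induction over protocol transitions, in the same style as the lemmas of Sections~\ref{appendix:elecsafety-proof} and~\ref{appendix:log-matching-proof}: first show the property holds in every initial state (trivial, since all logs are empty), then show each action preserves it, assuming the property and all previously established lemmas hold in the current state. The only actions that can affect a server's log, current term, or primary status are \textit{ClientRequest}, \textit{GetEntries}, \textit{RollbackEntries}, \textit{UpdateTerms}, \textit{BecomeLeader}, \textit{Reconfig}, and \textit{SendConfig}, so these are the cases to check.

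First I would dispatch the routine cases. \textit{Reconfig} and \textit{SendConfig} modify only configuration state, leaving logs, terms, and $state$ untouched, so they preserve the property trivially. \textit{GetEntries}$(i,j)$ and \textit{RollbackEntries}$(i,j)$ both require $state[i]=Secondary$ and do not change $state[i]$, so the antecedent $state[s]=Primary$ stays false for the affected server and no other server is modified. \textit{UpdateTerms}$(i,j)$ sets $state[j]\leftarrow Secondary$, so if $j$ was primary the antecedent becomes false, and nothing relevant changes for any other server. \textit{ClientRequest}$(i)$ appends the single entry $(Len(log[i])+1,\, term[i])$ to the log of a primary $i$; by the inductive hypothesis $term[i]$ already dominates every term already in $log[i]$, and the appended entry's term is exactly $term[i]$, so the property survives.

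The only interesting case, and the hard part, is \textit{BecomeLeader}$(i)$, which makes $i$ primary in $term'[i] = term[i]+1$ in configuration $C_i$ while leaving $log[i]$ unchanged. I must show $term'[i] \geq log[i][ind]$ for every $ind$. I would argue by contradiction: suppose $log[i]$ contains an entry with term $t \geq term'[i]$. By Lemma~\ref{lemma:log-entry-term-implies-config-term} there is then, in the current state, a server in some configuration $C$ with $C.t \geq t \geq term'[i]$. By Lemma~\ref{lemma:deactivated-cannot-reconfig-or-elect}, $C_i$ must be active for \textit{BecomeLeader}$(i)$ to fire, so by Lemma~\ref{lemma:active-cfgs-safe-at-terms}, instantiated with the active configuration $C_i$ and the configuration $C$, every quorum of $C_i$ contains some server in term $\geq C.t \geq term'[i]$. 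This contradicts the voting precondition of \textit{BecomeLeader}$(i)$ (Algorithm~\ref{alg:mrr-pseudocode-full}, Line~\ref{mrr-full:line-become-leader-pre-newer-term}), which requires a quorum of $C_i$ all of whose members have term $< term'[i]$. Hence no such entry exists; in fact every term in $log[i]$ is $< term'[i]$, so the property holds after the transition.

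The key idea I expect to carry the \textit{BecomeLeader} case is to route the argument through the configuration-term invariants rather than reasoning about logs directly: Lemma~\ref{lemma:log-entry-term-implies-config-term} converts a large log term into an equally large \emph{configuration} term, and Lemma~\ref{lemma:active-cfgs-safe-at-terms} forces that term to be visible in a quorum of the electing configuration, which the election's term precondition forbids. A point worth double-checking is that this chain depends only on lemmas already proved in Sections~\ref{appendix:elecsafety-proof} and~\ref{appendix:log-matching-proof}, which hold in all reachable states, so the present lemma introduces no new mutual-induction obligations.
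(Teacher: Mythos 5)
Your proposal is correct and follows essentially the same route as the paper's proof: an induction over transitions where the only nontrivial case is \textit{BecomeLeader}, handled by converting a too-large log term into a configuration term via Lemma~\ref{lemma:log-entry-term-implies-config-term}, invoking Lemma~\ref{lemma:active-cfgs-safe-at-terms} for the active configuration $C_i$ (active by Lemma~\ref{lemma:deactivated-cannot-reconfig-or-elect}), and contradicting the voting precondition on terms. The extra routine cases you dispatch (\textit{Reconfig}, \textit{SendConfig}) are harmless additions the paper simply omits as irrelevant to logs, terms, and primary status.
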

\begin{proof}
  In all initial states, the logs of all servers are empty, so the lemma holds. We must consider actions that modify the primary status of a server, terms of a server, or its logs: \textit{BecomeLeader}, \textit{ClientRequest}, \textit{GetEntries}, \textit{RollbackEntries}, \textit{UpdateTerms}.
  \begin{itemize}

      %
      %
      %
      %

      \item $BecomeLeader(i)$ elects a new primary server $i$ in $term'[i]$. It also updates the configuration of $i$ from $C_i$ to $C_i'$, where $(C_i.m, C_i.v) = (C_i'.m, C_i'.v)$ and $C_i'.t > C_i.t$ (by Lemma \ref{lemma:configs-monotonic}). To uphold Lemma \proofref{lemma:primary-term-gte-log-term}, we must be sure that $term'[i]$ is $\geq$ the largest term of any entry in $log[i]$. Assume there was an index $ind$ such that $log[i][ind] > term'[i]$. By Lemma \proofref{lemma:log-entry-term-implies-config-term}, this implies that there exists some server $j$, in configuration $C_j$, such that $C_j.t = log[i][ind]$. By Lemma \proofref{lemma:active-cfgs-safe-at-terms}, this implies that all quorums of $C_i$, which must have been active in the current state for the $BecomeLeader(i)$ to occur (Lemma \ref{lemma:deactivated-cannot-reconfig-or-elect}), contain some server in term $\geq log[i][ind]$. Since $log[i][ind] > term'[i]$, this would have prevented the $BecomeLeader(i)$ action from occurring, due to its voting precondition on terms (Algorithm \ref{alg:mrr-pseudocode-full}, Line \ref{mrr-full:line-become-leader-pre-newer-term}).
      
      \item $ClientRequest(i)$ creates a new entry $(ind, term[i])$ on a primary in $term[i]$, and does not modify the state of any other server. If Lemma \ref{lemma:primary-term-gte-log-term} holds currently, then, since the term of the new log entry is $ term[i]$, it will continue to hold in the next state, since $term[i] \leq term[i]$.

      \item $GetEntries$ only modifies the state of server logs on a secondary server, so could not falsify the lemma.
      
      \item $RollbackEntries$ only removes log entries from a secondary server, so it could not falsify the lemma if it holds in the current state.
      
      \item $UpdateTerms$ only increases the term of a server and does not modify any server logs, so it could not violate the lemma if it holds currently. 
  \end{itemize}

  
\end{proof}

\begin{lemma}[Log entry terms increase monotonically]
  \label{lemma:log-terms-monotonic}
  For all $s \in Server$, the terms of the log entries in $log[s]$ increase monotonically.
  \begin{align*}
    \forall s \in Server : \A ind_i,ind_j \in 1..Len(log[s]) : (ind_i < ind_j) \Rightarrow log[s][ind_i] \leq log[s][ind_j]
  \end{align*}
\end{lemma}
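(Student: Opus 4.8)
The plan is to prove Lemma~\ref{lemma:log-terms-monotonic} as an inductive invariant over reachable protocol states, in the same style as the preceding log lemmas, assuming that all lemmas established in Sections~\ref{appendix:elecsafety-proof}, \ref{appendix:log-matching-proof}, and the earlier lemmas of Section~\ref{appendix:more-log-lemmas} (in particular Lemma~\ref{lemma:primary-term-gte-log-term} and Lemma~\ref{lemma:get-entries-prefix}) hold in the current state. The base case is immediate: in all initial states every server log is $\langle \rangle$, so the universally quantified implication holds vacuously. For the inductive step, I would observe that the only actions capable of falsifying the property are those that modify a server's log, namely \textit{ClientRequest}, \textit{GetEntries}, and \textit{RollbackEntries}; every other action (\textit{BecomeLeader}, \textit{Reconfig}, \textit{SendConfig}, \textit{UpdateTerms}, \textit{CommitEntry}) leaves all logs untouched and therefore trivially preserves it.

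For \textit{RollbackEntries}$(i,j)$, the action truncates the last entry of $log[i]$, so $log'[i]$ is a prefix of $log[i]$; since any prefix of a monotone sequence is monotone, the property is preserved. For \textit{GetEntries}$(i,j)$, Lemma~\ref{lemma:get-entries-prefix} tells us that $log'[i]$ is a prefix of $log[j]$; by the induction hypothesis applied to $log[j]$, that log has monotonically increasing entry terms, hence so does its prefix $log'[i]$, and no other log changes.

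The main case, and the one I expect to carry the real content, is \textit{ClientRequest}$(i)$. This action requires $state[i] = Primary$ and appends $\langle term[i] \rangle$ to $log[i]$, so $log'[i] = log[i] \circ \langle term[i] \rangle$ with the new entry sitting at index $Len(log[i]) + 1$. To preserve monotonicity it suffices to show that this freshly appended term dominates every term already in $log[i]$, i.e.\ $term[i] \geq log[i][ind]$ for all $ind \in 1..Len(log[i])$; combined with the induction hypothesis on $log[i]$ this yields monotonicity of $log'[i]$. But this domination fact is exactly Lemma~\ref{lemma:primary-term-gte-log-term}, applied to the primary $i$. So the obstacle is entirely delegated to that earlier lemma, which is why it must precede this one in the proof order.

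In summary, the argument is short once the supporting infrastructure is in place: the only genuinely new reasoning is the \textit{ClientRequest} case, and even there the work reduces to invoking Lemma~\ref{lemma:primary-term-gte-log-term}; the \textit{GetEntries} and \textit{RollbackEntries} cases follow from the prefix-closure of the ``monotone terms'' property together with Lemma~\ref{lemma:get-entries-prefix}. The care point to flag in the write-up is that all of Lemmas~\ref{lemma:primary-term-gte-log-term} and~\ref{lemma:get-entries-prefix} (and transitively Lemmas~\ref{lemma:log-matching}, \ref{lemma:active-cfgs-safe-at-terms}, etc.) are already known to hold in every reachable state, so there is no circularity in using them here.
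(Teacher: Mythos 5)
Your proposal is correct and follows essentially the same route as the paper's proof: the same case split over \textit{ClientRequest}, \textit{GetEntries}, and \textit{RollbackEntries}, with \textit{ClientRequest} handled by Lemma~\ref{lemma:primary-term-gte-log-term} and \textit{GetEntries} by the prefix fact of Lemma~\ref{lemma:get-entries-prefix}. Nothing is missing; your explicit note that monotonicity is preserved under taking prefixes is just a slightly more spelled-out version of the paper's argument for the rollback and replication cases.
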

\begin{proof}
  In all initial states, the logs of all servers are empty, so the lemma holds. We only need to consider actions that modify server logs: \textit{ClientRequest}, \textit{GetEntries}, and \textit{RollbackEntries}.
  \begin{itemize}
    \item A $ClientRequest(i)$ appends a new log entry $(ind, term[i])$ on a primary $i$ in term $term[i]$. By Lemma \proofref{lemma:primary-term-gte-log-term} we know that $term[i]$ is $\geq$ the largest term of any entry in $log[i]$. So, $log'[i][ind]$ must be $\geq$ the term of the largest entry in $log[i]$, ensuring monotonicity of log entry terms in $log'[i]$.

    \item $GetEntries(i,j)$ ensures that the log of the receiving server, $i$, log is a prefix of the sender, $j$, by Lemma \proofref{lemma:get-entries-prefix}. So, if the sender's log satisfies Lemma \proofref{lemma:log-terms-monotonic} then the receiver's also will.
      
    \item \textit{RollbackEntries} only deletes log entries, so it must maintain Lemma \proofref{lemma:log-terms-monotonic}.
  \end{itemize}


\end{proof}

%
%
\begin{lemma}[Uniform log entries in term]
  \label{lemma:uniform-log-entries}
  For all $i,j \in Server$, if $log[i]$ contains a log entry $(ind_i,t)$ and $log[j]$ contains an entry $(ind_j,t)$, where $ind_j < ind_i$, it must be that $log[i][ind_j]=t$. 
  \begin{align*}
      \A &i,j \in Server :\\
      \A &ind_i \in 1..Len(log[i]) :\\
      \A &ind_j \in 1..Len(log[j]) : \\
      &(ind_j < ind_i \wedge log[i][ind_i] = log[j][ind_j]) \Rightarrow (log[i][ind_j] = log[i][ind_i])
  \end{align*}
\end{lemma}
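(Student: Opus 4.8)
The plan is to prove Lemma~\ref{lemma:uniform-log-entries} by induction over the reachable states of the protocol, in the same style as the preceding log lemmas, and to use freely all invariants already established in Sections~\ref{appendix:elecsafety-proof} and~\ref{appendix:log-matching-proof} --- in particular Lemma~\ref{lemma:primary-has-entries-it-created} (a primary in term $t$ holds every term-$t$ log entry that exists anywhere), Lemma~\ref{lemma:get-entries-prefix} (after a $GetEntries$ the receiver's log is a prefix of the sender's, which itself rests on Lemma~\ref{lemma:log-matching}), and Lemma~\ref{lemma:election-safety} (unique primary per term). The base case is immediate: in every initial state all logs are empty, so the universally quantified statement holds vacuously. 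For the inductive step I only need to examine the three actions that modify server logs --- $ClientRequest$, $GetEntries$, and $RollbackEntries$ --- since every other action leaves all logs unchanged and hence trivially preserves the invariant.

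$RollbackEntries$ only truncates the tail of a single secondary's log, so the set of index pairs witnessing a hypothesis of the lemma can only shrink, and on the surviving indices the log is unchanged; thus the conclusion carries over from the current state. $GetEntries(i,j)$ is handled via Lemma~\ref{lemma:get-entries-prefix}: after the action $log'[i]$ is a prefix of the (unchanged) $log[j]$, so $log'[i]$ agrees with $log[j]$ everywhere in its range. Any new instance of the hypothesis involving server $i$ therefore reduces to an instance stated about $log[j]$ together with a second, unmodified log --- which holds by the induction hypothesis --- and the conclusion then transfers back to $log'[i]$ through the prefix relation.

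The real work is the $ClientRequest(i)$ case, which is where I expect the only genuine obstacle to lie. Here server $i$ is primary in $term[i]$ (by Lemma~\ref{lemma:election-safety}, the unique one) and appends a single entry $(ind^{*}, term[i])$ with $ind^{*} = Len(log[i]) + 1$, the maximal index of $log'[i]$. Because $ind^{*}$ is maximal, the only genuinely new witnessing configurations are those in which the fresh entry is the \emph{larger-index} endpoint, i.e.\ $log'[i][ind^{*}] = term[i] = log[b][ind_b]$ for some server $b$ and some $ind_b < ind^{*}$; I must show $log'[i][ind_b] = term[i]$, which, since $ind_b \leq Len(log[i])$, is the same as $log[i][ind_b] = term[i]$. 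This is exactly what Lemma~\ref{lemma:primary-has-entries-it-created} delivers: $i$ is primary in $term[i]$ and the entry $(ind_b, term[i])$ exists on server $b$, so it must also lie in $log[i]$. The symmetric configuration, in which the fresh entry would be the \emph{smaller-index} endpoint, is impossible: it would require an entry of term $term[i]$ at some index strictly greater than $ind^{*}$ on another server, but Lemma~\ref{lemma:primary-has-entries-it-created} would then force that index into $log[i]$ as well, contradicting $Len(log[i]) = ind^{*} - 1$. Finally, any witnessing configuration that does not involve the fresh entry agrees with the current state on all indices at play, so its conclusion is inherited from the induction hypothesis. The one thing to be careful about is precisely this case analysis --- cataloguing which $(i, ind_i, j, ind_j)$ quadruples are actually new after an append and routing each to an application of Lemma~\ref{lemma:primary-has-entries-it-created} --- rather than any deeper new idea.
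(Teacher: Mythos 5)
Your proposal is correct and follows essentially the same route as the paper's proof: induction over reachable states, with $RollbackEntries$ and $GetEntries$ dispatched via Lemma~\ref{lemma:get-entries-prefix}, and the $ClientRequest$ case split into the two endpoint configurations, each resolved by Lemma~\ref{lemma:primary-has-entries-it-created} exactly as the paper does. No gaps.
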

\begin{proof}
  In all initial states, the logs of all servers are empty, so the lemma holds. The only actions that could falsify this lemma are those that modify logs of servers: \textit{ClientRequest}, \textit{GetEntries}, \textit{RollbackEntries}. 

  \begin{itemize}
      \item $ClientRequest(i)$ appends a new log entry $(ind_i, term[i])$ on primary server $i$ in $term[i]$. The only way this action could falsify Lemma \ref{lemma:uniform-log-entries} in the next state is in the following two cases:
      
      \begin{itemize}
        \item There is another server $j$ that contains an entry $(ind_j,term[i])$ such that 
        \begin{align*}
          \, & ind_j < ind_i\\
          \wedge \, & log'[i][ind_j] \neq term[i]
        \end{align*}
        By Lemma \proofref{lemma:primary-has-entries-it-created}, we know that a primary server has all log entries that exist in its own term, so if $log[j][ind_j] = term[i]$, it must be that $log'[i][ind_j] = term[i]$, since server $i$ is primary in $term[i]$ in both the current and next state.

        \item There is another server $j$ that contains an entry $(ind_j,term[i])$ such that 
        \begin{align*}
          \, & ind_i < ind_j\\
          \wedge \, &log[j][ind_i] \neq term[i]
        \end{align*}
        By Lemma \proofref{lemma:primary-has-entries-it-created}, we know that a primary server must have all entries in its term, so if $(ind_j,term[i])$ exists, it must be contained in $log'[i]$, which means that $ind_i \geq ind_j$, contradicting our assumption that $ind_i < ind_j$.
      \end{itemize}

      \item For a $GetEntries(i,j)$ action, we know that the receiver's log after the action is a prefix of the sender's log, by Lemma \proofref{lemma:get-entries-prefix}, so this action could not falsify the lemma, since it held currrently for the sender's log.

      \item \textit{RollbackEntries} only deletes log entries, so it could not falsify the lemma in the next state. 
  \end{itemize}


\end{proof}

\subsection{Leader Completeness}
\label{appendix:leader-comp-proof}

In this section we present the proof of Theorem \ref{thm:leader-completeness}, which relies on the auxiliary lemmas of this section and those proven in the previous sections. Similar to the proofs in the preceding sections, we assume that all of the lemmas in this section (Lemmas \ref{lemma:newer-logs-contain-committed}, \ref{lemma:active-config-quorums-intersect-committed}, \ref{lemma:newer-configs-disable-commits} and Theorem \ref{thm:leader-completeness})  act as strengthening assumptions for the inductive hypothesis needed to prove Theorem \ref{thm:leader-completeness}. All lemmas established in the preceding sections (\ref{appendix:elecsafety-proof}, \ref{appendix:log-props}) hold in all reachable states of the protocol, so we can utilize them below.

\begin{lemma}[Logs later than committed must have past committed entries]
  \label{lemma:newer-logs-contain-committed}
  If a log contains an entry $(ind,t)$, then it also contains all entries committed in terms $< t$.
  \begin{align*}
      \forall& s \in Server: \\
      \forall& (index, term) \in committed : \\ 
      \forall& ind_s \in 1..Len(log[s]) : \\
        &term < log[s][ind_s] \Rightarrow InLog(index, term, s)
  \end{align*}
\end{lemma}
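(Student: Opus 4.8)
The plan is to prove Lemma~\ref{lemma:newer-logs-contain-committed} by induction over protocol transitions, in the same style as the preceding lemmas: I take Lemmas~\ref{lemma:newer-logs-contain-committed}, \ref{lemma:active-config-quorums-intersect-committed}, \ref{lemma:newer-configs-disable-commits}, and Theorem~\ref{thm:leader-completeness} together as a mutually strengthening inductive hypothesis, assume all lemmas of Sections~\ref{appendix:elecsafety-proof} and~\ref{appendix:log-props} hold in every reachable state, and show the lemma is preserved by every action. The base case is immediate: in every initial state all logs are empty and $committed = \emptyset$, so the implication holds vacuously. Only actions that modify a server's log or the $committed$ set can falsify the lemma, so it suffices to consider $ClientRequest$, $GetEntries$, $RollbackEntries$, and $CommitEntry$; the remaining actions ($Reconfig$, $SendConfig$, $BecomeLeader$, $UpdateTerms$) touch neither $log$ nor $committed$ and hence trivially preserve it.

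For $ClientRequest(i)$ the only new log entry is appended at the end of $log[i]$ with term $term[i]$, on a server $i$ that is primary in $term[i]$. Hence the only new instance of the implication to discharge is for this entry, and it requires precisely that $log[i]$ contains every entry committed in a term $< term[i]$ --- which is exactly Theorem~\ref{thm:leader-completeness} applied in the current state (using $state[i]=Primary$ and $cterm < term[i]$), together with the fact that $ClientRequest$ only appends and does not change $committed$. For $GetEntries(i,j)$, Lemma~\ref{lemma:get-entries-prefix} gives that $log'[i]$ is a prefix of $log[j]$, and for $RollbackEntries(i,j)$ the new log $log'[i]$ is a prefix of $log[i]$; in both cases any surviving entry of term $t'$ carries over from a log already satisfying the lemma, and monotonicity of log-entry terms (Lemma~\ref{lemma:log-terms-monotonic}) places any entry committed in a term $< t'$ at a strictly smaller index, so that committed entry is retained in $log'[i]$ as well. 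These cases are routine.

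The main obstacle is the $CommitEntry(i,Q)$ case, where the freshly committed entry $E = (Len(log[i]), term[i])$ is added to $committed$ while $i$ is primary in $term[i]$ (so $config[i].t = term[i]$ by Lemma~\ref{lemma:primary-term-equals-cfg-term}) and $Q$ is a quorum of $config[i]$ with every server of $Q$ at term exactly $term[i]$ and containing $E$. I must show that every server $s$ whose log contains some entry of term $t' > term[i]$ already contains $E$. The reduction I intend is: by Lemma~\ref{lemma:primary-has-entries-it-created} there is a primary of term $t'$ whose log contains that entry, so by Log Matching (Lemma~\ref{lemma:log-matching}) its log agrees with $log[s]$ up to that index, and by monotonicity of log-entry terms (Lemma~\ref{lemma:log-terms-monotonic}) the index of $E$ lies strictly below that index; hence it suffices to show the term-$t'$ primary's log contains $E$, and then the agreement carries $E$ into $log[s]$. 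To close this I plan to use the reconfiguration-specific invariants Lemmas~\ref{lemma:active-config-quorums-intersect-committed} and~\ref{lemma:newer-configs-disable-commits}: the existence of a configuration with term $\geq t' > term[i]$ (guaranteed by Lemma~\ref{lemma:log-entry-term-implies-config-term} from the entry of term $t'$) is incompatible, via Lemma~\ref{lemma:newer-configs-disable-commits}, with committing a fresh entry in term $term[i]$ on a quorum of $config[i]$ whose servers are all at term $term[i]$ --- unless that newer configuration's quorums already intersect the commit quorum $Q$ at a server holding $E$ (Lemma~\ref{lemma:active-config-quorums-intersect-committed}), which then forces $E$ into the term-$t'$ leader's log via the up-to-date voting restriction. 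I expect this last step --- reconciling the presence of a newer configuration/term with an in-progress commitment in an older term --- to be the delicate part, exactly as the analogous reconciliation (Lemma~\ref{lemma:active-cfgs-safe-at-terms}) was the crux of the Election Safety proof.
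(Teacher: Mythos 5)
Your base case and the $ClientRequest$, $GetEntries$, and $RollbackEntries$ cases are sound, and the last of these is genuinely different from the paper: the paper's $RollbackEntries$ argument shows the stronger fact that a committed entry can never be truncated (via $CanRollback$, the induction hypothesis applied to $log[j]$, and Lemma~\ref{lemma:log-matching}), a fact it later reuses in the proof of Lemma~\ref{lemma:active-config-quorums-intersect-committed}; your prefix-plus-term-monotonicity argument (Lemma~\ref{lemma:log-terms-monotonic}) is more elementary and suffices for preservation of this lemma, and it also fills in the monotonicity step that the paper leaves implicit in its terse $GetEntries$ case.

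The gap is in the $CommitEntry$ case, in two places. First, the reduction ``by Lemma~\ref{lemma:primary-has-entries-it-created} there is a primary of term $t'$ whose log contains that entry'' is unsupported: that lemma is conditional (\emph{if} some server is currently primary in $t'$, then its log has the entry), and no invariant guarantees such a primary exists in the state where $CommitEntry(i)$ fires, so the whole plan of ``forcing $E$ into the term-$t'$ leader's log'' and carrying it into $log[s]$ via Lemma~\ref{lemma:log-matching} has no foundation. Second, the ``unless'' branch misreads Lemma~\ref{lemma:newer-configs-disable-commits}, which is unconditional: once Lemma~\ref{lemma:log-entry-term-implies-config-term} yields a configuration of term $\geq t' > term[i]$, \emph{every} quorum of $config[i]$ contains a server in term $> term[i]$, which directly contradicts the $CommitEntry$ precondition that some quorum of $config[i]$ is entirely in term $term[i]$ (and holds $E$). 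There is no residual case to reconcile, so there is no need for Lemma~\ref{lemma:active-config-quorums-intersect-committed} (which in any case applies only to entries already in $committed$, not to $E$, which becomes committed only in this transition) nor for any appeal to the up-to-date voting restriction. The contradiction you state mid-sentence is in fact the entirety of the paper's $CommitEntry$ case: a server $s$ holding an entry of term $t' > term[i]$ simply cannot exist while $CommitEntry(i)$ is enabled. Drop the reduction and the ``unless'' clause and the case closes; as written, the argument would not go through.
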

\begin{proof}
  In all initial states, the logs of all servers are empty, so the lemma holds. We must consider the actions that modify server logs or the set of committed entries: \textit{ClientRequest}, \textit{GetEntries}, \textit{RollbackEntries}, and \textit{CommitEntry}.
  
  \begin{itemize}
      \item $ClientRequest(i)$ appends a new, uncommitted entry $(ind,term[i])$ to the log of primary server $i$ in term $term[i]$. We must show that $log'[i]$ contains all log entries committed in terms $< term[i]$. By the assumption of Theorem \ref{thm:leader-completeness} in the current state, we know that server $i$, which is primary in $term[i]$, contains all log entries committed in terms $< term[i]$. The newly appended entry of $(ind,term[i])$ is uncommitted, so the set of committed entries is not changed by this action and no other server logs are modified. So, if server $i$ contained all entries committed in terms $< term[i]$ in the current state, it will in the next state.

      \item After a $GetEntries(i,j)$ action, due to Lemma \ref{lemma:get-entries-prefix}, $log'[i]$ is a prefix of $log[j]$ and the set of committed entries is unmodified. So, if $log[j]$ satisfied Lemma \ref{lemma:newer-logs-contain-committed} in the current state, then $log'[i]$ will also satisfy it.
      
      \item \label{proof:active-cfgs-overlap:rollback-entries}$RollbackEntries(i,j)$ removes a single log entry, $E$, from the end of a secondary server $i$'s log. In order for this to falsify the lemma in the next state, it would have to be the case that $E$ is a committed log entry. In order for a $RollbackEntries(i,j)$ action to occur, the $CanRollback(i,j)$ predicate must be satisfied for servers $i$ and $j$. This predicate is satisfied if 
      \begin{align*}
        &(LogTerm(i)< LogTerm(j)) \wedge \neg IsPrefix(log[i], log[j])
      \end{align*}
      If $LogTerm(i) < LogTerm(j)$ and $E=(ind, LogTerm(i))$ were committed in the current state, it would imply that $E_i$ is contained in $log[j]$, by assumption of Lemma \ref{lemma:newer-logs-contain-committed}. If this were the case, though, then 
      \begin{align*}
        log[j][Len(log[i])] = log[i][Len(log[i])]
      \end{align*}
      implying, by Lemma \ref{lemma:log-matching}, that $log[i]$ is a prefix of $log[j]$, contradicting the precondition of $RollbackEntries$ requiring that $log[i]$ cannot be a prefix of $log[j]$.
    
      \item $CommitEntry(i)$ commits the latest log entry $E_i=(ind,term[i])$ of a primary server $i$ in $term[i]$ in configuration $C_i$. In order for this action to falsify Lemma \ref{lemma:newer-logs-contain-committed} in the next state, there must exist some server $j$ such that $log[j]$ contains an entry $E_j=(ind_j,t_j)$ where $t_j > term[i]$ and $E_i$ is not contained in $log[j]$. If $E_j$ exists with term $t_j$, though, this implies, by Lemma \ref{lemma:log-entry-term-implies-config-term}, that there exists a configuration $C_k$, where $C_k.t\geq t_j$. By Lemma \ref{lemma:newer-configs-disable-commits}, this implies that all quorums of primary server $i$'s configuration, $C_i$, must contain some server in term $> term[i]$. This would prevent the $CommitEntry(i)$ action from occurring, due to its precondition that requires a quorum of servers in $C_i$ to be in $term[i]$ (Algorithm \ref{alg:mrr-pseudocode-full}, Line \ref{mrr-full:line-commit-entry-precond}). 
  \end{itemize}


\end{proof}

\begin{lemma}[Active configs overlap with committed entries]
  \label{lemma:active-config-quorums-intersect-committed}
  For any active configuration $C$ and committed entry $E$, all quorums of $C$ contain some server that has entry $E$ in its log.
  \begin{align*}
    &\forall s \in ActiveConfigSet :\\
    &\forall (index, term) \in committed :\\
    &\forall Q \in Quorums(config[s]) : \exists n \in Q : InLog(index, term, n) 
  \end{align*}
\end{lemma}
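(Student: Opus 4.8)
The plan is to prove Lemma~\ref{lemma:active-config-quorums-intersect-committed} as an inductive invariant, carried alongside the other lemmas of this section as strengthening hypotheses, in the same style as the preceding proofs. In all initial states $committed = \emptyset$, so the statement holds vacuously. For the inductive step I would go action by action, observing that the only actions that can change the set of committed entries, some server's log, or the active config set --- and hence could falsify the lemma --- are \textit{ClientRequest}, \textit{GetEntries}, \textit{RollbackEntries}, \textit{CommitEntry}, \textit{Reconfig}, and \textit{BecomeLeader}; \textit{UpdateTerms} touches none of these. \textit{ClientRequest} only appends an uncommitted entry, so no committed entry's log membership and no configuration changes. \textit{GetEntries} makes $log'[i]$ a prefix of $log[j]$ (Lemma~\ref{lemma:get-entries-prefix}), so it can only add entries to a log, never remove one, and changes neither $committed$ nor configs. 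For \textit{RollbackEntries} I would reuse the argument from the proof of Lemma~\ref{lemma:newer-logs-contain-committed}: if the truncated top entry were committed, then the target server (with strictly larger last log term) would contain it by Lemma~\ref{lemma:newer-logs-contain-committed}, forcing $log[i]$ to be a prefix of $log[j]$ via Lemma~\ref{lemma:log-matching} and contradicting the $CanRollback$ precondition; so only uncommitted entries are removed.

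For \textit{BecomeLeader}$(i)$, the only new configuration is $C_i'$, with $C_i'.m = C_i.m$, and by Lemma~\ref{lemma:config-deactivations-stable} no previously deactivated config becomes active; since $C_i$ was active (Lemma~\ref{lemma:deactivated-cannot-reconfig-or-elect}), it satisfies the lemma currently, and as $C_i'$ has the same member set while no log changes, the conclusion transfers to $C_i'$. The substantive cases are \textit{CommitEntry} and \textit{Reconfig}. For \textit{CommitEntry}$(i,Q)$, the new committed entry $E_i = (Len(log[i]),term[i])$ lies in $log[n]$ for every $n \in Q$ (with $term[n] = term[i]$), and no log, config, or term changes otherwise, so only $E_i$ needs attention. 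I would first show $C_i$ is active: if it were deactivated, some $n \in Q$ has $C_{(n)} > C_i$, and since $C_i.t = term[i]$ (Lemma~\ref{lemma:primary-term-equals-cfg-term}), either $C_{(n)}.t > term[i]$ --- which by Lemma~\ref{lemma:newer-configs-disable-commits} forces a server of $Q$ into a term $> term[i]$, contradicting the $IsCommitted$ precondition --- or $C_{(n)}.t = term[i]$ with larger version, contradicting Lemmas~\ref{lemma:election-safety} and~\ref{lemma:primary-in-newest-config-in-term}. Once $C_i$ is active, for any active config $C$ we have $QuorumsOverlap(C_i.m, C.m)$ (Lemma~\ref{lemma:active-cfgs-overlap}; trivially for $C = C_i$), so every quorum of $C$ meets $Q$ and thus contains a server holding $E_i$.

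For \textit{Reconfig}$(i,m_{new})$ the only potentially problematic new active config is $C_i' = (m_{new}, C_i.v+1, C_i.t)$, since no logs change and, by Lemma~\ref{lemma:config-deactivations-stable}, every other active config in the next state was already active and is covered by the induction hypothesis. I would argue in four steps: (1) every committed entry has term $\leq term[i]$ --- otherwise Lemma~\ref{lemma:log-entry-term-implies-config-term} yields a config of term $> term[i]$, and since $C_i$ is active (Lemma~\ref{lemma:deactivated-cannot-reconfig-or-elect}) Lemma~\ref{lemma:active-cfgs-safe-at-terms} forces every quorum of $C_i$ to hold a server of term $> term[i]$, contradicting $Q2(i)$; (2) by $P1(i)$ there is a quorum $Q_P$ of $C_i$ on which $P1a$ and $P1b$ hold, so (treating $committed = \emptyset$ as trivial, and using $P1a$) every entry committed in term $term[i]$ lies in $log[n]$ for all $n \in Q_P$; (3) hence by Lemma~\ref{lemma:newer-logs-contain-committed} each such $log[n]$ also contains every entry committed in a term $< term[i]$, and by step~(1) there are no committed entries of larger term, so every $n \in Q_P$ holds all committed entries; (4) finally, $QuorumsOverlap(C_i.m, m_{new})$ (the \textit{Reconfig} precondition) means every quorum of $C_i'$ intersects $Q_P$ and therefore contains a server holding every committed entry. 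The main obstacle I expect is exactly this \textit{Reconfig} case: it is the only place where $P1$ (Oplog Commitment) does real work, and making it precise requires carefully combining $P1a$/$P1b$ with Lemma~\ref{lemma:newer-logs-contain-committed} and ruling out committed entries of term exceeding $term[i]$ --- by comparison the \textit{CommitEntry} case is routine once $C_i$ has been shown active.
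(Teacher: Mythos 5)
Your proposal is correct and follows essentially the same inductive route as the paper's proof: the same strengthening-invariant setup, the same key lemmas (Lemmas \ref{lemma:deactivated-cannot-reconfig-or-elect}, \ref{lemma:config-deactivations-stable}, \ref{lemma:active-cfgs-overlap}, \ref{lemma:active-cfgs-safe-at-terms}, \ref{lemma:newer-logs-contain-committed}, \ref{lemma:newer-configs-disable-commits}, and the $Q2$/$P1$ preconditions), and the same substantive arguments for \textit{CommitEntry} and \textit{Reconfig}, only reorganized --- you prove $C_i$ active directly rather than the paper's split on $C_i < C_j$ versus $C_i > C_j$, and you bound all committed terms by $term[i]$ up front rather than case-splitting on $t_j$, even handling the $committed = \emptyset$/$P1a$ corner slightly more explicitly than the paper does. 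The one omission is \textit{SendConfig}, which the paper does treat as a case: it installs an existing configuration on another server and so can change which servers belong to $ActiveConfigSet$, so your claim that the listed actions are the only ones that could falsify the lemma is not literally accurate; the case is, however, discharged immediately by Lemma \ref{lemma:config-deactivations-stable} together with the fact that the installed configuration already satisfied the invariant on the sending server, exactly as in your \textit{BecomeLeader} reasoning.
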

\begin{proof}
  In all initial states, the set of committed entries is empty, so the lemma holds. The only actions that could possibly falsify this property in the next state are those that delete log entries, modify configurations, or update the set of committed entries: \textit{RollbackEntries}, \textit{Reconfig}, \textit{SendConfig}, \textit{BecomeLeader}, and \textit{CommitEntry}.
    
    \begin{itemize}
        \item $RollbackEntries(i,j)$ truncates one entry from the log of a secondary server. So, in order for this action to falsify Lemma \ref{lemma:active-config-quorums-intersect-committed} in the next state, it must delete a committed log entry on some server. As argued in the $RollbackEntries$ case of the proof of Lemma \ref{lemma:newer-logs-contain-committed}, though, \textit{RollbackEntries} cannot delete a committed log entry.

        \item $Reconfig(i)$ updates the configuration of a primary server $i$ in $term[i]$ from $C_i$ to $C_i'$. We know that no new, active configurations on other servers could be created by this action since no configurations on other servers are changed, and deactivated configurations cannot become active (Lemma \ref{lemma:config-deactivations-stable}). Also, server logs and the set of committed entries are not modified. So, we only need to show that, if $C_i'$ is active, then, for all committed entries $E$, all quorums of $C_i'$ contain some server that has $E$ in its log. Suppose there is some quorum $Q \in Quorums(C_i'.m)$ and committed entry $E_j=(ind_j,t_j)$ such that no server in $Q$ contains entry $E_j$ in its log. By the assumption of Lemma \ref{lemma:active-config-quorums-intersect-committed} in the current state, we know that all quorums of $C_i$ contain some server with $E_j$ in its log, since $C_i$ is active (by Lemma \ref{lemma:deactivated-cannot-reconfig-or-elect}). Furthermore, we know that the \textit{Oplog Commitment} precondition, $P1(i)$, must have been satisfied in the current state in order for the $Reconfig(i)$ action to occur. This implies that, for all entries committed in $term[i]$, some quorum $Q_i \in Quorums(C_i.m)$ contains this log entry and all servers $n \in Q_i$ are in $term[i]$. Now, consider the following two cases:
        \begin{itemize}
            \item $t_j \leq term[i]$\\ 
            We know that all servers in $Q_i$ contain all entries committed in $term[i]$. So, by Lemma \ref{lemma:newer-logs-contain-committed}, we also know that the logs of all servers in $Q_i$ contain all entries committed in terms $\leq term[i]$. So, the log of every server in $Q_i$ must contain $E_j$. Since the quorums of $C_i$ and $C_i'$ overlap, all quorums of $C_i'$ must contain some server that has entry $E_j$ in its log, upholding Lemma \ref{lemma:active-config-quorums-intersect-committed}. 
            
            \item $t_j > term[i]$\\
            By Lemma \ref{lemma:log-entry-term-implies-config-term}, this implies there exists a configuration $C_j$ such that $C_j.t=t_j$. If $C_j$ exists, though, then this implies that the $Reconfig(i)$ could not have occurred, due to Lemma \ref{lemma:active-cfgs-safe-at-terms} and the \textit{Term Quorum Check} precondition, $Q2(i)$ (Algorithm \ref{alg:mrr-pseudocode-full}, Line \ref{mrr-full:line-reconfig-precond-quorum-conds}), which requires a quorum of servers in $C_i$ to be in $term[i]$. Since $C_j.t = t_j$ and $t_j > term[i]$, the $Reconfig(i)$ would have been prevented.
        \end{itemize}
        
        \item $SendConfig(i,j)$ cannot create new  configurations, cannot activate any existing configurations (by Lemma \ref{lemma:config-deactivations-stable}), and does not modify the logs of any servers or the set of committed entries, so it must uphold Lemma \ref{lemma:active-config-quorums-intersect-committed}.
        
        \item $BecomeLeader(i)$ updates the configuration on server $i$ from $C_i$ to $C_i'$, where $(C_i'.m, C_i'.v) = (C_i.m, C_i.v)$. It does not change the member set of any existing configuration and does not modify the set of committed entries or server logs. So, if $C_i'$ is active in the next state, we know that its quorums will overlap with some server containing a committed entry, for all committed entries, since we know this held for $C_i$, and $C_i.m=C_i'.m$.

        \item $CommitEntry(i)$ commits a log entry $E_i$ on a primary server $i$ in configuration $C_i$, and does not modify server logs or configurations. In order for this to falsify the lemma in the next state, there must exist some server $j$, in active configuration $C_j$, and some quorum $Q_j \in Quorums(C_j.m)$ such that no server in $Q_j$ contains entry $E_i$. If a primary $i$ can commit a log entry in $C_i$, this means that there is a quorum $Q_i \in Quorums(C_i.m)$ such that all servers in $Q_i$ contain $E_i$ and are in $term[i]$. This implies that $\neg QuorumsOverlap(C_i.m,C_j.m)$, since otherwise $Q_i \cap Q_j \neq \emptyset$, implying $Q_j$ contains some server with entry $E_i$. So, this must imply that $C_i.m \neq C_j.m$, which, by Lemma \ref{lemma:config-version-term-unique}, implies that $(C_i.v, C_i.t) \neq (C_j.v, C_j.t)$. So, we have the following two cases to consider:
        
        
        \begin{itemize}

            \item $C_i < C_j$ \\
            If $C_i.t = C_j.t$, this would imply that server $j$ had a newer configuration, $C_j$, in term $C_i.t$, which, by Lemma \ref{lemma:primary-in-newest-config-in-term}, could not be possible since primary $i$ contains the newest configuration of its term. So, it must be that $C_i.t < C_j.t$. By Lemma \ref{lemma:newer-configs-disable-commits} this implies that $C_i$ is prevented from executing a $CommitEntry(i)$ action in $term[i]$, since all quorums of $C_i$ must intersect with some server in term $> term[i]$.

            \item $C_i > C_j$ \\
            If primary $i$ is able to commit an entry in $term[i]$, the term of its configuration, $C_i$, must be greater than all other configurations. If another configuration existed in a term $> term[i]$, it would prevent a $CommitEntry(i)$ action, by Lemma \ref{lemma:newer-configs-disable-commits}. This means that configuration $C_i$ is the newest configuration in existence, since there are no configurations that exist in higher terms, and we know that primary $i$ contains the newest configuration of its term, by Lemma \ref{lemma:primary-in-newest-config-in-term}. If, however, there are no configurations newer than $C_i$, this implies that $C_i$ must be active, since the definition of deactivation requires the existence of at least some configuration $> C_i$. If $C_i$ is active, then by Lemma \ref{lemma:active-cfgs-overlap}, this means that all quorums of $C_i$ and $C_j$ overlap, implying that $Q_i \cap Q_j \neq \emptyset$. So, $Q_j$ must contain some server that has entry $E_i$.
        \end{itemize}

    \end{itemize}
\end{proof}

\begin{lemma}[Newer configs disable commits in older terms]
  \label{lemma:newer-configs-disable-commits}
  For all servers $i,j \in Server$, in configurations $C_i$ and $C_j$, respectively, if server $i$ is primary and $C_j.t > term[i]$, then all quorums of $C_i$ must contain some server in term $> term[i]$.
  \begin{align*}
    \forall &s,t \in Server : \\
    &(state[t] = Primary \wedge term[t] < configTerm[s]) \Rightarrow \\
        & \forall Q \in Quorums(config[t]) : \exists n \in Q : term[n] > term[t]
  \end{align*}
\end{lemma}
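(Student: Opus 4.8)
The plan is to establish this as an inductive invariant, as a strengthening assumption alongside Lemmas~\ref{lemma:newer-logs-contain-committed} and~\ref{lemma:active-config-quorums-intersect-committed} and Theorem~\ref{thm:leader-completeness}, while freely using all lemmas of Sections~\ref{appendix:elecsafety-proof} and~\ref{appendix:log-props}, which hold in every reachable state. In every initial state all server terms and all configuration terms equal $0$, so the antecedent $term[t] < configTerm[s]$ is unsatisfiable and the lemma holds vacuously. For the inductive step, only actions that can change primary status, server terms, or configurations are relevant: $ClientRequest$, $GetEntries$, $RollbackEntries$, and $CommitEntry$ modify none of $state$, $term$, or $config$, hence are immediate; $UpdateTerms$ only raises a term and demotes a server, so it neither creates a primary nor removes any server from a high term, hence cannot falsify the conclusion. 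That leaves $SendConfig$, $Reconfig$, and $BecomeLeader$.

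I would first dispatch $SendConfig(i,j)$ and $Reconfig(i)$. $SendConfig$ copies an already-existing configuration onto a secondary and changes no terms, so the set of configurations (as tuples) does not grow and no primary's configuration changes; thus if the antecedent holds in the next state it held in the current state, and the inductive hypothesis together with ``terms never decrease'' gives the conclusion. For $Reconfig(i)$, the new configuration $C_i'$ has $C_i'.t = term[i]$ by Lemma~\ref{lemma:primary-term-equals-cfg-term}. For the reconfiguring primary $i$ itself the conclusion is vacuous: if a configuration with term $> term[i]$ existed, then since $C_i$ is active (Lemma~\ref{lemma:deactivated-cannot-reconfig-or-elect}) Lemma~\ref{lemma:active-cfgs-safe-at-terms} would place a server of term $> term[i]$ in every quorum of $C_i$, contradicting the precondition $Q2(i)$, which demands a quorum of $C_i$ all of whose servers are in term $term[i]$. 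For any other primary $t$, the only configuration whose term changes is $C_i'$ with $C_i'.t = term[i]$; if $term[i] > term[t]$ then $C_i$ itself already witnessed the antecedent for $t$ in the current state, so the inductive hypothesis yields the conclusion, which $Reconfig$ preserves (no term changes, $t$'s configuration untouched); if $term[i] \leq term[t]$ then $C_i'$ does not trigger the antecedent and every other configuration is unchanged, so again the inductive hypothesis applies.

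The crux is $BecomeLeader(i)$, which makes $i$ primary in $term[i]+1$, sets $configTerm[i] := term[i]+1$ (so the only configuration changing term is $C_i'$, with $C_i'.m = C_i.m$, $C_i'.t = term[i]+1$), and forces $term[j] := term[i]+1$ for every $j$ in the electing quorum $Q$; recall $C_i$ must be active (Lemma~\ref{lemma:deactivated-cannot-reconfig-or-elect}). For $t = i$ the conclusion is vacuous: any configuration with term $> term[i]+1$ would, via Lemma~\ref{lemma:active-cfgs-safe-at-terms}, force a server of term $> term[i]$ into $Q$, contradicting the voting precondition $\forall v \in Q : term[v] < term[i]+1$. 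For a pre-existing primary $t \neq i$ (necessarily $t \notin Q$, so $term[t]$ and $C_t$ are unchanged), suppose in the next state there is a configuration $C_s$ with $C_s.t > term[t]$. If some configuration with term $> term[t]$ already exists in the current state, the inductive hypothesis puts a server of term $> term[t]$ in every quorum of $C_t$, and this survives the step, since terms only increase and a server of $Q$ that already had term $> term[t]$ had term $\leq term[i]$ and so acquires term $term[i]+1 > term[t]$. Otherwise $C_s$ can only be $C_i'$, so $term[i]+1 > term[t]$; moreover $C_t$ cannot be deactivated in the current state, for by Definition~\ref{def:deactivated-config} that would put a configuration strictly newer than $C_t$ into every quorum of $C_t.m$, and by Lemma~\ref{lemma:primary-term-equals-cfg-term} (giving $C_t.t = term[t]$) and Lemma~\ref{lemma:primary-in-newest-config-in-term} such a configuration would have term $> term[t]$, contradicting the case assumption. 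Hence $C_t$ and $C_i$ are both active, so $QuorumsOverlap(C_i.m, C_t.m)$ by Lemma~\ref{lemma:active-cfgs-overlap}; since $Q$ is a quorum of $C_i.m = C_i'.m$ and every server of $Q$ is in term $term[i]+1 > term[t]$ after the step, every quorum of $C_t$ meets $Q$ and thus contains a server of term $> term[t]$, as required.

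I expect the main obstacle to be exactly this last situation: a $BecomeLeader$ that newly activates the antecedent for a pre-existing primary $t$ whose term is no larger than the electing server's. The two key moves there are (i) using Lemma~\ref{lemma:active-cfgs-safe-at-terms} together with the voting precondition to rule out the existence of a configuration whose term exceeds the new leader's term, and (ii) using Lemmas~\ref{lemma:primary-in-newest-config-in-term} and~\ref{lemma:active-cfgs-overlap} to force $C_t$ to be active and hence to overlap the electing quorum, which has been bumped wholesale into the new term.
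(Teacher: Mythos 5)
Your proof is correct and follows essentially the same route as the paper's: an inductive argument over the same actions (\textit{Reconfig}, \textit{SendConfig}, \textit{BecomeLeader}, \textit{UpdateTerms}), using $Q2$ together with Lemma~\ref{lemma:active-cfgs-safe-at-terms} to block the reconfiguring/electing primary's own case, and the inductive hypothesis plus Lemmas~\ref{lemma:deactivated-cannot-reconfig-or-elect}, \ref{lemma:primary-term-equals-cfg-term}, \ref{lemma:primary-in-newest-config-in-term} and \ref{lemma:active-cfgs-overlap} with the term bump of the electing quorum for a pre-existing primary. The only (harmless) difference is cosmetic: in the \textit{BecomeLeader} case for another primary $t$ you split on whether a configuration with term $>term[t]$ already exists, whereas the paper splits on whether $C_t$ is deactivated or active, but the two branches use exactly the same ingredients.
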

\begin{proof}
    In all initial states, no servers are primary, so the lemma must hold. We must consider actions that modify configurations, primary status of a server, or terms of servers: \textit{Reconfig}, \textit{SendConfig}, \textit{BecomeLeader}, and \textit{UpdateTerms}.
    
    \begin{itemize}
        \item $Reconfig(i)$ updates the configuration on a primary server $i$ from $C_i$ to $C_i'$.
        For this action to falsify Lemma \ref{lemma:newer-configs-disable-commits} in the next state, there are two cases:
        \begin{itemize}
            \item There exists some server $j$, in configuration $C_j$, such that $C_j.t > term[i]$, and some quorum $Q \in Quorums(C_i'.m)$ that does not contain a server in term $> term[i]$. By assumption of Lemma \ref{lemma:newer-configs-disable-commits} in the current state, though, we know that all quorums of $C_i$ contain some server in term $> term[i]$. In order for the $Reconfig(i)$ to occur, the \textit{Term Quorum Check} precondition, $Q2(i)$, must have been satisfied, which requires that there exists some quorum $Q_i \in Quorums(C_i.m)$ such that $term[n] = term[i]$, for all $n \in Q_i$. This, however, contradicts the assumption that all quorums of $C_i$ contain  some server in term $> term[i]$. So, such a $Reconfig(i)$ action could not have occurred.
            
            \item There exists some primary server $j \neq i$, such that $C_i'.t > term[j]$, and there is some quorum $Q \in Quorums(C_j.m)$ such that $Q$ does not contain a server in term $> term[j]$. Since we know that $C_i'.t = C_i.t$ and $C_i'.t > term[j]$, we know, by assumption of Lemma \ref{lemma:newer-configs-disable-commits} in the current state, that all quorums of $C_j$ contain some server in term $> term[j]$. So, the lemma must continue to hold in the next state.
        \end{itemize}

        
        \item $SendConfig(i,j)$ transfers the configuration of server $i$ to a secondary server $j$. This action does not create any new configurations, and it doesn't modify terms or the configuration of a primary server, so if the lemma held for all configurations in the current state, it will hold in the next state.

        \item $BecomeLeader(i, Q)$ elects a server $i$ as primary with voters $Q$ and updates its configuration from $C_i$ to $C_i'$, where $C_i'.t > C_i.t$, by Lemma \ref{lemma:configs-monotonic}. In order to falsify this lemma there are two cases:
        \begin{itemize}
            \item There exists another server $j$, in configuration $C_j$, such that $C_j.t > C_i'.t$ and there is some quorum $Q_i \in Quorums(C_i'.m)$ such that $Q_i$ does not contain a server in term $> C_i'.t$. If $C_j$ exists and $C_j.t > C_i'.t$, then Lemma \ref{lemma:active-cfgs-safe-at-terms} implies that all quorums of $C_i$ must contain some server in term $\geq C_j.t$, since $C_i$ must be active in order for a $BecomeLeader(i)$ action to occur (Lemma \ref{lemma:deactivated-cannot-reconfig-or-elect}). Since $C_j.t > C_i'.t > C_i.t$, this would have prevented the $BecomeLeader(i)$ action from occurring, due to its voting precondition on terms.
            
            \item There exists some primary server $j \neq i$, such that $C_i'.t > term[j]$, and there is some quorum $Q_j \in Quorums(C_j.m)$ such that $Q_j$ does not contain a server in term $> term[j]$. If $C_j$, the configuration of server $j$, is deactivated, then this implies the existence of some configuration $> C_j$ in the current state which, by the assumption of Lemma \ref{lemma:newer-configs-disable-commits} in the current state, would ensure that all quorums of $C_j$ contain some server in term $> term[j]$, since $j$ is primary, which ensures Lemma \ref{lemma:newer-configs-disable-commits}. If $C_j$ is active, then both it and $C_i$ must be active, since the $BecomeLeader(i)$ action was able to occur (Lemma \ref{lemma:deactivated-cannot-reconfig-or-elect}).  If both $C_i$ and $C_j$ are active, this implies that $QuorumsOverlap(C_i.m, C_j.m)$. After the $BecomeLeader(i)$ action occurs, due to its postcondition, all servers in $Q$ will have a term of $C_i'.t > term[j]$. Since $C_i'.m = C_i.m$, this means that a quorum of servers in $C_i$ will also have a term of $C_i'.t$. So, all quorums of $C_j$ will contain some server in term $> term[j]$, upholding Lemma \ref{lemma:newer-configs-disable-commits}..
        \end{itemize}
        
        \item An \textit{UpdateTerms} action could not falsify the lemma in the next state since it only updates server terms, which increase monotonically on all servers, and if it updates the term of a server $s$ it sets $state[s] \leftarrow Secondary$.
    \end{itemize}
\end{proof}

%
%
%
%

%
%
%
%

\begin{proof}[Proof of Theorem \ref{thm:leader-completeness}]
  In all initial states, the logs of all servers are empty, so the theorem holds. If we assume Theorem \ref{thm:leader-completeness} holds currently, it could only be falsified in the next state via actions that elect a primary or commit a log entry: \textit{BecomeLeader} and \textit{CommitEntry}. 
  
\begin{itemize}
    \item $BecomeLeader(i, Q)$ elects a primary server $i$ in a configuration $C_i$ in term $term'[i]$ with a quorum of voters $Q$. For such an election to occur we know that $C_i$ must be active, by Lemma \ref{lemma:deactivated-cannot-reconfig-or-elect}. To falsify the lemma, there must be 
    some committed entry $E_j=(ind_j,t_j)$, such that $t_j < term'[i]$.
    and $log[i]$ does not contain $E_j$. Since $C_i$ is active, we know, by Lemma \ref{lemma:active-config-quorums-intersect-committed}, that there must be some server $n \in Q$ such that $E_j$ is in $log[n]$. If $n$ voted for $i$ to become primary, we know that $LogGeq(i,n)$ (defined in Algorithm \ref{alg:mrr-pseudocode-full}) must have been satisfied in the current state. If we let 
    \begin{align*}
      (ind_i,t_i)&=(Len(log[i]), LogTerm(i))\\
      (ind_n,t_n)&=(Len(log[n]), LogTerm(n))
    \end{align*}
    there are two cases to consider:
    \begin{itemize}
        \item $t_i = t_n \wedge ind_i \geq ind_n$\\
        If $ind_i = ind_n$, then we have $log[i][ind_i]=t_i=t_n=log[n][ind_i]$. By Lemma \ref{lemma:log-matching}, this tells us that $log[i][..ind_i] = log[n][..ind_i]$, implying that $log[i]$ contains $E_j$, since $log[n]$ contains it. So, we consider the case where $ind_i > ind_n$. If $log[i][ind_i]=t_i$, and $log[n][ind_n]=t_n=t_i$, then, by Lemma \ref{lemma:uniform-log-entries}, this implies that $log[i][ind_n]=t_i = log[n][ind_n]$. So, by Lemma \ref{lemma:log-matching}, it must be that $log[i][..ind_n] = log[n][..ind_n]$. Since we know that $log[n]$ contains the committed entry $E_j$, this must mean that $log[i]$ contains it, contradicting our assumption that it did not contain $E_j$.
        
        \item $t_i > t_n$\\
        First, it must be that $t_n \geq t_j$, since the last entry of $log[n]$ has term $t_n$, we know that $log[n]$ contains entry $E_j$ in term $t_j$, and log entry terms increase monotonically (Lemma \ref{lemma:log-terms-monotonic}). So, we have the following:
        \begin{align*}
          t_j \leq t_n < t_i
        \end{align*}
        From Lemma \ref{lemma:newer-logs-contain-committed}, we know that $log[i]$ contains all entries committed in terms $< t_i$. So, it contains all entries committed in terms $t_j$, contradicting our assumption that it did not contain $E_j$ in its log.
    \end{itemize}

    \item $CommitEntry(i)$ commits a log entry $E$ in $term[i]$ on a primary server $i$. Assume there is some other server $j \neq i$, in configuration $C_j$, that is currently primary in term $t_j > term[i]$ and $E$ is not contained in $log[j]$. From Lemma \ref{lemma:primary-term-equals-cfg-term}, we know that $C_j.t=t_j$. And, by Lemma \ref{lemma:newer-configs-disable-commits}, we know that the existence of $C_j$ prevents commits of log entries occurring in any terms $< t_j$. So, such a $CommitEntry(i)$ entry action could not occur.
\end{itemize}
\end{proof}

\section{Detailed Model Checking Results}
\label{appendix:model-checking-details}

\label{sec:correctness-analysis}
In this section we provide additional details of our automated verification results using the TLC model checker. We first give a brief overview of TLC and its mode of operation, and then provide more detailed results from checking the safety properties discussed in Section \ref{sec:model-checking-section}.

\subsection{The TLC Model Checker}
TLC is an explicit state model checker that can check temporal properties of a given TLA+ specification. It is provided as a Java program that takes as input a TLA+ module file, a model checker configuration file, and a set of command line parameters. For checking safety properties, TLC assumes a TLA+ specification of the form $Init \wedge \square [Next]_{vars}$. The configuration file tells TLC the name of the specification to check and of the properties to be checked. In addition, the configuration file defines a \textit{model} of the specification, which is an assignment of values to any constant parameters of the specification.  It is also possible to provide a \textit{state constraint}, which is a state predicate that can be used to constrain the set of reachable states. If TLC discovers a reachable state that violates the state constraint predicate, it will not add the state to its current graph of reachable states. TLC also allows definition of a \textit{symmetry set}, which causes the model checker to consider states that have the same constant value under some permutation as equivalent, which can significantly reduce the set of reachable states for certain models \cite{Clarke1998}. A more complete and in-depth explanation of TLC behavior and parameters can be found in \cite{lamport2002specifying}. For all model checking runs discussed below we used TLC version 2.15 (adc67eb) running on CentOS Linux 7, with a 48-core, 2.30GHz Intel Xeon Gold 5118 CPU.

\subsection{Details and Results}
\label{sec:mrr-model-checking}

For checking safety of \textit{MongoRaftReconfig} we used a model we refer to as \textit{MCMongoRaftReconfig}, which imposes finite bounds on the \textit{MongoRaftReconfig} TLA+ specification. The complete, runnable TLC configuration for this model can be found in the supplementary materials \cite{supp-materials}. The model sets $Server=\{n1,n2,n3,n4\}$, and imposes the following state constraint:
\begin{align*}
\small
    StateConstraint \defeq \A s &\in Server : \\
                    &\wedge currentTerm[s] \leq MaxTerm \\
                    &\wedge Len(log[s]) \leq MaxLogLen \\
                    &\wedge configVersion[s] \leq MaxConfigVersion
\end{align*}
This constraint, along with a finite \textit{Server} set, is sufficient to make the reachable state space of this model finite, since it limits the size of the three unbounded variables of the specification: \textit{terms}, \textit{logs}, and configuration \textit{versions}. It restricts logs to be of a maximum finite length, and imposes a finite upper bound on terms and configuration versions. Figure \ref{mrr-model-checking-results} shows the parameters and results for this model. \textit{Permutation} is an operator in the \textit{TLC.tla} standard module \cite{tlc-tla-module} defined as the set of all permutations of elements in a given set. Under our symmetry declaration, any two states that are equal up to permutation of server identifiers are treated as equivalent by the model checker.

\begin{figure}
    \captionsetup[subfigure]{justification=centering}
    \centering
    \begin{subfigure}[t]{0.49\textwidth}
    \centering
    \small
    %
    %
    \begin{tabular}{|c|c|}
    \hline
    \multicolumn{2}{|c|}{\textit{MCMongoRaftReconfig}}\\
    \hline
     \textit{Server} & $\{n1,n2,n3,n4\}$ \\
    \hline
    \textit{MaxLogLen} & $2$  \\
     \hline
     \textit{MaxTerm} & $3$  \\
     \hline
     \textit{MaxConfigVersion} & $3$  \\
     \hline
     Constraint & \textit{StateConstraint}  \\
     \hline
     Symmetry & \textit{Permutation(Server)} \\
     \hline
     Invariant & \textit{LeaderCompleteness} \\
     \hline
     States & 345,587,274\\
     \hline
     Depth & 45\\
     \hline
     TLC Workers & 20\\
     \hline
     Duration & 8h 06min \\ 
     \hline
    \end{tabular}
    \caption{}
    \label{mrr-model-checking-results}
    \end{subfigure}
    \begin{subfigure}[t]{0.48\textwidth}
    \centering
    \small
    %
    %
    \begin{tabular}{|c|c|}
    \hline
    \multicolumn{2}{|c|}{\textit{MCMongoLoglessDynamicRaft}}\\
    \hline
     \textit{Server} & $\{n1,n2,n3,n4,n5\}$ \\
    \hline
     \textit{MaxTerm} & $4$  \\
     \hline
     \textit{MaxConfigVersion} & $4$  \\
     \hline
     Constraint & \textit{StateConstraint}  \\
     \hline
     Symmetry & \textit{Permutation(Server)} \\
     \hline
     Invariant & 
    \makecell{
          \textit{ElectionSafety} \\ 
     } \\
     \hline
     States & 812,587,401 \\
     \hline
     Depth & 30\\
     \hline
     TLC Workers & 20\\
     \hline
     Duration &19h 28min \\ 
     \hline
    \multicolumn{2}{c}{}\\
    \end{tabular}
    \caption{}
    \label{mldr-model-checking-results}
    \end{subfigure}
    \caption{Summary of TLC Model Checking Results. \textit{States} is the number of reachable, distinct states discovered by TLC. \textit{Depth} is the length of the longest behavior.}
    \label{model-checking-results}
\end{figure}

As discussed in Section \ref{sec:model-checking-section}, the compositional structure of \textit{MongoRaftReconfig} makes it possible to verify \textit{MongoLoglessDynamicRaft} in isolation and assume that its safety properties hold in \textit{MongoRaftReconfig}. The full model checking results for our model, \textit{MCMongoLoglessDynamicRaft}, whose definition is provided in the supplementary materials \cite{supp-materials}, are presented in Figure \ref{mldr-model-checking-results}. 

\section{Subprotocol Refinement Proof}
\label{sec:subprotocol-refinement}

In order to ensure that any safety property of \textit{MongoLoglessDynamicRaft} holds for \textit{MongoRaftReconfig}, we must demonstrate that the behaviors of \textit{MongoLoglessDynamicRaft} are not augmented when operating as a subprotocol of \textit{MongoRaftReconfig}. To formalize this, we adopt TLA+ notation. Correctness properties and system specifications in TLA+ are both written as temporal logic formulas. This allows one to express notions of property satisfaction and refinement in a concise and similar manner. We say that a specification $S$ satisfies a property $P$ iff the formula $S \Rightarrow  P$ is valid (i.e. true under all assignments). We say that a specification $S_1$ refines (or is a refinement of ) $S_2$ iff $S_1 \Rightarrow  S_2$ is valid i.e. every behavior of $S_1$ is a valid behavior of $S_2$ \cite{Abadi1991}. So, if we view $MongoRaftReconfig$ and $MongoLoglessDynamicRaft$ as temporal logic formulas describing the set of possible system behaviors for each respective protocol, we can formally state our refinement theorem as follows:
\begin{theorem}
    $MongoRaftReconfig \Rightarrow MongoLoglessDynamicRaft$
\end{theorem}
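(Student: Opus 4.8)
The plan is to establish the refinement by a step-simulation argument, using the standard TLA+ proof rule for implications between specifications of the form $Init \wedge \square[Next]_{vars}$. Observe that every state variable of \textit{MongoLoglessDynamicRaft} --- namely $term$, $state$, $config$, $configVersion$, and $configTerm$ --- is also a variable of \textit{MongoRaftReconfig}, which in addition carries the oplog variables $log$ and $committed$. Hence no variable of \textit{MongoLoglessDynamicRaft} needs to be hidden, and the refinement mapping is the trivial one: the identity on the shared variables, forgetting $log$ and $committed$. Under this mapping the theorem reduces to the two usual obligations: (i) $Init_{MRR}$ implies (the projection of) $Init_{MLDR}$, and (ii) every step permitted by $[Next_{MRR}]_{vars_{MRR}}$ is permitted, on the projected variables, by $[Next_{MLDR}]_{vars_{MLDR}}$.

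Obligation (i) is immediate: the initialization clauses for $term$, $state$, $config$, $configVersion$, and $configTerm$ in Algorithm~\ref{alg:mrr-pseudocode-full} are identical to those in Algorithm~\ref{alg:mrr-pseudocode}, so $Init_{MRR}$ entails the projected $Init_{MLDR}$ directly. For obligation (ii) I would do a case analysis on which action of \textit{MongoRaftReconfig} is taken, splitting the actions into two groups. The \emph{oplog actions} --- \textit{ClientRequest}, \textit{GetEntries}, \textit{RollbackEntries}, \textit{CommitEntry} --- modify only $log$ and $committed$ and leave $term$, $state$, $config$, $configVersion$, $configTerm$ unchanged; under the projection they are stuttering steps, so they are covered by the stuttering disjunct of $[Next_{MLDR}]_{vars_{MLDR}}$. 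The \emph{shared actions} --- \textit{Reconfig}, \textit{SendConfig}, \textit{BecomeLeader}, \textit{UpdateTerms} --- have, in \textit{MongoRaftReconfig}, an enabling condition that is the corresponding \textit{MongoLoglessDynamicRaft} enabling condition conjoined with zero or more \emph{additional} conjuncts (the oplog-commitment precondition $P1(i)$ in \textit{Reconfig} and the log-currency precondition $\forall v \in Q : LogGeq(i,v)$ in \textit{BecomeLeader}), while their effect on the shared variables is verbatim the same as in \textit{MongoLoglessDynamicRaft}, and they touch $log$ or $committed$ in no way visible through the mapping. Thus each such step implies the corresponding \textit{MongoLoglessDynamicRaft} step on the projected variables: a strictly smaller guard, the same next-state relation. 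Chaining (i) and (ii) through the TLA+ refinement rule gives $MongoRaftReconfig \Rightarrow MongoLoglessDynamicRaft$.

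The main obstacle is bookkeeping rather than conceptual depth: in obligation (ii) one must verify, for \emph{every} shared action, that \textit{MongoRaftReconfig} only \emph{strengthens} the guard and never rewrites the next-state relation on a configuration variable, and that no oplog action has a hidden side effect on a configuration variable. The delicate points are precisely that $P1(i)$ (which mentions $committed$, a variable absent from \textit{MongoLoglessDynamicRaft}) occurs only as an extra conjunct in \textit{Reconfig} and is never used to \emph{enable} a transition that \textit{MongoLoglessDynamicRaft} disallows, and similarly that the extra precondition on \textit{BecomeLeader} is purely restrictive. This is exactly the precise form of the informal statement that the additional preconditions ``only restrict'' the behaviors, and discharging it amounts to a careful line-by-line comparison of Algorithms~\ref{alg:mrr-pseudocode} and~\ref{alg:mrr-pseudocode-full}.
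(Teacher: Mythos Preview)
Your proposal is correct and follows essentially the same approach as the paper: the paper's proof sketch reduces the refinement to exactly your two obligations --- (1) that every initial state of \textit{MongoRaftReconfig} is an initial state of \textit{MongoLoglessDynamicRaft}, and (2) that every transition of the former is a valid transition of the latter --- and then defers the details to a TLAPS-checked formalization. Your case analysis (oplog actions project to stuttering; shared actions have strengthened guards but identical effect on the configuration variables) is precisely the content that such a formal proof must discharge, so you have in fact spelled out more of the argument than the paper's in-text sketch does.
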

To prove this, we must show that, (1) for any behavior $\sigma$ of \textit{MongoRaftReconfig}, the initial state of $\sigma$ is a valid initial state of \textit{MongoLoglessDynamicRaft} and (2) every transition in $\sigma$ is a valid transition of \textit{MongoLoglessDynamicRaft}. This proof has been formalized and checked in the TLA+ proof system. The full proof can be found in the supplementary materials \cite{supp-materials}. 

\fi

\end{document}
\endinput